\setlist{nosep}
\tikzset{
  snake left/.style={
    rounded corners,
    to path={
      let \p1 = (\tikztostart.east),
          \p2 = (\tikztotarget.west),
          \p3 = ($(\p1)!0.5!(\p2)$),
          \n1 = {8pt} 
      in
      (\p1)
      -- (\x1 + \n1, \y1)
      -- (\x1 + \n1, \y3)
      -- (\x2 - \n1, \y3) \tikztonodes
      -- (\x2 - \n1, \y2)
      -- (\p2)
    }
  }
}
\tikzset{
  uphordown/.style={
    rounded corners,
    to path={
      let \p1 = (\tikztostart.north),
          \p2 = (\tikztotarget.north),
          \n1 = {max(\y1,\y2) + 8pt}
      in
      (\p1)
      -- (\x1, \n1)
      -- (\x2, \n1) \tikztonodes 
      -- (\p2)
    }
  }
}
\tikzset{
  downhorup/.style={
    rounded corners,
    to path={
      let \p1 = (\tikztostart.south),
          \p2 = (\tikztotarget.south),
          \n1 = {min(\y1,\y2) - 8pt}
      in
      (\p1)
      -- (\x1, \n1)
      -- (\x2, \n1) \tikztonodes 
      -- (\p2)
    }
  }
}
\tikzset{
  rightvertleft/.style={
    rounded corners,
    to path={
      let \p1 = (\tikztostart.east),
          \p2 = (\tikztotarget.east),
          \n1 = {max(\x1,\x2) + 8pt}
      in
      (\p1)
      -- (\n1, \y1)
      -- (\n1, \y2) \tikztonodes 
      -- (\p2)
    }
  }
}
\tikzset{
  leftvertright/.style={
    rounded corners,
    to path={
      let \p1 = (\tikztostart.west),
          \p2 = (\tikztotarget.west),
          \n1 = {min(\x1,\x2) - 8pt}
      in
      (\p1)
      -- (\n1, \y1)
      -- (\n1, \y2) \tikztonodes 
      -- (\p2)
    }
  }
}
\DeclareMathAlphabet{\mathpzc}{OT1}{pzc}{m}{it} 
\definecolor{darkblue}{rgb}{0.05,0.25,0.65}
\definecolor{darkgreen}{RGB}{20,140,10}
\definecolor{lightgray}{rgb}{0.9,0.9,0.9}
\definecolor{darkorange}{RGB}{200,100,5}
\definecolor{darkyellow}{rgb}{.91,.91,0}
\definecolor{orangeii}{RGB}{200,100,5}
\definecolor{lightblue}{RGB}{243, 250, 255}
\newtheorem{theorem}{Theorem}[section]
\newtheorem{lemma}[theorem]{Lemma}
\newtheorem{proposition}[theorem]{Proposition}
\newtheorem{corollary}[theorem]{Corollary}
\theoremstyle{definition}
\newtheorem{definition}[theorem]{Definition}
\newtheorem{example}[theorem]{Example}
\newtheorem{remark}[theorem]{Remark}
\newlength{\dhatheight}
\newcommand{\cpt}{\hspace{.8pt}{\adjustbox{scale={.5}{.77}}{$\cup$} \{\infty\}}}
\let\PLAINthebibliography\thebibliography
\renewcommand\thebibliography[1]{
  \PLAINthebibliography{#1}
  \setlength{\parskip}{0.5pt}
  \setlength{\itemsep}{0.5pt plus .3ex}
}
\newcommand{\defneq}{\equiv}
\newcommand\bos[1]{\mathstrut\mkern2.5mu#1\mkern-14mu\raise1.7ex%
  \hbox{$\scriptstyle\rightsquigarrow$}}
\newcommand\bosonic[1]{\mathstrut\mkern2.5mu#1\mkern-14mu\raise1.7ex%
  \hbox{$\scriptstyle\rightsquigarrow$}}
\newcommand{\longsquiggly}{\xymatrix{{}\ar@{~>}[r]&{}}}
\newcommand{\interval}[3]{
\draw[gray, line width=3.3]
  (#1+.2,#2) -- (#1+#3-.2,#2);
\draw[draw=gray,line width=1.3,fill=black]
  (#1+#3,#2) circle (.2);
\draw[draw=gray,line width=1.3,fill opacity=.6,fill=white]
  (#1,#2) circle (.2);
}
\newcommand{\oppositeinterval}[3]{
\draw[gray, line width=3.3]
  (#1+.2,#2) -- (#1+#3-.2,#2);
\draw[draw=gray,line width=1.3,fill=black]
  (#1,#2) circle (.2);
\draw[draw=gray,line width=1.3,fill opacity=.6,fill=white]
  (#1+#3,#2) circle (.2);
}
\newcommand{\openinterval}[3]{
\draw[gray, line width=3.3]
  (#1+.2,#2) -- (#1+#3-.2,#2);
\draw[draw=gray,line width=1.3,fill opacity=.6,fill=white]
  (#1+#3,#2) circle (.2);
\draw[draw=gray,line width=1.3,fill opacity=.6,fill=white]
  (#1,#2) circle (.2);
}
\newcommand{\closedinterval}[3]{
\draw[gray, line width=3.3]
  (#1+.2,#2) -- (#1+#3-.2,#2);
\draw[draw=gray,line width=1.3,fill=black]
  (#1+#3,#2) circle (.2);
\draw[draw=gray,line width=1.3,fill=black]
  (#1,#2) circle (.2);
}
\newcommand{\halfcircle}{
\begin{scope}
\clip
  (-4,0) rectangle (4.1,-4.1);
\draw[
  gray,
  line width=30pt,
  draw opacity=.4
]
  (0,0) circle (3);
\end{scope}

\draw[draw=gray,line width=1.3,fill=black]
  (0.1,-3.5) circle (.2);
\draw[draw=gray,line width=1.3,fill opacity=.6,fill=white]
  (-0.1,-3.5) circle (.2);

\draw[gray, line width=3.3]
  (-1.7+.2,-3.1) -- (1.7-.2,-3.1);
\draw[draw=gray,line width=1.3,fill=black]
  (1.7,-3.1) circle (.2);
\draw[draw=gray,line width=1.3,fill opacity=.6,fill=white]
  (-1.7,-3.1) circle (.2);

\closedinterval{.1}{-2.5}{2.35};
\openinterval{-2.5}{-2.5}{2.4};

\closedinterval{1.7}{-1.83}{1.34};
\openinterval{-3}{-1.83}{1.34};

\closedinterval{2.2}{-1.12}{1.12};
\openinterval{-3.31}{-1.12}{1.12};

\closedinterval{2.45}{-.38}{1.05};
\openinterval{-3.5}{-.38}{1.05};
}
\newcommand{\halfcircleAdjusted}{
\begin{scope}
\clip
  (-4,0) rectangle (4.1,-4.1);
\draw[
  gray,
  line width=30pt,
  draw opacity=.4
]
  (0,0) circle (3);
\end{scope}

\draw[draw=gray,line width=1.3,fill=black]
  (0.1,-3.5) circle (.2);
\draw[draw=gray,line width=1.3,fill opacity=.6,fill=white]
  (-0.1,-3.5) circle (.2);
  
\draw[gray, line width=3.3]
  (-1.8+.2,-3.0) -- 
  (+1.8-.2,-3.0);
\draw[draw=gray,line width=1.3,fill=black]
  (1.8,-3) circle (.2);
\draw[draw=gray,line width=1.3,fill opacity=.6,fill=white]
  (-1.8,-3) circle (.2);

\closedinterval{.1}{-2.5}{2.35};
\openinterval{-2.5}{-2.5}{2.4};

\closedinterval{1.7}{-1.83}{1.34};
\openinterval{-3}{-1.83}{1.34};

\closedinterval{2.2}{-1.12}{1.12};
\openinterval{-3.31}{-1.12}{1.12};

\closedinterval{2.45}{-.38}{1.05};
\openinterval{-3.5}{-.38}{1.05};
}
\newcommand{\halfcircleover}[2]{
\begin{scope}
\clip
  (-4,0) rectangle (4.1,-4.1);
\draw[
  white,
  line width=43pt,
]
  (0,0) circle (3);
\end{scope}
  \halfcircle{#1}{#1}
}
\newcommand{\halfcircleoverAdjusted}{
\begin{scope}
\clip
  (-4,0) rectangle (4.1,-4.1);
\draw[
  white,
  line width=43pt,
]
  (0,0) circle (3);
\end{scope}
  \halfcircleAdjusted
}
\newcommand{\weakHomotopyEquivalence}{\sim}
\newcounter{FixedFigure}
\newcommand{\figurenumber}{%
    \refstepcounter{FixedFigure}%
    \theFixedFigure%
}
\begin{document}

\setlength{\abovedisplayskip}{2.2pt}
\setlength{\belowdisplayskip}{2.2pt}
\setlength{\abovedisplayshortskip}{-5pt}
\setlength{\belowdisplayshortskip}{2pt}

\title{
  Cohomotopy, 
  Framed Links, and 
  Abelian Anyons
}

\author{
  \def\arraystretch{.7}
  \begin{tabular}{c}
  Hisham Sati\rlap{${}^{
    \hyperlink{DoS}{\ast} 
    \hyperlink{Courant}{\dagger}
  }$}
  \\
  \footnotesize%
  \tt%
  hsati@nyu.edu
  \end{tabular}
  \;\;
  \mbox{and}
  \;\;
  \def\arraystretch{.7}
  \begin{tabular}{c}
  Urs Schreiber\rlap{${}^{\hyperlink{DoS}{\ast}}$}
  \\
  \footnotesize%
  \tt%
  us13@nyu.edu
  \end{tabular}
}

\maketitle

\begin{abstract}
  We establish a natural identification of cobordism classes of framed links with the fundamental group of the group-completed configuration space of points in the plane,
  by appeal to Okuyama's previously underappreciated interval configuration model for the latter.
  Under Segal's theorem, these classes are integers generated by the Hopf generator in the 2-cohomotopy of the 3-sphere, and we identify these knot-theoretically with the writhe sum of the linking and framing number of links. We observe that this link invariant is the properly regularized ``Wilson line observable'' of abelian Chern-Simons theory, and show, in consequence of the main theorem, that this arises equivalently as the expectation value of pure quantum states on the group algebra, under link sum, of cobordism classes of framed links. Observing that these quantum states regard framed links as worldlines of anyons, in that they assign a fixed complex phase factor to each crossing (braiding) of strands, we close with an outlook on implications for the identification of anyonic solitons in 2D electron gases as exotic flux quantized in Cohomotopy instead of in ordinary cohomology.
\end{abstract}

\vspace{.8cm}

\begin{center}
\begin{minipage}{10cm}
  \tableofcontents
\end{minipage}
\end{center}

\medskip

\vfill

\hrule
\vspace{5pt}

{
\hypertarget{DoS}{}
\footnotesize
\noindent
\def\arraystretch{1}
\tabcolsep=0pt
\begin{tabular}{ll}
${}^*$\,
&
Mathematics, Division of Science; and
\\
&
Center for Quantum and Topological Systems,
\\
&
NYUAD Research Institute,
\\
&
New York University Abu Dhabi, UAE.  
\end{tabular}
\hfill
\adjustbox{raise=-15pt}{
\href{https://ncatlab.org/nlab/show/Center+for+Quantum+and+Topological+Systems}{\includegraphics[width=3cm]{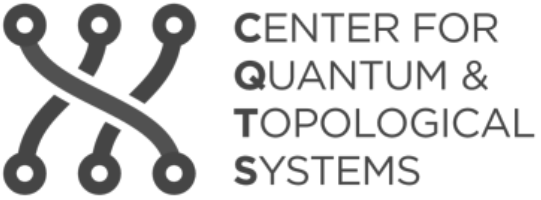}}
}

\vspace{1mm} 
\hypertarget{Courant}{}
\noindent 
${}^\dagger$The Courant Institute for Mathematical Sciences, NYU, New York, USA

\vspace{.2cm}

\noindent
The authors acknowledge the support by {\it Tamkeen UAE} under the 
{\it NYU Abu Dhabi Research Institute grant} {\tt CG008}.
}

\newpage

\section{Introduction and Overview}
\label{Introduction}

\noindent
{\bf Motivation and Background.}
In differential topology (cf. \cite{Milnor97}\cite{Benedetti21}), there are famous and profound relations between the submanifolds of a smooth manifold $X$ and continuous maps from $X$ to higher dimensional spheres $S^d$.
Historically, the first of these was {\it Pontrjagin's theorem} in its original unstable and framed form (\cite{Pontrjagin38}\cite{Pontrjagin55}, cf. \cite[\S 7]{Milnor97}\cite[\S II.16]{Bredon93}\cite[\S IX]{Kosinski93}\cite[\S 17]{Benedetti21},
before it became more widely known in the stable and structured guise of the {\it Pontrjagin-Thom theorem}, cf. \cite[\S 1.5]{Kochman96}): This identifies the homotopy classes of maps $X \xrightarrow{\;}  S^d$, hence the {\it $n$-Cohomotopy} \cite{Spanier49}\cite[\S VII]{STHu59}\cite[Ex. 2.7]{FSS23Char} of $X$, denoted
\begin{equation}
  \label{CohomotopyInIntroduction}
  \pi^d(X)
  \;:=\;
  \pi_0\, 
  \mathrm{Map}\big(
    X ,\, 
    S^d
  \big)
  \,,
\end{equation}
with cobordism classes $\mathrm{Cob}_{\mathrm{fr}}^d(X)$ 
(cf. \cite[\S 1.5]{Kochman96})
of normally framed closed submanifolds of co-dimension $d$ 
inside $X$:
\begin{equation}
  \label{PontrjaginTheoremInIntroduction}
  \pi^d(X)
  \underset{
    \mathclap{
      \adjustbox{
        scale=.6,
        raise=-2pt
      }{
        \color{gray}
        Pontrjagin
      }
    }
  }{
    \;\;\;
    \simeq
    \;\;\;
  }
  \mathrm{Cob}^d_{\mathrm{fr}}(X)
  \,.
\end{equation}
In the special case where $d = \mathrm{dim}(X)$, this relation is easy but instructive to describe: The normally framed closed submanifolds are {\it configurations of points} in $X$ and their normal framing is equivalently a choice of sign carried by each point in the configuration. 
In this simple situation, normally framed cobordism is readily seen to identify pairs of oppositely signed points with the empty subset of points, so that cobordism classes are found to be in bijection with the integer {\it net number} of signs in a configuration:
\begin{equation}
  \label{PontrjaginTheoremAndHopfDegree}
  \begin{tikzcd}[sep=small]
    \pi^{\mathrm{dim}(X)}(X)
    \ar[r, phantom, "{ \simeq }"]
    &
    \mathrm{Cob}^{\mathrm{dim}(X)}_{\mathrm{fr}}(X)
    \ar[r, phantom, "{ \simeq }"]
    &
    \mathbb{Z}
    \mathrlap{\,.}
  \end{tikzcd}
\end{equation}
The resulting identification of the $\mathrm{dim}(X)$-Cohomotopy of $X$ with the integers recovers the {\it Hopf degree theorem} \cite[Cor. 5.8]{Kosinski93}\cite[p. 62]{Milnor97}, which labels homotopy classes of maps $X \xrightarrow{\;} S^{\mathrm{dim}(X)}$ by their {\it winding number}.

\medskip

\noindent
{\bf Outline and Results.}
In \S\ref{ConfigurationSpace} we give an analogous (but much richer) formulation of the Pontrjagin theorem in terms of {\it geometric configurations}, now in the special case $X = S^3$ and $d = 2$, where the normally framed closed submanifolds are {\it framed links} (cf. \cite[pp. 15]{Ohtsuki01} and Def. \ref{FramedOrientedLinks} below). It does not previously seem to have found due attention that in this case  Pontrjagin's theorem on cobordism relates to {\it knot theory} (cf. \cite{CrowellFox63}\cite{Sossinsky23}). We identify the knot-theoretic mechanism which corresponds to the Pontrjagin theorem, in this case. In \S\ref{QuantumObservables} we explain how this result may be understood as a previously missing geometric justification for the traditional but {\it ad hoc} renormalization of {\it Wilson loop observables} in Chern-Simons theory, and in \S\ref{Conclusion} we indicate how it has implications in application to the identification of {\it anyonic solitons} in quantum materials \cite{SS25-FQHViaAlgTop}.

Concretely, since the {\it Hopf fibration} $S^3 \xrightarrow{h} S^2$ 
(cf. \cite{Lyons03})
freely generates the abelian group
\begin{equation}
 \label{HopfFibration}
 \pi^2(S^3)
 \;\simeq\;
 \pi_3(S^2)
 \;\simeq\;
 \mathbb{Z}
 \,,
\end{equation}
Pontrjagin's theorem \eqref{PontrjaginTheoremInIntroduction} implies that the cobordism classes of framed links must also be labeled by integers. Our main Thm. \ref{ChargedOpenStringLoopsClassifiedByCrossingNumber} shows that this characteristic integer is equivalently the {\it total crossing number} or {\it writhe} $\# L$ (Def. \ref{LinkingNumber}) of a give link diagram $L$, namely the sum of the {\it linking number} and the {\it framing number} of a framed link. This combined quantity appears, as we observe in \S\ref{QuantumObservables}, in the exponential for Wilson loop observables in Chern-Simons theory.

Equivalently, this means, as we observe in \S\ref{QuantumObservables},  that any {\it pure quantum state} $\vert \zeta \rangle$ (in the sense of quantum probability theory) on the group algebra of 
\begin{equation}
  \mathrm{Cob}_{\mathrm{fr}}^2\big(S^3\big)
  \;\simeq\;
  \mathbb{Z}
\end{equation}
is determined by a unimodular phase $\zeta \in \mathrm{U}(1) \subset \mathbb{C}$ and as such given by assigning to a framed link $L$ --- regarded as presenting a homogeneous element $[L]$ in the group algebra $\mathbb{C}\big[\mathrm{Cob}_{\mathrm{fr}}^2(S^3)\big]$ --- the value
\begin{equation}
  \langle \zeta \vert L \vert \zeta\rangle
  \;=\;
  \zeta^{\# L}
  \;\in\;
  \mathrm{U}(1)
  \,\subset\,
  \mathbb{C}
  \,,
\end{equation}
hence one factor of the phase $\zeta$ for each  {\it braiding} of one strand in the link diagram over another (with the inverse phase for braidings of the opposite orientation). Exactly such assignment of {\it braiding phases} to their {\it worldlines} is the characteristic property of physical particles ({\it solitons}, really) known as (abelian) {\it anyons}. We close by explaining how this is not a coincidence, but reflective of a more general result \cite{SS25-FQHViaAlgTop}
by which (abelian) anyons such as observed in ``fractional quantum Hall systems'' may be understood as solitons of exotic flux quantized in 2-Cohomotopy instead of in ordinary cohomology.

\medskip

\noindent
{\bf Approach and Methods.}
The method by which we prove these results is via refinement of Pontrjagin's theorem by {\it Segal's theorem} (\cite[Thm. 1]{Segal73}, recalled as Prop. \ref{SegalTheorem}, which is well known to experts) and then by invoking a more recent result by Okuyama (\cite{Okuyama05}, recalled as Prop. \ref{OkuyamaTheorem}, which may not have found due attention before) to obtain a more concrete handle on the situation:

Segal's theorem gives the full homotopy type of the mapping space \eqref{CohomotopyInIntroduction} -- instead of just its set of connected components -- in the special case that $X = S^d$ itself is a sphere of the same dimension as the coefficient sphere. Concretely, by observing the isomorphisms (cf. Rem. \ref{SingedIntervalsReflectingCohomotopy} below)
\begin{equation}
  \label{MakingMapS2S2AppearInIntroduction}
  \pi^3(S^2)
  \,\defneq\,
  \pi_0 
  \, 
  \mathrm{Map}\big(
    S^3
    ,\,
    S^2
  \big)
  \;\simeq\;
  \pi_0 
  \, 
  \mathrm{Map}^\ast\big(
    S^3
    ,\,
    S^2
  \big)
  \;\simeq\;
  \pi_1
  \,
  \mathrm{Map}^\ast\big(
    S^2
    ,\,
    S^2
  \big)  
  \mathrlap{\,,}
\end{equation}
we find that framed links may equivalently be understood as loops is the pointed mapping space from the 2-sphere to itself --- and Segal's theorem describes this latter mapping space as the {\it group completion} $\mathbb{G}\mathrm{Conf}(\mathbb{R}^2)$ (see Def. \ref{PlainConfigurationSpaceOfPoints} below) of the space of (unordered) configurations of points in the plane, known as the {\it configuration space} $\mathrm{Conf}(\mathbb{R}^2)$, hence:
\begin{equation}
  \pi^3(S^2)
  \;\simeq\;
  \pi_1\,
  \mathbb{G}\mathrm{Conf}(\mathbb{R}^2)
  \mathrlap{\,.}
\end{equation}

Here ``group completion'' $\mathbb{G}(-)$ refers to adjoining formal inverses to the topological partial {\it monoid} ({\it semi-group}) of configurations of points under disjoint union. Naively this may appear to be exhibited just by making the points in the configurations carry signs, with oppositely signed pairs of points allowed to coincide and annihilate --- and indeed we saw that this gives the correct description of the connected components, via Pontrjagin's theorem \eqref{PontrjaginTheoremAndHopfDegree}.

However, the effect of group completion on the full homotopy type of the mapping space is well-known to be more subtle \cite[p. 96]{McDuff75}\cite{CarusoWaner81} and a tractable configuration model was long missing.
This gap was filled by Okuyama's observation
\cite{Okuyama05} that the group-completed configuration spaces may be modeled by configurations not of signed points but of {\it intervals} with signed endpoints, see Figure \ref{RelationsOfSignedConfigurations} and Ex. \ref{RelationsBetweenChargedOpenStringWorldsheets} below.

With \eqref{MakingMapS2S2AppearInIntroduction} we are thus reduced to understanding the loops of such intervals, and the main steps in our proof in \S\ref{ConfigurationSpace} consist of identifying these with framed links (Prop. \ref{ChargedStringLoopsAsFramedLinksOnEquivalenceClasses}, using facts from functorial knot theory, Fig. \ref{FigureRibbonCategory}) and their continuous deformation with link cobordism. From this analysis the claims about quantum observables follow in \S\ref{QuantumObservables}.

\bigskip

\noindent
{\bf Acknowledgements.} 
We thank Sadok Kallel and Shingo Okuyama for comments on the material in \S\ref{ConfigurationSpace}, following the first preprint version of this article.
In particular, Shingo Okuyama kindly informed us of a talk he gave \cite{Okuyama18} where graphical representations of his interval relations, as in our Fig. \ref{RelationsOfSignedConfigurations}, were already made use of.
Last but not least, we fondly remember inspiring discussion with the late Jack Morava, after a first version of this manuscript was made available, and we are grateful to him for pointing out his work \cite{MoravaRolfsen23}, which has some tantalizing points of contact with our discussion, such as one highlighted in Rem. \ref{RelationToDiscriminant} below.

\medskip

\section{Configuration Loops and Links}
\label{ConfigurationSpace}

\noindent
{\bf Group-completed configuration space of points.}
We just need some minimum of notions concerning configuration spaces of points, for further background we refer to \cite{Williams20}\cite{Kallel24}\cite{FadellHusseini01}. 

\begin{definition}[{\bf Plain configuration space of points} {\cite[p. 215]{Segal73}}]
\label{PlainConfigurationSpaceOfPoints}
For $d \in \mathbb{N}$, we write $\mathrm{Conf}(\mathbb{R}^d)$ for the topological space of finite subsets of (i.e. configurations of plain points in) $\mathbb{R}^d$. This is a partial topological monoid under the partial operation
\begin{equation}
  \label{PartialUnionOfCOnfigurations}
  \begin{tikzcd}
  \mathrm{Conf}(\mathbb{R}^d)
  \times
  \mathrm{Conf}(\mathbb{R}^d)
  \ar[
    r, 
    harpoon,
    "{ \sqcup }"
  ]
  &
  \mathrm{Conf}(\mathbb{R}^d)
  \mathrlap{\,,}
  \end{tikzcd}
\end{equation}
which is defined when the pair of configurations is disjoint, in which case it is given by their union. We write
\begin{equation}
  \label{TheGroupCompletedConfigurationSpace}
  \mathbb{G}
  \mathrm{Conf}(\mathbb{R}^d)
  \;:=\;
  \Omega
  \big(
    B_{\sqcup}
    \mathrm{Conf}(\mathbb{R}^d)
  \big)
\end{equation}
for the topological group completion of this partial monoid, namely the based loop space of the topological realization of its simplicial nerve. 
\end{definition}
\begin{proposition}[{\bf Segal's theorem: Group-completed configurations as iterated loops} {\cite[Thm. 1]{Segal73}}]
\label{SegalTheorem}
The cohomotopy charge map {\rm (cf. \cite[Fig. II.13]{Bredon93}\cite[Fig. D]{SS23-Mf}, also: ``scanning map'')} constitutes a weak homotopy equivalence between the group completion of the configuration space of plain points in $\mathbb{R}^d$ {\rm (Def. \ref{PlainConfigurationSpaceOfPoints})} and the $d$-fold based loop space of the $d$-sphere:
\begin{equation}
  \label{GCOnfEquivalentToOmeganSn}
  \mathbb{G}\mathrm{Conf}(\mathbb{R}^d)
  \;\;
    \weakHomotopyEquivalence
  \;\;
  \Omega^d S^d
  .
\end{equation}
\end{proposition}

Not to overburden the discussion with minutiae, we state slightly informally the following Def. \ref{ConfigurationsOfChargedStrings} of the configuration space of signed intervals, relying on graphical appeal to Figure \ref{RelationsOfSignedConfigurations}, since it is an easy exercise to give a fully formal description of the situation indicated on the right of the figure, or else to look it up in \cite[Def. 3.1-2]{Okuyama05}. Conversely, our graphical rendering in Fig. \ref{RelationsOfSignedConfigurations} of the definition offered in \cite[Def. 3.1-2]{Okuyama05} shows what is ``really going on'' there \footnote{
  After the first version of this manuscript was made available, we were kindly informed that the same graphical description was once presented in a talk by Okuyama \cite{Okuyama18}. 
} 
and opens the door to the following analysis.

\begin{definition}[{\bf Configuration space of signed intervals} {\cite[Def. 3.1-2]{Okuyama05}}]
\label{ConfigurationsOfChargedStrings}
For $d \in \mathbb{N}_{\geq 1}$,
we write \footnote{
  The space that we denote $\mathrm{Conf}^I(\mathbb{R}^d)$ in Def. \ref{ConfigurationsOfChargedStrings} would be denoted  ``$I_{d-1}(S^0)_{\mathbb{R}}$'' in the notation of  \cite[Def. 3.3]{Okuyama05}.
} $\mathrm{Conf}^{I}(\mathbb{R}^d)$ for the quotient space by the equivalence relations indicated on the right of Figure \ref{RelationsOfSignedConfigurations} of the topological space of disjoint unions of bounded (half-)open/closed line segments in $\mathbb{R}^d$ {\it parallel to the first coordinate axis}, where (in Fig. \ref{RelationsOfSignedConfigurations}) a filled (black) circle indicates that the corresponding point is included in the interval, while an empty (white) circle indicates that it is not. 
\end{definition}

\begin{tabular}{p{7.9cm}}
\hypertarget{FigureConf}{}
\footnotesize
  {\bf Figure \figurenumber:
  \label{RelationsOfSignedConfigurations}
  Configurations of signed points and intervals.}
  Indicated in the left column is the equivalence relation (\cite[p. 94]{McDuff75}) controlling the configuration space of signed {\it points} in some $\mathbb{R}^d$, where configurations involving a positively and a negatively signed point are connected by a continuous path to the corresponding configuration where both of these points are absent (have mutually annihilated). This configuration space is close to but {\it not} (weak homotopy) equivalent (by \cite[p. 6]{McDuff75}) to the group-completed configuration space $\mathbb{G}\mathrm{Conf}(\mathbb{R}^d)$ \eqref{TheGroupCompletedConfigurationSpace}.

  \smallskip 
  Indicated in the right column are the analogous relations (from \cite[Def. 3.1-2]{Okuyama05}) in the configuration space $\mathrm{Conf}^I(\mathbb{R}^d)$ of signed intervals (Def. \ref{ConfigurationsOfChargedStrings}), where signed points are replaced by line segments of finite length whose endpoints are signed, parallel to a fixed coordinate axis. This configuration space {\it is} (weak homotopy) equivalent to the group-completed  configuration space $\mathbb{G}\mathrm{Conf}(\mathbb{R}^d)$ (by \cite[Thm. 1.1]{Okuyama05}, Prop. \ref{OkuyamaTheorem}).

  \smallskip 
  In both cases, the curvy lines indicate continuous paths in these configuration spaces, here realizing the pair-annihilation processes. Running along these paths in the opposite direction reflects the corresponding pair-creation processes.

  \smallskip 

  The bottom graphics highlights that the processes represented by these continuous paths in both cases are {\it grosso modo} the same --- in that pairs of opposite signs mutually annihilate when coincident---, the difference being that on the right the process is ``smoothed out'' in the same way in which string interactions in string theory smooth out point-interactions of particles in particle physics (cf. \cite[Fig. 2.3-2.4]{Veneziano12}): Where the latter trace out ``worldlines'' with point interactions ({\it Feynman diagrams}) the former trace out ``worldsheet'' surfaces. 

  \smallskip

  On the other hand, in contrast to usual strings of string theory, the intervals here are constrained to be straight and parallel to a fixed coordinate axis in the plane. Our key observation in the following is that, for $d = 2$, this makes loops of such intervals be equivalent to (world)lines equipped with a (normal) {\it framing}, hence to {\it framed links}. This is as expected from Pontrjagin's theorem \eqref{PontrjaginTheoremInIntroduction}, when with Segal's theorem (Prop. \ref{SegalTheorem}) we understand homotopy classes of such loops as forming the 2-cohomotopy of $S^3$ (Rem. \ref{SingedIntervalsReflectingCohomotopy}).

\end{tabular}
\hspace{5mm} 
{\footnotesize
\def\arraystretch{1.3}
\begin{tabular}{|c|c|}
\hline
\multicolumn{2}{|c|}{\bf Configurations of signed}
\\
\bf points & \bf intervals
\\
\hline
\hline
&
\\[-10pt]
\adjustbox{raise=2cm}
{
\begin{tikzpicture}[decoration=snake]

\draw[line width=1pt,draw=gray,fill=black]
  (+.7,0) circle (.2);
\draw[
  line width=1pt,draw=gray,fill=white,
  fill opacity=.5
]
  (-0.7,0) circle (.2);

\draw[
  decorate,
  ->
] (0,-.3) -- (0,-1);

\draw[line width=1pt,draw=gray,fill=black]
  (.2,-1.4) circle (.2);
\draw[
  line width=1pt,draw=gray,fill=white,
  fill opacity=.5
]
  (0,-1.4) circle (.2);

\begin{scope}[shift={(0,-1.5)}]
\draw[
  decorate,
  ->
] (0,-.3) -- (0,-1);
\end{scope}

\node at (0,-2.8)
  {$
    \varnothing
  $};

\end{tikzpicture}
}

&

\begin{tikzpicture}[decoration=snake]

\begin{scope}[shift={(2,-2)}]

\begin{scope}[shift={(0,0)}]
\draw[line width=2pt, gray]
  (0,0) -- (1,0);
\draw[line width=1pt,draw=gray,fill=white]
  (0,0) circle (.2);
\draw[line width=1pt,draw=gray,fill=white]
  (1,0) circle (.2);
\end{scope}

\begin{scope}[shift={(1.6,0)}]
\draw[line width=2pt, gray]
  (0,0) -- (1,0);
\draw[line width=1pt,draw=gray,fill=black]
  (0,0) circle (.2);
\draw[line width=1pt,draw=gray,fill=black]
  (1,0) circle (.2);
\end{scope}

\begin{scope}[shift={(1.6,-1.3)}]
\draw[line width=2pt, gray]
  (-.05,0) -- (1,0);
\draw[line width=1pt,draw=gray,fill=black]
  (-.2,0) circle (.2);
\draw[line width=1pt,draw=gray,fill=black]
  (1,0) circle (.2);
\end{scope}

\begin{scope}[shift={(0,-1.3)}]
\draw[line width=2pt, gray]
  (0,0) -- (1,0);
\draw[line width=1pt,draw=gray,fill=white]
  (0,0) circle (.2);
\draw[line width=1pt,draw=gray,fill=white, fill opacity=.5]
  (1.2,0) circle (.2);
\end{scope}

\draw[
  decorate,
  ->
] (1.3,-.3) -- (1.3,-1);

\draw[
  decorate,
  ->
] (1.3,-1.7) -- (1.3,-2.4);

\begin{scope}[shift={(0,-1.2)}]
\draw[
  decorate,
  ->
] (1.3,-1.7) -- (1.3,-2.4);
\end{scope}

\begin{scope}[shift={(0,-2.4)}]
\draw[
  decorate,
  ->
] (1.3,-1.7) -- (1.3,-2.4);
\end{scope}

\begin{scope}[shift={(0,-3.9)}]
\draw[
  decorate,
  ->
] (1.3,-1.7) -- (1.3,-2.4);
\end{scope}

\begin{scope}[shift={(0,-2.6)}]
\draw[line width=2pt, gray]
  (0,0) -- (2.6,0);
\draw[line width=1pt,draw=gray,fill=white]
  (0,0) circle (.2);
\draw[line width=1pt,draw=gray,fill=black]
  (2.6,0) circle (.2);
\end{scope}

\begin{scope}[shift={(0,-3.9)}]
\draw[line width=2pt, gray]
  (.6,0) -- (2,0);
\draw[line width=1pt,draw=gray,fill=white]
  (.6,0) circle (.2);
\draw[line width=1pt,draw=gray,fill=black]
  (2,0) circle (.2);
\end{scope}

\begin{scope}[shift={(.3,-5.2)}]
\draw[line width=1pt,draw=gray,fill=black]
  (1.07,0) circle (.2);
\draw[line width=1pt,draw=gray,fill=white, fill opacity=.5]
  (.93,0) circle (.2);
\end{scope}

\draw (1.3, -6.6) node {$\varnothing$};

\end{scope}
  
\end{tikzpicture}
\\
\hline
\multicolumn{2}{|c|}{
  \bf
  \hspace{-1.4cm}
  tracing out
}
\\
\bf
``worldlines'' & \bf ``worldsheets''
\\
\hline
&
\\[-8pt]
\adjustbox{
  raise=1.5cm
}{
\begin{tikzpicture}[decoration=snake]

\draw[line width=1pt,draw=gray,fill=black]
  (+.7,0) circle (.2);
\draw[
  line width=1pt,draw=gray,fill=white,
  fill opacity=.5
]
  (-0.7,0) circle (.2);

\begin{scope}[shift={(0,.5)}]
\draw[line width=1pt,draw=gray,fill=black]
  (.2,-1.4) circle (.2);
\draw[
  line width=1pt,draw=gray,fill=white,
  fill opacity=.5
]
  (0,-1.4) circle (.2);
\end{scope}

\node at (0.1,-2)
  {$
    \varnothing
  $};

\draw[gray]
  (-.7,0) -- (0.1,-1.2); 
\draw[gray]
  (+.71,0) -- (0.1,-1.2); 
\draw[
  gray,
  dashed
]
  (0.1,-1.2) -- (0.1,-1.8);

\end{tikzpicture}
}
&
\begin{tikzpicture}

\begin{scope}[
  shift={(2,-2)}
]

\begin{scope}[shift={(0,0)}]
\draw[line width=2pt, gray]
  (0,0) -- (1,0);
\draw[line width=1pt,draw=gray,fill=white]
  (0,0) circle (.2);
\draw[line width=1pt,draw=gray,fill=white]
  (1,0) circle (.2);
\end{scope}

\begin{scope}[shift={(2,0)}]
\draw[line width=2pt, gray]
  (0,0) -- (1,0);
\draw[line width=1pt,draw=gray,fill=black]
  (0,0) circle (.2);
\draw[line width=1pt,draw=gray,fill=black]
  (1,0) circle (.2);
\end{scope}

\begin{scope}[shift={(2,-.8)}]
\draw[line width=2pt, gray]
  (-.2,0) -- (1,0);
\draw[line width=1pt,draw=gray,fill=black]
  (-.4,0) circle (.2);
\draw[line width=1pt,draw=gray,fill=black]
  (1,0) circle (.2);
\end{scope}

\begin{scope}[shift={(0,-.8)}]
\draw[line width=2pt, gray]
  (0,0) -- (1.2,0);
\draw[line width=1pt,draw=gray,fill=white]
  (0,0) circle (.2);
\draw[line width=1pt,draw=gray,fill=white, fill opacity=.5]
  (1.4,0) circle (.2);
\end{scope}

\begin{scope}[shift={(0,-1.6)}]
\draw[line width=2pt, gray]
  (0,0) -- (2.8,0);
\draw[line width=1pt,draw=gray,fill=white]
  (0.2,0) circle (.2);
\draw[line width=1pt,draw=gray,fill=black]
  (2.8,0) circle (.2);
\end{scope}

\begin{scope}[shift={(.15,-2.4)}]
\draw[line width=2pt, gray]
  (.5,0) -- (2,0);
\draw[line width=1pt,draw=gray,fill=white]
  (.5,0) circle (.2);
\draw[line width=1pt,draw=gray,fill=black]
  (2.2,0) circle (.2);
\end{scope}

\begin{scope}[shift={(.5,-3.2)}]
\draw[line width=1pt,draw=gray,fill=black]
  (1.07,0) circle (.2);
\draw[line width=1pt,draw=gray,fill=white, fill opacity=.5]
  (.93,0) circle (.2);
\end{scope}

\draw (1.5, -4) node {$\varnothing$};

\end{scope}

\draw[
  gray,
  smooth,
  fill=gray,
  fill opacity=.3,
  draw opacity=.3
]
  plot 
  coordinates{
    (2,-1.06) 
    (2,-2.8)
    (2.2,-3.6)
    (2.65,-4.4)
    (3.5,-5.3)
    (4.35,-4.4)
    (4.8,-3.6)
    (5,-2.8)
    (5,-1.06)
  }
  -- (4.05,-1.06)
  plot 
  coordinates {
    (4.05,-1.06)
    (4,-2)
    (3.6, -2.8)
    (3.4, -2.8)
    (3,-2.1)
    (2.95,-1.06)
  }
  --(2,-1.06);

\draw[
  line width=2pt,
  white
]
  (1.9,-1.5) -- (5.1,-1.5);
\draw[
  line width=2pt,
  white
]
  (1.9,-1.34) -- (5.1,-1.34);
\draw[
  line width=2pt,
  white
]
  (1.9,-1.18) -- (5.1,-1.18);
  
\end{tikzpicture}
\\
\hline
\end{tabular}
}

\vspace{.2cm}

\begin{example}[\bf Connected components of configurations of signed intervals]
For $c \in \mathrm{Conf}^I(\mathbb{R}^d)$ a configuration of signed intervals (Def. \ref{ConfigurationsOfChargedStrings})
and with 
$n_{\mathrm{black}/\mathrm{white}}(c) \in \mathbb{N}$ 
denoting its total numbers of black/white endpoints, respectively, it is clear that
\begin{equation}
  n(c) 
    \,:=\, 
  \tfrac{1}{2}\big(
    n_{\mathrm{black}}(c)
    -
    n_{\mathrm{white}}(c)
  \big)
  \;\in\;
  \mathbb{Z}
\end{equation}
is an integer which is constant under the above interaction processes, hence which constitutes a continuous function
\begin{equation}
 \begin{tikzcd}
   n(-) 
     \;:\;
   \mathrm{Conf}^I(\mathbb{R}^d)
   \ar[r]
   &
   \mathbb{Z}
   \,,
 \end{tikzcd}
\end{equation}
and that this establishes an isomorphism on path-connected components (cf. also Rem. \ref{SingedIntervalsReflectingCohomotopy} below):
\begin{equation}
  \begin{tikzcd}
    \pi_0\big(n(-)\big)
    \;:\;
    \pi_0 \, 
    \mathrm{Conf}^I(\mathbb{R}^d)
    \ar[
      r,
      "{ \sim }"
    ]
    &
    \mathbb{Z}
    \mathrlap{\,.}
  \end{tikzcd}
\end{equation}
Hence for $n \in \mathbb{Z}$ we shall denote the corresponding connected component by
\begin{equation}
  \label{ConnectedComponentOfConfigurationSpaceOfSignedIntervals}
  \mathrm{Conf}^I_n(\mathbb{R}^d)
  \;\subset\;
  \mathrm{Conf}^I(\mathbb{R}^d)
  \mathrlap{\,.}
\end{equation}
\end{example}

\begin{proposition}[{\bf Okuyama's theorem: Interval configurations group-complete points} {\cite[Thm. 1]{Okuyama05}}]
\label{OkuyamaTheorem}
  For $n \in \mathbb{N}_{\geq 1}$
  there is a weak homotopy equivalence between the configuration space of signed intervals (Def. \ref{ConfigurationsOfChargedStrings}) and the group completion of the plain configuration space of plain points {\rm (Def. \ref{PlainConfigurationSpaceOfPoints})}:
  \begin{equation}
    \label{ChargedStringSpaceIsGroupCompletedConfigSpace}
    \mathrm{Conf}^I(\mathbb{R}^d)
    \;\weakHomotopyEquivalence\;
    \mathbb{G}\mathrm{Conf}(\mathbb{R}^d)
    \,.
  \end{equation}
\end{proposition}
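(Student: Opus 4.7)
The plan is to connect both sides to $\Omega^n S^n$ via scanning maps, leveraging the classical equivalence $\mathbb{G}\mathrm{Conf}(\mathbb{R}^n) \simeq \Omega^n S^n$ already recalled in \eqref{GCOnfEquivalentToOmeganSn}. The task then reduces to constructing a natural scanning map $\sigma : \mathrm{Conf}^I(\mathbb{R}^n) \to \Omega^n S^n$ and showing it is a weak homotopy equivalence.

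First I would construct $\sigma$ explicitly as a local Pontryagin--Thom collapse adapted to intervals: given a configuration of axis-parallel intervals in $\mathbb{R}^n$, one scans with a small box centred at each point of $\mathbb{R}^n$, with the box collapsing to the basepoint outside the intervals and otherwise contributing non-trivially with a sign determined by the endpoint data (a closed endpoint contributing a $+1$-charge and an open endpoint a $-1$-charge). The equivalence relations of Definition \ref{ConfigurationsOfChargedStrings} --- closed-closed intervals shrinking to points, open-open intervals contracting away, and concatenation along compatible endpoints --- are then exactly those needed to make $\sigma$ well-defined on equivalence classes, because they correspond to locally cancelling scanning data.

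Next I would show that $\sigma$ is a weak equivalence by filtering both sides by combinatorial complexity (the number of interval components on the left, the James/May--Milgram filtration length on the right) and comparing successive subquotients. At filtration degree $k$ both should reduce to the symmetric labelled configuration space of $k$ points in $\mathbb{R}^n$ equipped with $\pm$-charges, so an induction on $k$ together with the five lemma finishes the comparison. This is essentially the classical scanning-map machinery of Segal and May.

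The main obstacle is twofold. First, one must verify that the relations of Definition \ref{ConfigurationsOfChargedStrings} are exactly right: too few and $\sigma$ fails to descend to the quotient; too many and $\mathrm{Conf}^I(\mathbb{R}^n)$ would collapse past the group completion. The geometric picture guiding this is that an open endpoint encodes precisely the group-theoretic inverse of the point it ``almost'' closes onto, furnishing the homotopy inverses that the partial monoid $\mathrm{Conf}(\mathbb{R}^n)$ lacks --- so in effect one promotes the partial monoid structure to a topological \emph{group} whose classifying space is equivalent to $B_{\sqcup}\mathrm{Conf}(\mathbb{R}^n)$. Second, the comparison of filtrations at each level requires careful handling of the topology near interval collisions along the first coordinate axis and near infinity in $\mathbb{R}^n$. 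With these subtleties addressed the statement follows; the complete execution is carried out in \cite[Thm.~1]{Okuyama05}.
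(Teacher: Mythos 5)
The paper offers no proof of this proposition at all: it is imported verbatim as \cite[Thm.\ 1]{Okuyama05}, so there is no in-paper argument to compare your proposal against. Since your write-up also ultimately defers ``the complete execution'' to Okuyama, the net logical content of your proposal coincides with the paper's (a citation), and to that extent it is unobjectionable. Your framing of the conceptual point is also the right one --- an interval with one closed and one open endpoint is a dipole whose endpoints furnish the homotopy inverses that the partial monoid of Def.\ \ref{PlainConfigurationSpaceOfPoints} lacks, which is exactly why $\mathrm{Conf}^I(\mathbb{R}^n)$ can model the group completion while the naive $\pm$-charged particle space cannot.

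However, the sketch you wrap around the citation should not be mistaken for a proof outline that would actually run. The weak step is the filtration comparison. The space $\mathrm{Conf}^I(\mathbb{R}^n)$ of Def.\ \ref{ConfigurationsOfChargedStrings} is a quotient by relations that do not preserve the number of interval components (concatenation merges two intervals into one, closed--closed intervals shrink to points and then vanish, open--open intervals contract away), so ``number of interval components'' does not descend to a filtration of the quotient space, and the inductive comparison with the James/May--Milgram filtration of $\Omega^n S^n$ does not get off the ground as stated. Worse, your claimed identification of the subquotients with symmetric configuration spaces of $k$ points carrying $\pm 1$ charges is precisely the model that \cite[p.\ 96]{McDuff75} shows fails to have the right homotopy type --- the whole point of Okuyama's interval model is to repair that failure, so an argument whose intermediate stages reduce to the charged-particle picture is circular at best. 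A correct route either goes through Okuyama's actual argument (identifying $\mathrm{Conf}^I(\mathbb{R}^n)$ as a group-like space receiving a group-completion map from $\mathrm{Conf}(\mathbb{R}^n)$, then invoking \eqref{GCOnfEquivalentToOmeganSn}) or constructs the scanning map to $\Omega^n S^n$ and proves it is an equivalence by a quasifibration/induction-on-coordinates argument --- neither of which is the subquotient comparison you describe.
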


\begin{remark}[\bf Signed interval configurations reflecting Cohomotopy moduli]
  \label{SingedIntervalsReflectingCohomotopy}
  In summary, this identifies the $d$-Cohomotopy moduli vanishing at infinity on $\mathbb{R}^d$ with the Okuyama configuration space of signed intervals in $\mathbb{R}^d$:
  \begin{equation}
    \label{StringyConfigurationsAsCohomotopy}
    \def\arraystretch{1.5}
    \def\arraycolsep{2pt}
    \begin{array}{ll}
    \scalebox{.7}{
      \color{darkblue}
      \bf
      \def\arraystretch{.9}
      \begin{tabular}{c}
        Configuration space of
        \\
        signed intervals
      \end{tabular}
    }
    \mathrm{Conf}^I(\mathbb{R}^d)
    &
    \underset{
      \scalebox{.7}{
        \eqref{ChargedStringSpaceIsGroupCompletedConfigSpace}
      }
    }{
      \weakHomotopyEquivalence
    }
    \;
    \mathbb{G}\mathrm{Conf}(\mathbb{R}^d)
    \\
    &
    \underset{
      \scalebox{.7}{
        \eqref{GCOnfEquivalentToOmeganSn}
      }
    }{
      \weakHomotopyEquivalence
    }
    \;    
    \Omega^d S^d
    \;\,\defneq\;
    \mathrm{Map}^{\ast}\big(
      S^d
      ,\,
      S^d
    \big)
    \;\,\simeq\;
    \mathrm{Map}^{\ast}\big(
      \mathbb{R}^d_{\cup \{\infty\}}
      ,\,
      S^d
    \big)
    \!\!\!\scalebox{.7}{
      \color{darkblue}
      \bf
      \def\arraystretch{.9}
      \begin{tabular}{c}
        Cohomotopy moduli 
        \\
        vanishing at infinity
        \color{black},
      \end{tabular}
    }
    \end{array}
  \end{equation}
  where 
  $(-)_{\cpt}$ denotes the {\it one-point compactification} obtained by adjoining the {\it point at infinity},  and $\mathrm{Map}^\ast(-,-) \,\subset\, \mathrm{Map}(-,-)$ denotes the pointed mapping space (cf. \cite[\S 3]{James84}). This perspective on $\mathrm{Conf}^I(\mathbb{R}^2)$ shows that it also witnesses configurations flux quantized in 2-Cohomotopy, explained in \S\ref{Conclusion}. 
\end{remark}
This implies:
\begin{proposition}[\bf Fundamental group of signed interval configurations in the plane]
\label{TheFundamenralGroupOfStringConfigurationSpace}
The fundamental group of the configuration space of signed intervals in the plane ($d=2$, Def. \ref{ConfigurationsOfChargedStrings}) is the group of integers:
\begin{equation}
  \label{FundamentalGroupOfStringConfigurationSpace}
  \begin{array}{ll}
    \pi_1
    \big(
      \mathrm{Conf}^I_0(\mathbb{R}^2)
    \big)
    &
    \defneq
    \;\;
    \pi_0
    \big(
      \Omega \, 
      \mathrm{Conf}^I_0(\mathbb{R}^2)
    \big)    
    \;\;
    \underset{
      \mathclap{
        \adjustbox{
          scale=.7,
          raise=-3pt
        }{          \eqref{StringyConfigurationsAsCohomotopy}
        }
      }
    }{
      \simeq
    }
    \;\;
    \pi_0
    \big(
      \Omega^3 S^2
    \big)    
    \;\;
    \defneq
    \;\;
    \pi_3(S^2)
    \underset{
      \mathclap{
        \adjustbox{
          scale=.7,
          raise=-3pt
        }{
          \eqref{HopfFibration}
        }
      }
    }{
    \;\;
    \simeq
    \;\;
    }
    \mathbb{Z}
    \,.
  \end{array}
\end{equation}
\end{proposition}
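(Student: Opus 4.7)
The plan is to derive the claim by composing the weak homotopy equivalences furnished by Okuyama's theorem (Proposition~\ref{OkuyamaTheorem}) and Segal's theorem on group-completed configuration spaces, together with Hopf's classical computation of $\pi_3(S^2)$. There is no new homotopy-theoretic input required; the statement is a direct corollary of results already recalled in the excerpt.

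Concretely, I would chain the equivalences as follows. First, Okuyama's theorem gives a weak homotopy equivalence $\mathrm{Conf}^I(\mathbb{R}^2) \simeq \mathbb{G}\mathrm{Conf}(\mathbb{R}^2)$, where the right-hand side is, by \eqref{TheGroupCompletedConfigurationSpace}, the based loop space $\Omega B_{\sqcup}\mathrm{Conf}(\mathbb{R}^2)$. In particular, $\mathbb{G}\mathrm{Conf}(\mathbb{R}^2)$ is a grouplike topological $H$-space, so all of its path-components are weakly equivalent via translation by any chosen point, and hence
\[
  \Omega_{0}\,\mathrm{Conf}^I(\mathbb{R}^2) \;\simeq\; \Omega\,\mathbb{G}\mathrm{Conf}(\mathbb{R}^2).
\]
Second, Segal's scanning equivalence \eqref{GCOnfEquivalentToOmeganSn} identifies $\mathbb{G}\mathrm{Conf}(\mathbb{R}^2) \simeq \Omega^2 S^2$, so looping once more yields
\[
  \Omega_{0}\,\mathrm{Conf}^I(\mathbb{R}^2) \;\simeq\; \Omega^3 S^2.
\]

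Third, applying $\pi_0$ and using the defining relation $\pi_n(X) \defneq \pi_0(\Omega^n X)$ gives
\[
  \pi_1\big(\mathrm{Conf}^I(\mathbb{R}^2)\big) \;\defneq\; \pi_0\big(\Omega_{0}\,\mathrm{Conf}^I(\mathbb{R}^2)\big) \;\simeq\; \pi_0\big(\Omega^3 S^2\big) \;\defneq\; \pi_3(S^2).
\]
The proof is then concluded by Hopf's classical theorem $\pi_3(S^2) \simeq \mathbb{Z}$, with generator represented by the Hopf fibration $S^3 \twoheadrightarrow S^2$.

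The only steps that carry real mathematical content are the two black-boxed theorems — Okuyama's identification \eqref{ChargedStringSpaceIsGroupCompletedConfigSpace} of the ``stringy'' configuration space with the group-completion, and Segal's scanning equivalence — together with Hopf's computation of $\pi_3(S^2)$. Modulo these, the argument is pure bookkeeping: one must only verify that the passage from $\Omega_0$ on the source to $\Omega$ on the group-completed target is legitimate, which is automatic because $\mathbb{G}\mathrm{Conf}(\mathbb{R}^2)$ is a grouplike $H$-space with homotopy-equivalent components. Accordingly, I do not anticipate any genuine obstacle in carrying out this proof.
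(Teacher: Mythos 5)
Your proof is correct and takes essentially the same route as the paper, which states the proposition as an immediate corollary (``This implies:'') of the chain of equivalences \eqref{StringyConfigurationsAsCohomotopy} — i.e.\ Okuyama's theorem composed with Segal's scanning equivalence — followed by Hopf's computation $\pi_3(S^2)\simeq\mathbb{Z}$. Your additional care in justifying the passage from $\Omega_0$ on $\mathrm{Conf}^I(\mathbb{R}^2)$ to the based loop space of the grouplike target is a point the paper leaves implicit (and only addresses later, in Rem.~\ref{LoopsBasedInTheNComponent}), but it is correct and harmless.
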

The generator on the right of \eqref{FundamentalGroupOfStringConfigurationSpace} is well-known to be represented by the complex Hopf fibration \eqref{HopfFibration}. Our goal is to understand the corresponding generator on the left, i.e., the signed interval loop whose composites and their reverses are deformation-equivalent to general loops of signed intervals. 
A key observation for this identification is the following:

\begin{example}[\bf Continuous deformations of paths of signed interval configurations]
\label{RelationsBetweenChargedOpenStringWorldsheets}
  Continuous deformations of paths of signed intervals, i.e., continuous maps of the form
  $
      {[0,1]}^2 
      \longrightarrow 
      \mathrm{Conf}^I(\mathbb{R}^2)
  $,
  evidently subsume the following ``moves''  (and their images under the exchange of positive with negative endpoint signs):
\begin{equation}
\scalebox{0.8}{$
\adjustbox{
  scale=.6,
  raise=-1.4cm
}{
\begin{tikzpicture}
\draw[
  gray,
  line width=30pt,
  draw opacity=.4,
]
 (0,2) 
 --
 (0,-2);

\closedinterval{-.5}{1.6}{1}
\closedinterval{-.5}{.8}{1}
\closedinterval{-.5}{0}{1}
\closedinterval{-.5}{-.8}{1}
\closedinterval{-.5}{-1.6}{1}

\begin{scope}[
  xshift=3.3cm,
  xscale=-1
]
\draw[
  gray,
  line width=30pt,
  draw opacity=.4
]
 (0,2) 
 --
 (0,-2);

\openinterval{-.5}{1.6}{1}
\openinterval{-.5}{.8}{1}
\openinterval{-.5}{0}{1}
\openinterval{-.5}{-.8}{1}
\openinterval{-.5}{-1.6}{1}

\end{scope}
\end{tikzpicture}
}
\;\;\;\;\;
\begin{tikzpicture}[decoration=snake]
\draw[decorate,->, line width=1]
  (0,0) -- (+0.55,0);
\draw[decorate,->, line width=1]
  (0,0) -- (-0.55,0);
\end{tikzpicture}
\;\;\;\;\;
\adjustbox{
  scale=.6,
  raise=-1.4cm
}{
\begin{tikzpicture}

\draw[
  gray,
  line width=30pt,
  draw opacity=.4
]
 (0,2) 
  .. controls (0,1) and (1,0) ..
  (1,0)
  .. controls (1,0) and (0,-1) ..
 (0,-2);

\closedinterval{-.45}{1.6}{1.02};
\closedinterval{-.2}{.8}{1.2};
\closedinterval{.3}{0}{1.3};
\closedinterval{-.2}{-.8}{1.2};
\closedinterval{-.45}{-1.6}{1.02};

\begin{scope}[
  xshift=3.5cm,
  xscale=-1
]
\draw[
  gray,
  line width=30pt,
  draw opacity=.4
]
 (0,2) 
  .. controls (0,1) and (1,0) ..
  (1,0)
  .. controls (1,0) and (0,-1) ..
 (0,-2);

\openinterval{-.45}{1.6}{1.02};
\openinterval{-.2}{.8}{1.2};
\openinterval{.3}{0}{1.3};
\openinterval{-.2}{-.8}{1.2};
\openinterval{-.45}{-1.6}{1.02};
\end{scope}

\end{tikzpicture}
}
\;\;\;\;\;
\begin{tikzpicture}[decoration=snake]
\draw[decorate,->, line width=1]
  (0,0) -- (+0.55,0);
\draw[decorate,->, line width=1]
  (0,0) -- (-0.55,0);
\end{tikzpicture}
\;\;\;\;\;
\adjustbox{
  scale=.6,
  raise=-1.1cm
}{
\begin{tikzpicture}

\begin{scope}
\clip
  (-.6,0) rectangle (5,2);
\draw[
  gray,
  line width=28,
  draw opacity=.4
]
  (1.75,2.6) circle (1.8);
\end{scope}
\closedinterval{-.4}{1.75}{1.2};
\closedinterval{-.1}{1.3}{1.75};
\begin{scope}[
  xshift=3.5cm,
  xscale=-1
]
\openinterval{-.4}{1.75}{1.2};
\openinterval{-.1}{1.3}{1.75};
\end{scope}
\oppositeinterval{.32}{.8}{2.85};
\oppositeinterval{1.65}{.33}{.2};

\begin{scope}[yscale=-1]
\begin{scope}
\clip
  (-.6,0) rectangle (5,2);
\draw[
  gray,
  line width=28,
  draw opacity=.4
]
  (1.75,2.6) circle (1.8);
\end{scope}
\closedinterval{-.4}{1.75}{1.2};
\closedinterval{-.1}{1.3}{1.75};
\begin{scope}[
  xshift=3.5cm,
  xscale=-1
]
\openinterval{-.4}{1.75}{1.2};
\openinterval{-.1}{1.3}{1.75};
\end{scope}
\oppositeinterval{.32}{.8}{2.85};
\oppositeinterval{1.65}{.33}{.2};
\end{scope}
  
\end{tikzpicture}
}
$}
\end{equation}
\vspace{-.5cm}
\begin{equation}
\label{StringyZigZagMove}
\adjustbox{scale=.9}{
\hspace{-.8cm}
\adjustbox{
  scale=.3,
  raise=-1.5cm
}{
\begin{tikzpicture}

  \begin{scope}[
    scale=-1
  ]
    \halfcircle;  
  \end{scope}

  \begin{scope}[
    shift={(6,0)}
  ]
    \halfcircle;  
  \end{scope}

\draw[
  gray,
  draw opacity=.4,
  line width=30,
]
  (-3,0) -- 
  (-3,-4.8);

\foreach \k in {1,...,6} {
  \closedinterval{-3.5}{+.4-.8*\k}{1};
};

\begin{scope}[
  shift={(5.9,0)},
  scale=-1
]
\foreach \k in {1,...,6} {
  \closedinterval{-3.5}{+.4-.8*\k}{1};
};

\draw[
  gray,
  draw opacity=.4,
  line width=30,
]
  (-3,0) -- 
  (-3,-4.8);

\end{scope}
  
\end{tikzpicture}
}
\adjustbox{
  scale=.5
}{
\begin{tikzpicture}[decoration=snake]
\draw[decorate,->, line width=1]
  (0,0) -- (+0.55,0);
\draw[decorate,->, line width=1]
  (0,0) -- (-0.55,0);
\end{tikzpicture}
}
\adjustbox{
  scale=.3,
  raise=-1.8cm
}{
\begin{tikzpicture}

\closedinterval{.15}{-4.8}{1};
\closedinterval{.15}{-4}{1};
\closedinterval{.15}{-3.2}{1.25};
\closedinterval{.15}{-2.4}{1.5};
\closedinterval{.15}{-1.6}{1.75};
\closedinterval{.15}{-.8}{2};
\closedinterval{.15}{0}{2.4};
\closedinterval{.3}{.8}{2.6};
\closedinterval{.6}{1.6}{2.4};

\closedinterval{9-1.3}{.8}{1.4+1.3};

\openinterval{3.2}{1.6}{3.1};

\begin{scope}[
  shift={(3,5.5)},
  xscale=-1
]
\draw[gray, line width=3.3]
  (-1.7+.2,-3.1) -- (1.7-.2,-3.1);
\draw[draw=gray,line width=1.3,fill=black]
  (1.7,-3.1) circle (.2);
\draw[draw=gray,line width=1.3,fill opacity=.6,fill=white]
  (-1.7,-3.1) circle (.2);
\end{scope}

\begin{scope}[
  shift={(3,6.3)},
  xscale=-1
]
\draw[draw=gray,line width=1.3,fill=black]
  (.1,-3.1) circle (.2);
\draw[draw=gray,line width=1.3,fill opacity=.6,fill=white]
  (-.1,-3.1) circle (.2);
\end{scope}

\openinterval{3.15}{.8}{4.3};

\closedinterval{7.7}{0}{2.4};
\openinterval{5.1-.4}{0}{2.4+.4};

\closedinterval{9.5-1.2}{1.6}{1+1.2};
\closedinterval{9.5-.75}{2.4}{1+.75};
\closedinterval{9.5-.5}{3.2}{1+.5};
\closedinterval{9.5-.25}{4}{1+.25};
\closedinterval{9.5}{4.8}{1};
\closedinterval{9.5}{5.6}{1};

\begin{scope}[
  shift={(10.5,1.5)},
  scale=-1
]
\begin{scope}[
  shift={(3,5.5)},
  xscale=-1
]
\draw[gray, line width=3.3]
  (-1.7+.2,-3.1) -- (1.7-.2,-3.1);
\draw[draw=gray,line width=1.3,fill=black]
  (1.7,-3.1) circle (.2);
\draw[draw=gray,line width=1.3,fill opacity=.6,fill=white]
  (-1.7,-3.1) circle (.2);
\end{scope}

\begin{scope}[
  shift={(3,6.3)},
  xscale=-1
]
\draw[draw=gray,line width=1.3,fill=black]
  (.1,-3.1) circle (.2);
\draw[draw=gray,line width=1.3,fill opacity=.6,fill=white]
  (-.1,-3.1) circle (.2);
\end{scope}
\end{scope}

\draw[
  draw=gray,
  draw opacity=.4,
  fill=gray,
  fill opacity=.4,
]
  plot 
  [smooth cycle]
  coordinates {
    (.1,-5.5)
    (1.1,-5.5)
    (1.2,-4)
    (1.4,-3.2)
    (1.65,-2.4)
    (1.85,-1.8)
    (2.1,-1)
    (2.6,-0)
    (3,.8)
    (4.7,0)
    (5.8,-.9)
    (7.4,-1.7)
    (9.3,-.9)
    (10.1,-0)
    (10.4,+.8)
    (10.5,+1.6)
    (10.5,+6)
    (9.6,+6)
    (9.5,4.9)
    (9.3,4.2)
    (9.05,3.2)
    (8.75,2.4)
    (8.3,1.6)
    (7.6,.8)
    (6.3,1.6)
    (4.7,2.4)
    (3.0,3.2)
    (1.3,2.4)
    (.6,1.6)
    (.3,.8)
    (.2,0)
    (.2,-5)
  };

\draw[
  white,
  line width=30
]
  (-.1,-5.65) -- (1.3,-5.65);

\draw[
  white,
  line width=30
]
  (9,6.5) -- (10.8,6.5);
  
\end{tikzpicture}
}
\adjustbox{
  scale=.5
}{
\begin{tikzpicture}[decoration=snake]
\draw[decorate,->, line width=1]
  (0,0) -- (+0.55,0);
\draw[decorate,->, line width=1]
  (0,0) -- (-0.55,0);
\end{tikzpicture}
}
\adjustbox{
  scale=.3,
  raise=-2.2cm
}{
\begin{tikzpicture}

\draw[
  draw=gray,
  draw opacity=.4,
  fill=gray,
  fill opacity=.4
]
  plot 
  [smooth cycle]
  coordinates {
    (.1,-5.9)
    (1.2,-5.9)
    (1.2,-4.0)
    (2.6,-3.2)
    (4.5,-2.4)
    (7.4,-1.6)
    (9.3,-.8)
    (10.2,0)
    (10.45,.8)
    (10.55,1.6)
    (10.9,2.4)
    (10.9,9)
    (9.8,9)
    (9.8,7.2)
    (8.5,6.4)
    (6.5,5.6)
    (3.7,4.8)
    (1.8,4)
    (.9,3.2)
    (.6,2.4)
    (.5,1.6)
    (.1,.8)
    (.1,-5)
  };

\draw[
  white,
  line width=30
]
  (-.2,-5.7) -- 
  (1.5,-5.7);

\draw[
  white,
  line width=33
]
  (9.5,8.9) -- 
  (11.1,8.9);

\closedinterval{.1}{-4.8}{1.1};
\closedinterval{.1}{-4}{1.1};
\closedinterval{.1}{-4+.8}{2.4};
\closedinterval{.1}{-4+1.6}{4.4};
\closedinterval{.1}{-4+2.4}{7.2};
\closedinterval{.1}{-4+3.2}{9.1};
\closedinterval{.1}{-4+4}{10};
\closedinterval{.1}{-4+4.8}{10.3};
\closedinterval{.5}{-4+5.6}{10};

\begin{scope}[
  shift={(11,3.2)},
  scale=-1
]

\closedinterval{.1}{-4.8}{1.1};
\closedinterval{.1}{-4}{1.1};
\closedinterval{.1}{-4+.8}{2.4};
\closedinterval{.1}{-4+1.6}{4.4};
\closedinterval{.1}{-4+2.4}{7.2};
\closedinterval{.1}{-4+3.2}{9.1};
\closedinterval{.1}{-4+4}{10};
\closedinterval{.1}{-4+4.8}{10.3};
\closedinterval{.5}{-4+5.6}{10};

\end{scope}

\end{tikzpicture}
}
\adjustbox{
  scale=.5
}{
\begin{tikzpicture}[decoration=snake]
\draw[decorate,->, line width=1]
  (0,0) -- (+0.55,0);
\draw[decorate,->, line width=1]
  (0,0) -- (-0.55,0);
\end{tikzpicture}
}
}
\;\;\;
\adjustbox{
  scale=.28,
  raise=-2cm
}{
\begin{tikzpicture}
\draw[
  gray,
  draw opacity=.4,
  line width=30,
]
  (0,0) -- 
  (0,-13.6);

\foreach \k in {1,...,17} {
  \closedinterval{-.5}{+.4-.8*\k}{1};
};
\end{tikzpicture}
}
\end{equation}
\begin{equation}
\label{VacuumLoopVanishes}
\scalebox{0.8}{$   
\adjustbox{}{
\adjustbox{
  scale=.5,
  raise=-1.8cm
}{
\begin{tikzpicture}
 \begin{scope}
   \halfcircle{0,0};
 \end{scope}
 \begin{scope}[yscale=-1]
   \halfcircle{0,0};
 \end{scope}
\end{tikzpicture}
}
\;\;
\begin{tikzpicture}[decoration=snake]
\draw[decorate,->, line width=1]
  (0,0) -- (+0.55,0);
\draw[decorate,->, line width=1]
  (0,0) -- (-0.55,0);
\end{tikzpicture}
\;\;
\adjustbox{
  scale=.5,
  raise=-1.8cm
}{
\begin{tikzpicture}
\begin{scope}
\begin{scope}
\clip
  (-4,0) rectangle (4.1,-4.1);
\draw[
  gray,
  line width=30pt,
  draw opacity=.4,
]
  (0,0) circle (3);
\draw[
  draw opacity=0,
  fill=gray,
  line width=0pt,
  fill opacity=.4,
]
  (0,0) circle (2.48);
\end{scope}

\draw[draw=gray,line width=1.3,fill=black]
  (0.1,-3.5) circle (.2);
\draw[draw=gray,line width=1.3,fill opacity=.6,fill=white]
  (-0.1,-3.5) circle (.2);

\draw[gray, line width=3.3]
  (-1.7+.2,-3.1) -- (1.7-.2,-3.1);
\draw[draw=gray,line width=1.3,fill=black]
  (1.7,-3.1) circle (.2);
\draw[draw=gray,line width=1.3,fill opacity=.6,fill=white]
  (-1.7,-3.1) circle (.2);

\interval{-2.5}{-2.5}{5};

\interval{-3}{-1.83}{5.98};

\interval{-3.31}{-1.12}{6.6};

\interval{-3.5}{-.38}{6.96};
\end{scope}

\begin{scope}[
  yscale=-1
]
\begin{scope}
\clip
  (-4,0) rectangle (4.1,-4.1);
\draw[
  gray,
  line width=30pt,
  draw opacity=.4,
]
  (0,0) circle (3);
\draw[
  draw opacity=0,
  fill=gray,
  line width=0pt,
  fill opacity=.4,
]
  (0,0) circle (2.48);
\end{scope}

\draw[draw=gray,line width=1.3,fill=black]
  (0.1,-3.5) circle (.2);
\draw[draw=gray,line width=1.3,fill opacity=.6,fill=white]
  (-0.1,-3.5) circle (.2);

\draw[gray, line width=3.3]
  (-1.7+.2,-3.1) -- (1.7-.2,-3.1);
\draw[draw=gray,line width=1.3,fill=black]
  (1.7,-3.1) circle (.2);
\draw[draw=gray,line width=1.3,fill opacity=.6,fill=white]
  (-1.7,-3.1) circle (.2);

\interval{-2.5}{-2.5}{5};

\interval{-3}{-1.83}{5.98};

\interval{-3.31}{-1.12}{6.6};

\interval{-3.5}{-.38}{6.96};
\end{scope}
 
\end{tikzpicture}    
}
\;\;
\adjustbox{
  scale=1
}{
\begin{tikzpicture}[decoration=snake]
\draw[decorate,->, line width=1]
  (0,0) -- (+0.55,0);
\draw[decorate,->, line width=1]
  (0,0) -- (-0.55,0);
\end{tikzpicture}
}
}
$}
\;\;
\scalebox{1.2}{$
  \varnothing$}
  \mathrlap{\,.}
\end{equation}
Here and in the following:
\begin{itemize}[
  leftmargin=.7cm
]
\item[(i)] the plane $\mathbb{R}^2$ in which the intervals are embedded shares its horizontal axis with the page and has its other axis perpendicular to it,

\item[(ii)] the parameter $t \in [0,1]$ of path of interval configurations $[0,1] \to \mathrm{Conf}^I(\mathbb{R}^2)$ runs vertically along the page.
\end{itemize}
In partciular therefore, the third move \eqref{VacuumLoopVanishes} is a path of based loops in $\mathrm{Conf}^I(\mathbb{R}^2)$, which thus witnesses that the class of the annulus 
worldsheet in the fundamental group of the configuration space vanishes:
\begin{equation}
\scalebox{0.75}{$
  \left[
\adjustbox{
  scale=.5,
  raise=-2cm
}{
\begin{tikzpicture}
 \begin{scope}
   \halfcircle{0,0};
 \end{scope}
 \begin{scope}[yscale=-1]
   \halfcircle{0,0};
 \end{scope}
\end{tikzpicture}
}
  \right]
$}
  \;\;
  =
  \;\;
  \mathrm{e}
  \;\;
  \in
  \;\;
  \pi_1\big(
    \mathrm{Conf}^I_0(\mathbb{R}^2)
  \big)
  \,.
\end{equation}
\end{example}

\smallskip
Hence the annulus is not the generator of $\pi_1\big(\mathrm{Conf}^I_0(\mathbb{R}^2)\big)$ that we are after, and we need to look further:

\medskip

\noindent
{\bf Loops of signed intervals as framed oriented links.}
Our first observation now is that based loops in the configuration space of signed intervals (Def. \ref{ConfigurationsOfChargedStrings}) may be identified
(as indicated in Figure \ref{FigureStringLoopsAsLinks})
with {\it framed oriented links}. 
For examples of framed link diagrams see \eqref{HopfLinktoUnknot}, \eqref{TrefoilLinkToUnknot} and \eqref{FigureEightNotToUnknot};
for background on framed links in knot theory and quantum topology compare \cite[p. 15]{Ohtsuki01}\cite{EHI20}. We recall the relevant basics:

\begin{definition}[\bf Framed oriented links]
\label{FramedOrientedLinks}
$\,$

\noindent {\bf (i)} A {\it framed oriented link diagram} is an immersion of $k$ oriented circles $\big(S^1\big)^{\smash{\sqcup^k}}$, for $k \in \mathbb{N}$, into the plane $\mathbb{R}^2$ with isolated crossings at Euclidean distance $> 1$ from each other, at each of which one segment is labeled as crossing {\it over} the other.
Here we demand in addition (and without essential restriction of generality) that no strictly horizontal segments appear, hence that the restriction of a link diagram to any $\mathbb{R}^1 \hookrightarrow \mathbb{R}^2$ parallel to $\mathbb{R} \times \{0\}$ consists of finitely many points -- this is used in \eqref{MappingFramedLinksToStringLoops} below.

\noindent {\bf (ii)}  A pair of oriented link diagrams are regarded as equivalent if they may be transformed into each other by a sequence of isotopies (continuous paths in the space of framed link diagrams) and the three {\it Reidemeister moves} shown in Figure \ref{FigureReidemeister}.

\noindent {\bf (iii)}  The {\it framed oriented links} are the corresponding equivalence classes of framed oriented link diagrams.
\end{definition}

\def\tabcolsep{20pt}
\def\arraystretch{1.25}
\begin{minipage}{5cm}
\footnotesize 
{\bf Figure \figurenumber: 
\label{FigureReidemeister}
Reidemeister moves}  for framed link diagrams (cf. \cite[Thm. 1.8]{Ohtsuki01}).
For oriented framed links there are the evident oriented versions of each of these moves \cite{Polyak10}. To obtain a fully combinatorial description of (framed oriented) link diagram equivalence it is sufficient to include also the zig-zag yanking move, see Figure \ref{FigureRibbonCategory}.
\end{minipage}
\;\;\;\;
\begin{tabular}{|c||c|}
  \hline
  &
  \\[-12pt]
  {\small 1st}
  &
  \adjustbox{
    rotate=-90,
    raise=+2.7cm,
    scale=.3
  }{
\begin{tikzpicture}

\draw[
  line width=3,
]
  (-1.3,-1.4) --
  (0,-1.4)
  .. controls 
     (.5,-1.4) and (1,-.5) ..
  (1,0)
  .. controls 
     (1,.5) and (.5,1) ..
  (0,1)  
  .. controls
    (-.5,1) and (-1,.5)
  .. (-1,0)
  .. controls
    (-1,-.4) and (-.5, -.8) ..
  (0,-.8);
\draw[
  line width=3,
]
 (2.5,-.8)
 .. controls 
   (3,-.8) and (3.5,-.4) ..
 (3.5,0)
 .. controls
   (3.5,.5) and (3,1) ..
 (2.5,1)
  .. controls
    (2,1) and (1.5,.5) ..
  (1.5,0)
  .. controls
    (1.5,-.5) and (2,-1.4) ..
  (2.5,-1.4)
  --
  (3.8,-1.4);
\draw[
  line width=12,
  white
]
 (0,-.8) -- (2.5,-.8);
\draw[
  line width=3,
]
 (0,-.8) -- (2.5,-.8);

\end{tikzpicture}  
  }
 \hspace{.5cm}
 \adjustbox{scale=.4}{
 \begin{tikzpicture}[decoration=snake]
   \draw[decorate, ->]
    (-.01,0) -- (0.54,0);
   \draw[decorate, ->]
    (0.01,0) -- (-0.54,0);
 \end{tikzpicture}
 }
 \hspace{.5cm}
 \adjustbox{
   rotate=-90, 
   raise=2.7cm,
   scale=.3
 }
 {
\begin{tikzpicture}
  \draw[
    line width=3,
  ]
  (-1.9,0) -- (3.2,0);
\end{tikzpicture}
 }
 \hspace{.5cm}
 \adjustbox{
   scale=.4
 }{
 \begin{tikzpicture}[decoration=snake]
   \draw[decorate, ->]
    (-.01,0) -- (0.54,0);
   \draw[decorate, ->]
    (0.01,0) -- (-0.54,0);
 \end{tikzpicture}
 }
 \hspace{.5cm}
 \adjustbox{
   rotate=-90, 
   raise=2.7cm,
   scale=.3
 }
 {
\begin{tikzpicture}

\draw[
  line width=3,
]
 (0,-.8) -- (2.5,-.8);

\draw[
  line width=12,
  white
]
  (-1.3,-1.4) --
  (0,-1.4)
  .. controls 
     (.5,-1.4) and (1,-.5) ..
  (1,0)
  .. controls 
     (1,.5) and (.5,1.4) ..
  (0,1.4)  
  .. controls
    (-.5,1.4) and (-1,.5)
  .. (-1,0)
  .. controls
    (-1,-.4) and (-.5, -.8) ..
  (0,-.8);

\draw[
  line width=3,
]
  (-1.3,-1.4) --
  (0,-1.4)
  .. controls 
     (.5,-1.4) and (1,-.5) ..
  (1,0)
  .. controls 
     (1,.5) and (.5,1) ..
  (0,1)  
  .. controls
    (-.5,1) and (-1,.5)
  .. (-1,0)
  .. controls
    (-1,-.4) and (-.5, -.8) ..
  (0,-.8);

\draw[
  line width=12,
  white
]
 (2.5,-.8)
 .. controls 
   (3,-.8) and (3.5,-.4) ..
 (3.5,0)
 .. controls
   (3.5,.5) and (3,1) ..
 (2.5,1)
  .. controls
    (2,1) and (1.5,.5) ..
  (1.5,0)
  .. controls
    (1.5,-.5) and (2,-1.4) ..
  (2.5,-1.4)
  --
  (3.5,-1.4);
\draw[
  line width=3,
]
 (2.5,-.8)
 .. controls 
   (3,-.8) and (3.5,-.4) ..
 (3.5,0)
 .. controls
   (3.5,.5) and (3,1) ..
 (2.5,1)
  .. controls
    (2,1) and (1.5,.5) ..
  (1.5,0)
  .. controls
    (1.5,-.5) and (2,-1.4) ..
  (2.5,-1.4)
  --
  (3.8,-1.4);

\end{tikzpicture} 
 }
 \\[-12pt]
 &
 \\
 \hline
 &
 \\[-12pt]
{\small  2nd}
 &
 \hspace{-.2cm}
 \adjustbox{
   scale=.3,
   raise=-.6cm
  }{
 \begin{tikzpicture}
   \draw[
     line width=3
   ]
   (-1.5,2)
     .. controls (-1.5,1.3) and (0.2,1.2) ..
     (0.2,0)
     .. controls 
     (0.2,-1.2) and (-1.5,-1.3) ..
   (-1.5,-2);
\begin{scope}[
  shift={(-.5,0)},
  xscale=-1,
]
   \draw[
     line width=12,
     white
   ]
   (-1.5,2)
     .. controls (-1.5,1.3) and (0.2,1.2) ..
     (0.2,0)
     .. controls 
     (0.2,-1.2) and (-1.5,-1.3) ..
   (-1.5,-2);
\end{scope}
\begin{scope}[
  shift={(-.5,0)},
  xscale=-1,
]
   \draw[
     line width=3
   ]
   (-1.5,2)
     .. controls (-1.5,1.3) and (0.2,1.2) ..
     (0.2,0)
     .. controls 
     (0.2,-1.2) and (-1.5,-1.3) ..
   (-1.5,-2);
\end{scope}
 \end{tikzpicture}
 }
 \hspace{+.3cm}
 \adjustbox{scale=.4}{
 \begin{tikzpicture}[decoration=snake]
   \draw[decorate, ->]
    (-.01,-.5) -- (.54,-.5);
   \draw[decorate, ->]
    (0.01,-.5) -- (-0.54,-.5);
 \end{tikzpicture}
 }
 \hspace{.25cm}
 \adjustbox{
   scale=.3,
   raise=-.6cm
 }{
 \begin{tikzpicture}
   \draw[
     line width=3
   ]
   (-.7,2) -- (-.7,-2);
   \draw[
     line width=3
   ]
   (+.7,2) -- (+.7,-2);
 \end{tikzpicture}
 }
 \hspace{.3cm}
 \adjustbox{scale=.4}{
 \begin{tikzpicture}[decoration=snake]
   \draw[decorate, ->]
    (-.01,0) -- (0.54,0);
   \draw[decorate, ->]
    (0.01,0) -- (-0.54,0);
 \end{tikzpicture}
 }
 \hspace{+.3cm}
 \adjustbox{
   scale=.3,
   raise=-.7cm,
   rotate=180,
 }{
 \begin{tikzpicture}
   \draw[
     line width=3
   ]
   (-1.5,2)
     .. controls (-1.5,1.3) and (0.2,1.2) ..
     (0.2,0)
     .. controls 
     (0.2,-1.2) and (-1.5,-1.3) ..
   (-1.5,-2);
\begin{scope}[
  shift={(-.5,0)},
  xscale=-1,
]
   \draw[
     line width=12,
     white
   ]
   (-1.5,2)
     .. controls (-1.5,1.3) and (0.2,1.2) ..
     (0.2,0)
     .. controls 
     (0.2,-1.2) and (-1.5,-1.3) ..
   (-1.5,-2);
\end{scope}
\begin{scope}[
  shift={(-.5,0)},
  xscale=-1,
]
   \draw[
     line width=3
   ]
   (-1.5,2)
     .. controls (-1.5,1.3) and (0.2,1.2) ..
     (0.2,0)
     .. controls 
     (0.2,-1.2) and (-1.5,-1.3) ..
   (-1.5,-2);
\end{scope}
 \end{tikzpicture}
 }
 \\[-12pt]
 &
 \\
 \hline
 &
 \\[-12pt]
 \adjustbox{
   raise=.3cm
 }{\small 
   3rd 
 }
 &
 \adjustbox{
   raise=-.8cm,
   scale=.3
 }{
 \begin{tikzpicture}
  \draw[
    line width=3, 
    shift={(-3,0)},
    xscale=-1
  ]
  (.5,2) 
  .. controls (.5,0) and (-3.5,0)
  ..
  (-3.5,-2);
   \draw[
     line width=12, white,
     shift={(-3,0)},
     xscale=-1,
   ]
   (-1.5,2)
     .. controls (-1.5,1.3) and (0.2,1.2) ..
     (0.2,0)
     .. controls 
     (0.2,-1.2) and (-1.5,-1.3) ..
   (-1.5,-2);
   \draw[
     line width=3,
     shift={(-3,0)},
     xscale=-1
   ]
   (-1.5,2)
     .. controls (-1.5,1.3) and (0.2,1.2) ..
     (0.2,0)
     .. controls 
     (0.2,-1.2) and (-1.5,-1.3) ..
   (-1.5,-2);
  \draw[line width=12,white]
  (.5,2) 
  .. controls (.5,0) and (-3.5,0)
  ..
  (-3.5,-2);
  \draw[line width=3]
  (.5,2) 
  .. controls (.5,0) and (-3.5,0)
  ..
  (-3.5,-2);
  \end{tikzpicture}
  }
  \hspace{.3cm}
  \adjustbox{
    scale=.3,
    raise=.4cm
  }{
 \begin{tikzpicture}[
   decoration=snake
  ]
   \draw[decorate, ->]
    (-.01,0) -- (0.54,0);
   \draw[decorate, ->]
    (0.01,0) -- (-0.54,0);
 \end{tikzpicture}
 }
  \hspace{.3cm}
 \adjustbox{
   raise=-.8cm,
   scale=.3
  }{
 \begin{tikzpicture}
  \draw[
    line width=3, 
    shift={(-3,0)},
    xscale=-1
  ]
  (.5,2) 
  .. controls (.5,0) and (-3.5,0)
  ..
  (-3.5,-2);
   \draw[
     line width=12, 
     white
   ]
   (-1.5,2)
     .. controls (-1.5,1.3) and (0.2,1.2) ..
     (0.2,0)
     .. controls 
     (0.2,-1.2) and (-1.5,-1.3) ..
   (-1.5,-2);
   \draw[
     line width=3
   ]
   (-1.5,2)
     .. controls (-1.5,1.3) and (0.2,1.2) ..
     (0.2,0)
     .. controls 
     (0.2,-1.2) and (-1.5,-1.3) ..
   (-1.5,-2);
  \draw[
    line width=12,
    white
  ]
  (.5,2) 
  .. controls (.5,0) and (-3.5,0)
  ..
  (-3.5,-2);
  \draw[line width=3]
  (.5,2) 
  .. controls (.5,0) and (-3.5,0)
  ..
  (-3.5,-2);
  \end{tikzpicture}
  }
  \\[-12pt]
  &
 \\
 \hline
\end{tabular}

\smallskip 
\begin{definition}[\bf Crossing-, Linking-, Framing-numbers and Writhe]
\label{LinkingNumber}
$\,$
\begin{itemize}
\item[{\bf (i)}] Any crossing in a framed oriented link diagram $L$ (Def. \ref{FramedOrientedLinks}) locally is either of the following (up to local orientation-preserving diffeomorphism), which we assign the {\it crossing number} $\pm 1$, respectively, as shown:

\vspace{-.1cm}
\begin{equation}
\label{TotalLinkingNumber}
\#\left(\!
\adjustbox{raise=-.63cm, scale=.5}{
\begin{tikzpicture}

\draw[
  line width=1.2,
  -Latex
]
  (-.7,-.7) -- (.7,.7);
\draw[
  line width=7,
  white
]
  (+.7,-.7) -- (-.7,.7);
\draw[
  line width=1.2,
  -Latex
]
  (+.7,-.7) -- (-.7,.7);
 
\end{tikzpicture}
}
\!\right)
\;:=\;
+1
\,,
\hspace{1cm}
\#\left(\!
\adjustbox{raise=-.63cm, scale=0.5}{
\begin{tikzpicture}[xscale=-1]

\draw[
  line width=1.2,
  -Latex
]
  (-.7,-.7) -- (.7,.7);
\draw[
  line width=7,
  white
]
  (+.7,-.7) -- (-.7,.7);
\draw[
  line width=1.2,
  -Latex
]
  (+.7,-.7) -- (-.7,.7);
 
\end{tikzpicture}
}
\!\right)
\;\;
:=
\;\;
-1
\,.
\end{equation}

\item[{(\bf ii)}] For $(L_i)_{i =1}^N$ the connected components of $L$, the {\it linking number} $\mathrm{lnk}(L_i, L_j)$, defined  for $i \neq j$, is {\it half} the sum of crossing numbers between $L_i$ and $L_j$ (cf. \cite[p. 7]{Ohtsuki01}).

\item[{(\bf iii)}] The {\it framing number} $\mathrm{fr}(L_i)$ is the sum of crossing numbers of $L_i$ with itself.

\item[{\bf (iv)}] The sum $\#L$ of the crossing numbers of all crossings of $L$ 
-- hence the {\it total crossing number}, also called the {\it writhe} \cite[p. 152]{Adams94}\cite[p. 523]{Ohtsuki01} \footnote{
  To beware that some authors, especially in the physics literature, use the term ``writhe'' only in reference to connected (components of) links, where it is the framing number. 
  Hence a more unambiguous term for ``writhe'' in our context is ``total crossing number''.
} --
is hence the sum of all the framing and linking numbers:

\begin{equation}
  \label{TotalCrossingNumber}
  \mathllap{
    \scalebox{.7}{
      \color{darkblue}
      \bf
      \def\arraystretch{.9}
      \begin{tabular}{c}
        Total crossing number
        \\
        {\color{black} / } writhe
      \end{tabular}
    }
    \;\;
  }
  \#(L)
  \;\;:=\;\;
  \underset{
    \mathclap{
     \substack{
        c \in 
        \\
        \mathrm{crssngs}(L)
      }
    }
  }{\sum}
  \;
  \#(c)
  \;\;=\;\;
  \sum_{i}
  \mathrm{frm}(L_i)
  \,+\,
  \sum_{i \neq j} \mathrm{lnk}(L_i, L_j)
  \,.
\end{equation}
\end{itemize}
\end{definition}

\newpage 
\begin{example}[\bf Link invariants]
The framing and linking numbers (Def. \ref{LinkingNumber}) are individually invariants of framed links, in that they depend only on the equivalence class of 
a framed oriented link diagram. The following moves show how successive self-crossings of opposite crossing number 
cancel out under the Reidemeister moves, (Fig. \ref{FigureReidemeister}):

\vspace{-2mm}
\begin{equation}
\adjustbox{
  scale=.95
}{
  \def\tabcolsep{0pt}
  \begin{tabular}{cccccccccc}
\adjustbox{
  scale=.7,
  raise=-1.8cm
}{
\begin{tikzpicture}[
  rotate=90
]

\draw[
  line width=1.2,
  -Latex
]
  (-1.3,-1) --
  (0,-1)
  .. controls 
     (.5,-1) and (1,-.5) ..
  (1,0)
  .. controls 
     (1,.5) and (.5,1) ..
  (0,1)  
  .. controls
    (-.5,1) and (-1,.5)
  .. (-1,0)
  .. controls
    (-1,-.4) and (-.5, -.8) ..
  (0,-.8);
\draw[
  line width=4.5,
  white
]
 (.2,-.8) -- (1,-.8);

\begin{scope}[
  shift={(0,-1.6)},
  yscale=-1
]
\draw[
  line width=1.2,
]
 (0,-.8) -- (2.5,-.8);

\draw[
  line width=7,
  white
]
 (2.5,-.8)
 .. controls 
   (3,-.8) and (3.5,-.4) ..
 (3.5,0)
 .. controls
   (3.5,.5) and (3,1) ..
 (2.5,1)
  .. controls
    (2,1) and (1.5,.5) ..
  (1.5,0)
  .. controls
    (1.5,-.5) and (2,-1) ..
  (2.5,-1)
  --
  (3.8,-1);
\draw[
  -Latex,
  line width=1.2,
]
 (2.5,-.8)
 .. controls 
   (3,-.8) and (3.5,-.4) ..
 (3.5,0)
 .. controls
   (3.5,.5) and (3,1) ..
 (2.5,1)
  .. controls
    (2,1) and (1.5,.5) ..
  (1.5,0)
  .. controls
    (1.5,-.5) and (2,-1) ..
  (2.5,-1)
  --
  (3.8,-1);
\end{scope}

\node
  at (.35,-.48) {
    \scalebox{.7}{$-$}
  };
\node
  at (2.1,-1.05) {
    \scalebox{.7}{$+$}
  };

\end{tikzpicture}
}
  &
  \hspace{-5pt}
\adjustbox{scale=.8}{
 \begin{tikzpicture}[
   decoration=snake 
 ]
   \draw[decorate, ->]
    (-.01,0) -- (0.54,0);
   \draw[decorate, ->]
    (0.01,0) -- (-0.54,0);
 \end{tikzpicture}
}
&
\adjustbox{
  scale=.7,
  raise=-1.8cm
}{
\begin{tikzpicture}[
   rotate=90
]
  \draw[
    line width=1.2,
    -Latex
 ]
    (-2.9,0) -- (2.2,0);
\end{tikzpicture}
}
&
\adjustbox{scale=.8}{
 \begin{tikzpicture}[
   decoration=snake 
 ]
   \draw[decorate, ->]
    (-.01,0) -- (0.54,0);
   \draw[decorate, ->]
    (0.01,0) -- (-0.54,0);
 \end{tikzpicture}
}
&
\adjustbox{
  scale=.7,
  raise=-1.8cm
}{
\begin{tikzpicture}[
  rotate=90
]

\draw[
  line width=1.2,
  -Latex
]
  (-1.3,-1) --
  (0,-1)
  .. controls 
     (.5,-1) and (1,-.5) ..
  (1,0)
  .. controls 
     (1,.5) and (.5,1) ..
  (0,1)  
  .. controls
    (-.5,1) and (-1,.5)
  .. (-1,0)
  .. controls
    (-1,-.4) and (-.5, -.8) ..
  (0,-.8);
\draw[
  -Latex,
  line width=1.2,
]
 (2.5,-.8)
 .. controls 
   (3,-.8) and (3.5,-.4) ..
 (3.5,0)
 .. controls
   (3.5,.5) and (3,1) ..
 (2.5,1)
  .. controls
    (2,1) and (1.5,.5) ..
  (1.5,0)
  .. controls
    (1.5,-.5) and (2,-1) ..
  (2.5,-1)
  --
  (3.8,-1);
\draw[
  line width=4.5,
  white
]
 (0,-.8) -- (2.5,-.8);
\draw[
  line width=1.2,
]
 (0,-.8) -- (2.5,-.8);

\node 
  at (.4,-.64) {
    \scalebox{.7}{$-$}
  };
\node 
  at (2.1,-.62) {
    \scalebox{.7}{$+$}
  };

\end{tikzpicture}
}
\hspace{25pt}
&
\hspace{15pt}
\adjustbox{
  scale=.7,
  raise=-1.8cm
}{
\begin{tikzpicture}[
  yscale=-1,
  rotate=-90
]

\draw[
  line width=1.2,
  -Latex
]
  (-1.3,-1) --
  (0,-1)
  .. controls 
     (.5,-1) and (1,-.5) ..
  (1,0)
  .. controls 
     (1,.5) and (.5,1) ..
  (0,1)  
  .. controls
    (-.5,1) and (-1,.5)
  .. (-1,0)
  .. controls
    (-1,-.4) and (-.5, -.8) ..
  (0,-.8);
\draw[
  line width=4.5,
  white
]
 (.2,-.8) -- (1,-.8);

\begin{scope}[
  shift={(0,-1.6)},
  yscale=-1
]
\draw[
  line width=1.2,
]
 (0,-.8) -- (2.5,-.8);

\draw[
  line width=7,
  white
]
 (2.5,-.8)
 .. controls 
   (3,-.8) and (3.5,-.4) ..
 (3.5,0)
 .. controls
   (3.5,.5) and (3,1) ..
 (2.5,1)
  .. controls
    (2,1) and (1.5,.5) ..
  (1.5,0)
  .. controls
    (1.5,-.5) and (2,-1) ..
  (2.5,-1)
  --
  (3.8,-1);
\draw[
  -Latex,
  line width=1.2,
]
 (2.5,-.8)
 .. controls 
   (3,-.8) and (3.5,-.4) ..
 (3.5,0)
 .. controls
   (3.5,.5) and (3,1) ..
 (2.5,1)
  .. controls
    (2,1) and (1.5,.5) ..
  (1.5,0)
  .. controls
    (1.5,-.5) and (2,-1) ..
  (2.5,-1)
  --
  (3.8,-1);
\end{scope}

\node
  at (.35,-.6) {
    \scalebox{.7}{$+$}
  };
\node
  at (2.1,-1.05) {
    \scalebox{.7}{$-$}
  };

\end{tikzpicture}
}
\hspace{-10pt}
&
\adjustbox{scale=.8}{
 \begin{tikzpicture}[
   decoration=snake 
 ]
   \draw[decorate, ->]
    (-.01,0) -- (0.54,0);
   \draw[decorate, ->]
    (0.01,0) -- (-0.54,0);
 \end{tikzpicture}
}
&
\adjustbox{
  scale=.7,
  raise=-1.8cm
}{
\begin{tikzpicture}[
   rotate=90
]
  \draw[
    line width=1.2,
    -Latex
 ]
    (-2.9,0) -- (2.2,0);
\end{tikzpicture}
}
&
\adjustbox{scale=.8}{
 \begin{tikzpicture}[
   decoration=snake 
 ]
   \draw[decorate, ->]
    (-.01,0) -- (0.54,0);
   \draw[decorate, ->]
    (0.01,0) -- (-0.54,0);
 \end{tikzpicture}
}
&
\adjustbox{
  scale=.7,
  raise=-1.8cm
}{
\begin{tikzpicture}[
  rotate=90
]

\draw[
  line width=1.2,
]
 (0,-.8) -- (2.5,-.8);

\draw[
  line width=4.5,
  white
]
  (-1.3,-1) --
  (0,-1)
  .. controls 
     (.5,-1) and (1,-.5) ..
  (1,0)
  .. controls 
     (1,.5) and (.5,1) ..
  (0,1)  
  .. controls
    (-.5,1) and (-1,.5)
  .. (-1,0)
  .. controls
    (-1,-.4) and (-.5, -.8) ..
  (0,-.8);

\draw[
  line width=1.2,
  -Latex
]
  (-1.3,-1) --
  (0,-1)
  .. controls 
     (.5,-1) and (1,-.5) ..
  (1,0)
  .. controls 
     (1,.5) and (.5,1) ..
  (0,1)  
  .. controls
    (-.5,1) and (-1,.5)
  .. (-1,0)
  .. controls
    (-1,-.4) and (-.5, -.8) ..
  (0,-.8);

\draw[
  line width=4.5,
  white
]
 (2.5,-.8)
 .. controls 
   (3,-.8) and (3.5,-.4) ..
 (3.5,0)
 .. controls
   (3.5,.5) and (3,1) ..
 (2.5,1)
  .. controls
    (2,1) and (1.5,.5) ..
  (1.5,0)
  .. controls
    (1.5,-.5) and (2,-1) ..
  (2.5,-1)
  --
  (3.5,-1);
\draw[
  -Latex,
  line width=1.2,
]
 (2.5,-.8)
 .. controls 
   (3,-.8) and (3.5,-.4) ..
 (3.5,0)
 .. controls
   (3.5,.5) and (3,1) ..
 (2.5,1)
  .. controls
    (2,1) and (1.5,.5) ..
  (1.5,0)
  .. controls
    (1.5,-.5) and (2,-1) ..
  (2.5,-1)
  --
  (3.8,-1);

\node 
  at (.4,-.62) {
    \scalebox{.7}{$+$}
  };
\node 
  at (2.1,-.6) {
    \scalebox{.7}{$-$}
  };

\end{tikzpicture}
}
  \end{tabular}
}
\end{equation}
\end{example}

\vspace{.2cm}

\begin{definition}[\bf Signed interval loops as framed oriented links]
\label{ChargedStringLoopsAsFramedOrientedLinks}
  From a framed oriented link diagram (Def. \ref{FramedOrientedLinks}), we obtain a based loop in the configuration space of signed intervals in $\mathbb{R}^2$ (Def. \ref{ConfigurationsOfChargedStrings}) by thickening the underlying link to a string worldsheet as illustrated in Figure \ref{FigureStringLoopsAsLinks} below:
  \begin{equation}
    \label{MappingFramedLinksToStringLoops}
    \begin{tikzcd}
      \mathrm{FrmdOrntdLnkDgrm}
      \ar[
        r,
      ]
      &
      \Omega
      \,
      \mathrm{Conf}^I_0(\mathbb{R}^2)
      \,.
    \end{tikzcd}
  \end{equation}
To note that this is well-defined due to our condition in Def. \ref{FramedOrientedLinks} that link diagrams have well-separated crossings and no straight horizontal segments:
These conditions imply that the intersection of the link diagram with any horizontal line $\mathbb{R}^1 \hookrightarrow \mathbb{R}^2$ is a finite set of points, and that as we move the horizontal line vertically, these points (i) move, (ii) cross, (iii) (e)merge over well-separated intervals, thus translating to the corresponding string worldsheets, where the orientation of the link determines the charges on the endpoints of these strings.
\end{definition}

\hspace{0cm}
\def\tabcolsep{5pt}
\begin{tabular}{p{5.3cm}}

\vspace{.3cm}
\footnotesize
{\bf Figure \figurenumber: 
\label{FigureStringLoopsAsLinks}
Framed oriented links as loops of interval configurations.} Here it is the stringy nature of the loops of configurations on the right (via Def. \ref{ConfigurationsOfChargedStrings}) that reflects the ``blackboard framing'' of the link diagrams on the left. This framing would be absent for configurations of charged {\it points} as on the left of Fig. \ref{RelationsOfSignedConfigurations}.
\end{tabular}
\def\tabcolsep{2pt}
\adjustbox{
  raise=-.5cm,
  margin=-5pt,
  scale=.6,
  fbox
}{
\begin{tabular}{ccc}

\adjustbox{raise=-1.5cm}{
\begin{tikzpicture}
  \draw[
    line width=1.4,
    -Latex
  ]
    (0:1.6) arc (0:180:1.6);
  \draw[
    line width=1.4,
    -Latex
  ]
    (180:1.6) arc (180:360:1.6);
\end{tikzpicture}
}
&
\scalebox{1.6}{
$\mapsto$
}
&
\adjustbox{
  scale=.5,
  raise=-2cm
}{
\begin{tikzpicture}
 \begin{scope}
   \halfcircle{0,0};
 \end{scope}
 \begin{scope}[yscale=-1]
   \halfcircle{0,0};
 \end{scope} 
\end{tikzpicture}
}
\\
\adjustbox{raise=-1.5cm}{
\begin{tikzpicture}
  \draw[
    line width=1.4,
    -Latex
  ]
    (0:1.6) arc (0:180:1.6);

\begin{scope}[shift={(2,0)}]
  \draw[
    line width=10,
    white
  ]
    (180:1.6) arc (180:0:1.6);
  \draw[
    line width=1.4,
    -Latex,
  ]
    (180:1.6) arc (180:0:1.6);
  \draw[
    line width=1.4,
    -Latex,
  ]
    (0:1.6) arc (0:-180:1.6);
\end{scope}
  
  \draw[
    line width=10,
    white
  ]
    (180:1.6) arc (180:360:1.6);
  \draw[
    line width=1.4,
    -Latex
  ]
    (180:1.6) arc (180:360:1.6);

\end{tikzpicture}
}
&
\scalebox{1.6}{
$\mapsto$
}
&
\adjustbox{
  scale=.5,
  raise=-1.7cm
}{
\begin{tikzpicture}
\begin{scope}[yscale=-1]
  \halfcircle{0,0};
\end{scope}
\begin{scope}[
  shift={(3.4,.2)},
  xscale=-1
]
\halfcircle{0}{0};
\end{scope}
\begin{scope}[shift={(0,0)}]
\halfcircleover{0}{0};
\end{scope}
\begin{scope}[
  yscale=-1,
  shift={(3.4,-.2)},
  xscale=-1
]
\halfcircleover{0}{0};
\end{scope}
  
\end{tikzpicture}
}
\\
\adjustbox{raise=-1.5cm}{
\begin{tikzpicture}
  \draw[
    line width=1.4,
    -Latex
  ]
    (0:1.6) arc (0:180:1.6);

\begin{scope}[shift={(3.2,0)}]
  \draw[
    line width=12,
    white
  ]
    (180:1.6) arc (180:0:1.6);
  \draw[
    line width=1.4,
    -Latex,
  ]
    (180:1.6) arc (180:0:1.6);
  \draw[
    line width=1.4,
    -Latex,
  ]
    (0:1.6) arc (0:-180:1.6);
\end{scope}
  
  \draw[
    line width=12,
    white
  ]
    (180:1.6) arc (180:360:1.6);
  \draw[
    line width=1.4,
    -Latex
  ]
    (180:1.6) arc (180:360:1.6);

\end{tikzpicture}
}
&
\scalebox{1.6}{
$\mapsto$
}
&
\adjustbox{
  scale=.5,
  raise=-1.5cm
}{
\begin{tikzpicture}

\begin{scope}[yscale=-1]
  \halfcircle{0}{0}
\end{scope}
\begin{scope}[shift={(6,0)},scale=-1]
  \halfcircleover{0}{0}
\end{scope}
\begin{scope}[shift={(6,0)}, xscale=-1]
\halfcircle{0,0}
\end{scope}
\begin{scope}
\halfcircleover{0}{0}
\end{scope}

\end{tikzpicture}
}
\\
\adjustbox{
  raise=-3.5cm
}{
\begin{tikzpicture}[
  rotate=-90
]
  \draw[
    line width=1.4,
    -Latex
  ]
    (0:1.6) arc (0:180:1.6);

\begin{scope}[shift={(3.2,0)}]
  \draw[
    line width=12,
    white
  ]
    (180:1.6) arc (180:0:1.6);
  \draw[
    line width=1.4,
    -Latex,
  ]
    (180:1.6) arc (180:0:1.6);
  \draw[
    line width=1.4,
    -Latex,
  ]
    (0:1.6) arc (0:-180:1.6);
\end{scope}
  
  \draw[
    line width=12,
    white
  ]
    (180:1.6) arc (180:360:1.6);
  \draw[
    line width=1.4,
    -Latex
  ]
    (180:1.6) arc (180:360:1.6);

\end{tikzpicture}
}

&
$\longmapsto$
&

\adjustbox{
  raise=-7cm,
  scale=.5
}{
\begin{tikzpicture}

\begin{scope}[
  shift={(.1,6)},
  yscale=-1
]
\halfcircle
\end{scope}

\begin{scope}[
  shift={(0,6)},
]
\clip (+4,-4) rectangle 
      (+0,+0);
\halfcircleAdjusted
\end{scope}

\begin{scope}[
  shift={(0,0)},
  yscale=-1,
  xscale=-1
]
\clip (+4,-4) rectangle 
      (+0,+0);
\halfcircleAdjusted
\end{scope}

\begin{scope}[xscale=-1]
\halfcircle
\end{scope}
\begin{scope}[
  xscale=-1,
  yscale=-1
]
\clip (-4,-4) rectangle 
      (+0,+0);
\halfcircleoverAdjusted
\end{scope}

\begin{scope}[
  shift={(.2,6)},
]
\clip (-4,-4) rectangle 
      (-.1,+0);
\halfcircleoverAdjusted
\end{scope}

\end{tikzpicture}
}
\\[-14pt]
&&
\end{tabular}
}

\vspace{.3cm}

Next we want to show that this construction (in Def. \ref{ChargedStringLoopsAsFramedOrientedLinks} of signed interval loops from framed oriented link diagrams) descends to equivalence classes on both sides. For this we need a
 more combinatorial description of equivalence of link diagrams. This is provided by {\it functorial knot theory} \cite{Yetter01} via {\it Shum's theorem} (Prop. \ref{ShumTheorem}, Fig. \ref{FigureRibbonCategory}).

\vspace{.2cm}

\hspace{1.8cm}
\def\tabcolsep{3pt}
\adjustbox{
  margin=0pt,
  scale=.8,
  fbox
}{
\begin{tabular}{c||c||c}
  &&
  \\[-15pt]
  \bf
  \def\arraystretch{.8}
  \begin{tabular}{c}
    Ribbon
    \\
    category
    \\
    axiom
  \end{tabular}
  &
  \begin{minipage}{8.4cm}
    \footnotesize
    {\bf Figure
    \figurenumber:
    \label{FigureRibbonCategory}
    Ribbon category presentation} of framed oriented links according to Shum's theorem (recalled as Prop. \ref{ShumTheorem}). For the slide moves there is also the corresponding mirrored version, which we are not showing just in order to save space.
  \end{minipage}
  &
  \bf
  \def\arraystretch{.8}
  \begin{tabular}{c}
    Framed oriented
    \\
    link diagram
    \\
    move
  \end{tabular}
  \\[-12pt]
  &&
  \\
  \hline
  &&
  \\[-12pt]
  \adjustbox{
    rotate=90,
    raise=-1.8cm
  }{
    \bf Braiding isomorphy
  }
  &
  \adjustbox{
    raise=-2cm
  }{
  \begin{tikzpicture}
    \draw[
      line width=1.2,
      -Latex
    ]
    (+1,-1)
    .. controls (+1,0) and (-1,0)
    ..
    (-1,1);
    \draw[
      color=white,
      line width=8,
    ]
    (-1,-1)
    .. controls (-1,0) and (+1,0)
    ..
    (+1,1);
    \draw[
      line width=1.2,
      -Latex
    ]
    (-1,-1)
    .. controls (-1,0) and (+1,0)
    ..
    (+1,1);

    \begin{scope}[
      shift={(0,1.9)},
      xscale=-1
    ]
    \draw[
      line width=1.2,
    ]
    (+1,-1)
    .. controls (+1,0) and (-1,0)
    ..
    (-1,1);
    \draw[
      color=white,
      line width=8,
    ]
    (-1,-1)
    .. controls (-1,0) and (+1,0)
    ..
    (+1,1);
    \draw[
      line width=1.2,
    ]
    (-1,-1)
    .. controls (-1,0) and (+1,0)
    ..
    (+1,1);
    \end{scope}
  \end{tikzpicture}
  }
\hspace{.1cm}
\begin{tikzpicture}[decoration=snake]
  \draw[decorate,->] (0,0) -- (+.55,0);
  \draw[decorate,->] (0,0) -- (-.55,0);
\end{tikzpicture}
\hspace{.1cm}
\adjustbox{
  raise=-2cm
}{
\begin{tikzpicture}[
   line width=1.2,
]
\draw[
   -Latex
]
  (0,-2) -- (0,.1);
\draw
  (0,0) -- (0,+2);
\end{tikzpicture}
}
\hspace{.1cm}
\begin{tikzpicture}[decoration=snake]
  \draw[decorate,->] (0,0) -- (+.55,0);
  \draw[decorate,->] (0,0) -- (-.55,0);
\end{tikzpicture}
\hspace{.1cm}
  \adjustbox{
    raise=-2cm
  }{
  \begin{tikzpicture}[
    xscale=-1
  ]
    \draw[
      line width=1.2,
      -Latex
    ]
    (+1,-1)
    .. controls (+1,0) and (-1,0)
    ..
    (-1,1);
    \draw[
      color=white,
      line width=8,
    ]
    (-1,-1)
    .. controls (-1,0) and (+1,0)
    ..
    (+1,1);
    \draw[
      line width=1.2,
      -Latex
    ]
    (-1,-1)
    .. controls (-1,0) and (+1,0)
    ..
    (+1,1);

    \begin{scope}[
      shift={(0,1.9)},
      xscale=-1
    ]
    \draw[
      line width=1.2,
    ]
    (+1,-1)
    .. controls (+1,0) and (-1,0)
    ..
    (-1,1);
    \draw[
      color=white,
      line width=8,
    ]
    (-1,-1)
    .. controls (-1,0) and (+1,0)
    ..
    (+1,1);
    \draw[
      line width=1.2,
    ]
    (-1,-1)
    .. controls (-1,0) and (+1,0)
    ..
    (+1,1);
    \end{scope}
  \end{tikzpicture}
  }
  &
  \adjustbox{
    rotate=-90,
    raise=+2.2cm
  }{
    \bf
    \hspace{-4pt}
    2nd Reidemeister move\;
  }
  \\
  &&
  \\[-16pt]
  \hline
  &&
  \\[-24pt]
  \adjustbox{
    rotate=90,
    raise=-1.6cm
  }{
    \bf
    Braiding naturality
  }
  &
\adjustbox{scale=.85}{
\adjustbox{
  raise=-2.5cm
}
{
\begin{tikzpicture}

\begin{scope}[
  xscale=-1
]
  \begin{scope}[
    shift={(-.35,0)}
  ]
  \draw[
    line width=1.2,
    -Latex
  ] 
   (-1.5,-2.2)
   .. controls 
     (-1.5, +.2) and (+1.5,+.1)
   ..
   (+1.5,+2.2);
  \end{scope}
  \begin{scope}[
    shift={(+.35,0)}
  ]
  \draw
    (+.98,1.2) node {
      \scalebox{1.8}{...}
    };
  \draw[
    line width=1.2,
    -Latex
  ] 
   (-1.5,-2.2)
   .. controls 
     (-1.5, -.2) and (+1.5,-.2)
   ..
   (+1.5,+2.2);
  \end{scope}
\end{scope}

  \draw[
    draw=white,
    line width=34
  ]
   (-1.5,-2.2) .. controls
   (-1.5,0) and (+1.5,0) ..
   (+1.5,+2.2);

  \begin{scope}[
    shift={(-.35,0)}
  ]
  \draw[
    line width=1.2,
    -Latex
  ] 
   (-1.5,-2.2)
   .. controls 
     (-1.5, +.2) and (+1.5,+.1)
   ..
   (+1.5,+2.2);
  \end{scope}
  \begin{scope}[
    shift={(+.35,0)}
  ]
  \draw
    (+.98,1.2) node {
      \scalebox{1.8}{...}
    };
  \draw[
    line width=1.2,
    -Latex
  ] 
   (-1.5,-2.2)
   .. controls 
     (-1.5, -.2) and (+1.5,-.2)
   ..
   (+1.5,+2.2);
  \end{scope}

  \begin{scope}[
    shift={(0,.3)}  
  ]
  \draw[
    fill=white,
    rounded corners
  ]
    (-1.95,-1.2) rectangle 
    (-.8,-2);
    \node
      at (-1.4,-1.6)
      {A};
  \end{scope}
  \begin{scope}[
    xscale=-1,
    shift={(0,.3)}
  ]
  \draw[
    fill=white,
    rounded corners,
  ]
    (-1.95,-1.2) rectangle 
    (-.8,-2);
    \node
      at (-1.4,-1.6)
      {B};
  \end{scope}
\end{tikzpicture}
}
\hspace{.1cm}
\begin{tikzpicture}[decoration=snake]
  \draw[decorate,->] (0,0) -- (+.55,0);
  \draw[decorate,->] (0,0) -- (-.55,0);
\end{tikzpicture}
\hspace{.1cm}
\adjustbox{
  raise=-2.5cm
}
{
\begin{tikzpicture}

\begin{scope}[
  xscale=-1
]
  \begin{scope}[
    shift={(-.35,0)}
  ]
  \draw[
    line width=1.2,
    -Latex
  ] 
   (-1.5,-2.2)
   .. controls 
     (-1.5, +.2) and (+1.5,+.1)
   ..
   (+1.5,+2.2);
  \end{scope}
  \begin{scope}[
    shift={(+.35,0)}
  ]
  \draw
    (+.98,1.2) node {
      \scalebox{1.8}{...}
    };
  \draw[
    line width=1.2,
    -Latex
  ] 
   (-1.5,-2.2)
   .. controls 
     (-1.5, -.2) and (+1.5,-.2)
   ..
   (+1.5,+2.2);
  \end{scope}
\end{scope}

  \draw[
    draw=white,
    line width=34
  ]
   (-1.5,-2.2) .. controls
   (-1.5,0) and (+1.5,0) ..
   (+1.5,+2.2);
 
  \begin{scope}[
    shift={(-.35,0)}
  ]
  \draw[
    line width=1.2,
    -Latex
  ] 
   (-1.5,-2.2)
   .. controls 
     (-1.5, +.2) and (+1.5,+.1)
   ..
   (+1.5,+2.2);
  \end{scope}
  \begin{scope}[
    shift={(+.35,0)}
  ]
  \draw
    (+.98,1.2) node {
      \scalebox{1.8}{...}
    };
  \draw[
    line width=1.2,
    -Latex
  ] 
   (-1.5,-2.2)
   .. controls 
     (-1.5, -.2) and (+1.5,-.2)
   ..
   (+1.5,+2.2);
  \end{scope}

  \begin{scope}[
    yscale=-1,
    shift={(0,.3)}
  ]
  \draw[
    fill=white,
    rounded corners
  ]
    (-1.95,-1.2) rectangle 
    (-.8,-2);
    \node
      at (-1.4,-1.6)
      {B};
  \end{scope}
  \begin{scope}[
    xscale=-1,
    yscale=-1,
    shift={(0,.3)}
  ]
  \draw[
    fill=white,
    rounded corners
  ]
    (-1.95,-1.2) rectangle 
    (-.8,-2);
    \node
      at (-1.4,-1.6)
      {A};
  \end{scope}
\end{tikzpicture}
}  
}
  \\
  &&
  \\[-22pt]
  \cline{1-2}
  &&
  \adjustbox{
    rotate=-90,
    raise=.2cm
  }{
    \clap{
      \bf
      sliding moves
      $\Rightarrow$
      3rd Reidemeister move
    }
  }
  \\[-14pt]
  \adjustbox{
    rotate=90,
    raise=-1.9cm
  }{
    \bf
    Braiding hexagon law
  }
  &
  \adjustbox{
    raise=-2cm
  }{
    \begin{tikzpicture}
      \draw[
        line width=1.2,
        -Latex
      ]
      (+1,-2) 
        .. controls 
        (+1,0) and (-1,0) ..
      (-1,+2);
      \begin{scope}[
        shift={(.05,0)}
      ]
      \draw[
        white,
        line width=11,
      ]
      (-1,-2) 
        .. controls 
        (-1,0) and (+1,0) ..
      (+1,+2);
      \end{scope}
      \draw[
        line width=1.2,
        -Latex
      ]
      (-1-.1,-2) 
        .. controls 
        (-1-.1,0) and (+1-.1,0) ..
      (+1-.1,+2);
      \draw[
        line width=1.2,
        -Latex
      ]
      (-1+.1,-2) 
        .. controls 
        (-1+.1,0) and (+1+.1,0) ..
      (+1+.1,+2);
    \end{tikzpicture}
  }
\begin{tikzpicture}[decoration=snake]
  \draw[decorate,->] (0,0) -- (+.55,0);
  \draw[decorate,->] (0,0) -- (-.55,0);
\end{tikzpicture}
\hspace{4pt}
\adjustbox{
  raise=-2cm
}{
  \begin{tikzpicture}[
    xscale=-1
  ]
  \draw[
    line width=1.2,
    -Latex
  ]
  (-1.5,-2) .. controls
  (-1.5,-1) and (0,-1) ..
  (0,0);
  \draw[
    white,
    line width=9,
  ]
  (0,-2) .. controls
  (0,-1) and (-1.5,-1) ..
  (-1.5,0);
  \draw[
    line width=1.2,
    -Latex
  ]
  (0,-2) .. controls
  (0,-1) and (-1.5,-1) ..
  (-1.5,0);
  \draw[
    line width=1.2,
    -Latex
  ] 
  (1.5,-2) -- (1.5,0);
  \begin{scope}[
    shift={(0,-.1)}
  ]
  \draw[
    line width=1.2
  ]
    (0,0) .. controls
    (0,+1) and (1.5,+1) ..
    (1.5,+2);
  \draw[
    line width=8,
    white
  ]
    (1.5,0) .. controls
    (1.5,+1) and (0,+1) ..
    (0,+2);
  \draw[
    line width=1.2
  ]
    (1.5,0) .. controls
    (1.5,+1) and (0,+1) ..
    (0,+2);
  \draw[
    line width=1.2
  ]
  (-1.5,0) -- (-1.5,+2);
  \end{scope}
  \end{tikzpicture}
}
  &
  \\
  &&
  \\[-15pt]
  \hline
  &&
  \\[-11pt]
  \adjustbox{
    rotate=90,
    raise=-1.8cm
  }{
    \bf
    Duality triangle law
  }
  &
  \adjustbox{
    raise=-2cm
  }{
    \begin{tikzpicture}
      \draw[
        line width=1.2,
        -Latex
      ]
      (-.5,-2.5) -- (-.5,0);
      \begin{scope}[
        shift={(0,-.1)}
      ]
     \draw[
       line width=1.2
     ]
     (180:.5) arc (180:0:.5);
     \end{scope}
     \draw[
       line width=1.2,
       -Latex
    ]
       (.5,-.01) -- (.5,-1.1);
     \begin{scope}[
       shift={(1,-1)}
     ]
     \draw[
       line width=1.2
     ]
     (180:.5) arc (180:360:.5);
     \end{scope}
     \draw[
       line width=1.2,
       -Latex
     ]
      (1.5,-1) -- (1.5,0);
     \draw[
       line width=1.2
     ]
      (1.5,-.1) -- (1.5,1.5-.1);
    \end{tikzpicture}
  }
  \hspace{.1cm}
\begin{tikzpicture}[
  decoration=snake,
]
  \draw[decorate,->] (0,0) -- (+.55,0);
  \draw[decorate,->] (0,0) -- (-.55,0);
\end{tikzpicture}
\hspace{.1cm}
\adjustbox{
  raise=-2cm
}{
  \begin{tikzpicture}
    \draw[
      line width=1.2,
      -Latex
    ]
    (0,-2) -- (0,0.1);
    \draw[
      line width=1.2
    ]
    (0,.1-.1) -- (0,+2-.1);
  \end{tikzpicture}
}
  \hspace{.1cm}
\begin{tikzpicture}[decoration=snake]
  \draw[decorate,->] (0,0) -- (+.55,0);
  \draw[decorate,->] (0,0) -- (-.55,0);
\end{tikzpicture}
\hspace{.1cm}
  \adjustbox{
    raise=-2cm
  }{
    \begin{tikzpicture}[
      xscale=-1
    ]
      \draw[
        line width=1.2,
        -Latex
      ]
      (-.5,-2.5) -- (-.5,0);
      \begin{scope}[
        shift={(0,-.1)}
      ]
     \draw[
       line width=1.2
     ]
     (180:.5) arc (180:0:.5);
     \end{scope}
     \draw[
       line width=1.2,
       -Latex
    ]
       (.5,-.01) -- (.5,-1.1);
     \begin{scope}[
       shift={(1,-1)}
     ]
     \draw[
       line width=1.2
     ]
     (180:.5) arc (180:360:.5);
     \end{scope}
     \draw[
       line width=1.2,
       -Latex
     ]
      (1.5,-1) -- (1.5,0);
     \draw[
       line width=1.2
     ]
      (1.5,-.1) -- (1.5,1.5-.1);
    \end{tikzpicture}
  }
  &
  \adjustbox{
    rotate=-90,
    raise=1.8cm
  }{
    \bf zig-zag yank move
  }
  \\[-10pt]
  &&
  \\
  \hline
  &&
  \\[-26pt]
  &&
  \\
  \adjustbox{
    rotate=90,
    raise=-1.8cm
  }{
    \bf Twist ribbon property
  }
  &
\adjustbox{
  raise=-2.2cm
}{
\begin{tikzpicture}[
  rotate=90,
  scale=.9
]

\draw[
  line width=1.2,
  -Latex
]
  (-1.3,-1) --
  (0,-1)
  .. controls 
     (.5,-1) and (1,-.5) ..
  (1,0)
  .. controls 
     (1,.5) and (.5,1) ..
  (0,1)  
  .. controls
    (-.5,1) and (-1,.5)
  .. (-1,0)
  .. controls
    (-1,-.4) and (-.5, -.8) ..
  (0,-.8);
\draw[
  -Latex,
  line width=1.2,
]
 (2.5,-.8)
 .. controls 
   (3,-.8) and (3.5,-.4) ..
 (3.5,0)
 .. controls
   (3.5,.5) and (3,1) ..
 (2.5,1)
  .. controls
    (2,1) and (1.5,.5) ..
  (1.5,0)
  .. controls
    (1.5,-.5) and (2,-1) ..
  (2.5,-1)
  --
  (3.8,-1);
\draw[
  line width=6,
  white
]
 (0,-.8) -- (2.5,-.8);
\draw[
  line width=1.2,
]
 (0-.1,-.8) -- (2.5,-.8);

\end{tikzpicture}
}
\hspace{.1cm}
\begin{tikzpicture}[decoration=snake]
  \draw[decorate,->] (0,0) -- (+.55,0);
  \draw[decorate,->] (0,0) -- (-.55,0);
\end{tikzpicture}
\hspace{.1cm}
\adjustbox{
  raise=-2cm
}{
\begin{tikzpicture}
  \draw[
    line width=1.2,
    -Latex
  ]
  (0,-2.1) -- (0,.2); 
  \draw[
    line width=1.2,
  ]
  (0,0) -- (0,+2.2); 
\end{tikzpicture}
}
\hspace{.1cm}
\begin{tikzpicture}[decoration=snake]
  \draw[decorate,->] (0,0) -- (+.55,0);
  \draw[decorate,->] (0,0) -- (-.55,0);
\end{tikzpicture}
\hspace{.1cm}
\adjustbox{
  raise=-2.3cm
}{
\begin{tikzpicture}[
  rotate=90,
  scale=.9
]

\draw[
  line width=1.2,
]
 (0-.2,-.8) -- (2.5,-.8);

\draw[
  line width=4.5,
  white
]
  (-1.3,-1) --
  (0,-1)
  .. controls 
     (.5,-1) and (1,-.5) ..
  (1,0)
  .. controls 
     (1,.5) and (.5,1) ..
  (0,1)  
  .. controls
    (-.5,1) and (-1,.5)
  .. (-1,0)
  .. controls
    (-1,-.4) and (-.5, -.8) ..
  (0,-.8);

\draw[
  line width=1.2,
  -Latex
]
  (-1.3,-1) --
  (0,-1)
  .. controls 
     (.5,-1) and (1,-.5) ..
  (1,0)
  .. controls 
     (1,.5) and (.5,1) ..
  (0,1)  
  .. controls
    (-.5,1) and (-1,.5)
  .. (-1,0)
  .. controls
    (-1,-.4) and (-.5, -.8) ..
  (0,-.8);

\draw[
  line width=4.5,
  white
]
 (2.5,-.8)
 .. controls 
   (3,-.8) and (3.5,-.4) ..
 (3.5,0)
 .. controls
   (3.5,.5) and (3,1) ..
 (2.5,1)
  .. controls
    (2,1) and (1.5,.5) ..
  (1.5,0)
  .. controls
    (1.5,-.5) and (2,-1) ..
  (2.5,-1)
  --
  (3.5,-1);
\draw[
  -Latex,
  line width=1.2,
]
 (2.5,-.8)
 .. controls 
   (3,-.8) and (3.5,-.4) ..
 (3.5,0)
 .. controls
   (3.5,.5) and (3,1) ..
 (2.5,1)
  .. controls
    (2,1) and (1.5,.5) ..
  (1.5,0)
  .. controls
    (1.5,-.5) and (2,-1) ..
  (2.5,-1)
  --
  (3.8,-1);

\end{tikzpicture}
}
&
\adjustbox{
  rotate=-90,
  raise=2.3cm
}{
  \bf 1st Reidemeister move
}
\end{tabular}
}

\vspace{1mm}

\begin{proposition}[{\bf Shum's Theorem} {\cite{Shum94}}]
  \label{ShumTheorem}
  Framed oriented links {\rm (Def. \ref{FramedOrientedLinks})} are equivalently the endomorphisms of $\varnothing$ in the category of framed oriented tangles, which is the ribbon category (aka tortile category) freely generated by a single object.
\end{proposition}

What this means here is (cf. \cite[\S 9.1]{Yetter01}) that for a function on link diagrams to descend to equivalence classes and hence to be a {\it link invariant}, it is sufficient that it respects (beyond wiggling of edges) the moves shown in Figure \ref{FigureRibbonCategory}, which subsume the Reidemeister moves (Figure \ref{FigureReidemeister}) but also contains zig-zag yank moves to account for diagram isotopy, combinatorially. 
 While we do not need more than
 Fig. \ref{FigureRibbonCategory} here,
 the interested reader may find more background on it in \cite{Selinger11} (going back to \cite[Prop. 2.7]{JoyalStreet93} for the case of the 3rd Reidemeister move).

\smallskip 
This gives us a good handle on equivalence classes of framed link diagrams. On the other side, we find the corresponding equivalences of loops of signed intervals as follows:

\newpage 
\begin{example}[\bf Link cobordism]
  The first and third move of signed interval loops from Ex. \ref{RelationsBetweenChargedOpenStringWorldsheets}
  relate diagrams whose pre-images under \eqref{MappingFramedLinksToStringLoops}
  are framed oriented link diagrams as shown in the following moves on the left:
\begin{equation}
\label{SaddleAndVacuumMoves}
\hspace{-.7cm}
\scalebox{0.8}{$
\adjustbox{}{
\hspace{-.8cm}
\def\tabcolsep{10pt}
\begin{tabular}{c||c}
\adjustbox{
  raise=-1cm,
}{
\begin{tikzpicture}[yscale=-1]
\draw[
  line width=1.2,
  ]
  (-1,1)
  edge[bend left=40]
  (-1,-1);
\draw[-Latex]
  [line width=1.2]
  (-1+.38,+0)
  --
  (-1+.38,-.0001);

\begin{scope}[
  xscale=-1
]
\draw[
  line width=1.2,
  ]
  (-1,1)
  edge[bend left=40]
  (-1,-1);
\draw[-Latex]
  [line width=1.2]
  (-1+.38,0)
  --
  (-1+.38,+.0001);
\end{scope}
\end{tikzpicture}
}
\;\;\;
\begin{tikzpicture}[decoration=snake]
\draw[decorate,->]
  (0,0) -- (+.55,0);
\draw[decorate,->]
  (0,0) -- (-.55,0);
\end{tikzpicture}
\adjustbox{
  raise=-1cm
}{
\begin{tikzpicture}[xscale=-1]
\begin{scope}[
  shift={(4.5,0)}
]
\draw[
  line width=1.2,
  ]
  (-1,1)
  edge[bend right=40]
  (+1,+1);
\draw[
  line width=1.2,
  -Latex
]
  (0,+.63)
  --
  (0.0001,+.63);

\begin{scope}[
  yscale=-1
]
\draw[
  line width=1.2,
  ]
  (-1,1)
  edge[bend right=40]
  (+1,+1);
\draw[
  line width=1.2,
  -Latex
]
  (+.0001,+.63)
  --
  (0,+.63);
\end{scope}

\end{scope}

\end{tikzpicture}
}
&
\adjustbox{
  scale=.6,
  raise=-1.4cm
}{
\begin{tikzpicture}
\draw[
  gray,
  line width=30pt,
  draw opacity=.4,
]
 (0,2) 
 --
 (0,-2);

\closedinterval{-.5}{1.6}{1}
\closedinterval{-.5}{.8}{1}
\closedinterval{-.5}{0}{1}
\closedinterval{-.5}{-.8}{1}
\closedinterval{-.5}{-1.6}{1}

\begin{scope}[
  xshift=3.3cm,
  xscale=-1
]
\draw[
  gray,
  line width=30pt,
  draw opacity=.4
]
 (0,2) 
 --
 (0,-2);

\openinterval{-.5}{1.6}{1}
\openinterval{-.5}{.8}{1}
\openinterval{-.5}{0}{1}
\openinterval{-.5}{-.8}{1}
\openinterval{-.5}{-1.6}{1}

\end{scope}
\end{tikzpicture}
}
\;\;\;\;
\adjustbox{scale=.8}{
\begin{tikzpicture}[decoration=snake]
\draw[decorate,->, line width=1]
  (0,0) -- (+0.55,0);
\draw[decorate,->, line width=1]
  (0,0) -- (-0.55,0);
\end{tikzpicture}
}
\adjustbox{
  scale=.6,
  raise=-1.1cm
}{
\begin{tikzpicture}

\begin{scope}
\clip
  (-.6,0) rectangle (5,2);
\draw[
  gray,
  line width=28,
  draw opacity=.4
]
  (1.75,2.6) circle (1.8);
\end{scope}
\closedinterval{-.4}{1.75}{1.2};
\closedinterval{-.1}{1.3}{1.75};
\begin{scope}[
  xshift=3.5cm,
  xscale=-1
]
\openinterval{-.4}{1.75}{1.2};
\openinterval{-.1}{1.3}{1.75};
\end{scope}
\oppositeinterval{.32}{.8}{2.85};
\oppositeinterval{1.65}{.33}{.2};

\begin{scope}[yscale=-1]
\begin{scope}
\clip
  (-.6,0) rectangle (5,2);
\draw[
  gray,
  line width=28,
  draw opacity=.4
]
  (1.75,2.6) circle (1.8);
\end{scope}
\closedinterval{-.4}{1.75}{1.2};
\closedinterval{-.1}{1.3}{1.75};
\begin{scope}[
  xshift=3.5cm,
  xscale=-1
]
\openinterval{-.4}{1.75}{1.2};
\openinterval{-.1}{1.3}{1.75};
\end{scope}
\oppositeinterval{.32}{.8}{2.85};
\oppositeinterval{1.65}{.33}{.2};
\end{scope}
  
\end{tikzpicture}
\hspace{-1.7cm}
}
\\[30pt]
\hline
\\[-20pt]
\adjustbox{
  raise=-1cm,
}{
\begin{tikzpicture}
\draw[
  line width=1.2,
  ]
  (-1,1)
  edge[bend left=40]
  (-1,-1);
\draw[-Latex]
  [line width=1.2]
  (-1+.38,+0)
  --
  (-1+.38,-.0001);

\begin{scope}[
  xscale=-1
]
\draw[
  line width=1.2,
  ]
  (-1,1)
  edge[bend left=40]
  (-1,-1);
\draw[-Latex]
  [line width=1.2]
  (-1+.38,0)
  --
  (-1+.38,+.0001);
\end{scope}
\end{tikzpicture}
}
\;\;\;
\begin{tikzpicture}[decoration=snake]
\draw[decorate,->]
  (0,0) -- (+.55,0);
\draw[decorate,->]
  (0,0) -- (-.55,0);
\end{tikzpicture}
\adjustbox{
  raise=-1cm
}{
\begin{tikzpicture}
\begin{scope}[
  shift={(4.5,0)}
]
\draw[
  line width=1.2,
  ]
  (-1,1)
  edge[bend right=40]
  (+1,+1);
\draw[
  line width=1.2,
  -Latex
]
  (0,+.63)
  --
  (0.0001,+.63);

\begin{scope}[
  yscale=-1
]
\draw[
  line width=1.2,
  ]
  (-1,1)
  edge[bend right=40]
  (+1,+1);
\draw[
  line width=1.2,
  -Latex
]
  (+.0001,+.63)
  --
  (0,+.63);
\end{scope}

\end{scope}

\end{tikzpicture}
}
&
\hspace{-.8cm}
\adjustbox{
  scale=.6,
  raise=-1.4cm
}{
\begin{tikzpicture}
\draw[
  gray,
  line width=30pt,
  draw opacity=.4,
]
 (0,2) 
 --
 (0,-2);

\openinterval{-.5}{1.6}{1}
\openinterval{-.5}{.8}{1}
\openinterval{-.5}{0}{1}
\openinterval{-.5}{-.8}{1}
\openinterval{-.5}{-1.6}{1}

\begin{scope}[
  xshift=3.3cm,
  xscale=-1
]
\draw[
  gray,
  line width=30pt,
  draw opacity=.4
]
 (0,2) 
 --
 (0,-2);

\closedinterval{-.5}{1.6}{1}
\closedinterval{-.5}{.8}{1}
\closedinterval{-.5}{0}{1}
\closedinterval{-.5}{-.8}{1}
\closedinterval{-.5}{-1.6}{1}

\end{scope}
\end{tikzpicture}
}
\;\;\;\;
\adjustbox{scale=.8}{
\begin{tikzpicture}[decoration=snake]
\draw[decorate,->, line width=1]
  (0,0) -- (+0.55,0);
\draw[decorate,->, line width=1]
  (0,0) -- (-0.55,0);
\end{tikzpicture}
}
\hspace{-.7cm}
\adjustbox{
  scale=.6,
  raise=-1.1cm
}{
\begin{tikzpicture}[xscale=-1]

\begin{scope}
\clip
  (-.6,0) rectangle (5,2);
\draw[
  gray,
  line width=28,
  draw opacity=.4
]
  (1.75,2.6) circle (1.8);
\end{scope}
\closedinterval{-.4}{1.75}{1.2};
\closedinterval{-.1}{1.3}{1.75};
\begin{scope}[
  xshift=3.5cm,
  xscale=-1
]
\openinterval{-.4}{1.75}{1.2};
\openinterval{-.1}{1.3}{1.75};
\end{scope}
\oppositeinterval{.32}{.8}{2.85};
\oppositeinterval{1.65}{.33}{.2};

\begin{scope}[yscale=-1]
\begin{scope}
\clip
  (-.6,0) rectangle (5,2);
\draw[
  gray,
  line width=28,
  draw opacity=.4
]
  (1.75,2.6) circle (1.8);
\end{scope}
\closedinterval{-.4}{1.75}{1.2};
\closedinterval{-.1}{1.3}{1.75};
\begin{scope}[
  xshift=3.5cm,
  xscale=-1
]
\openinterval{-.4}{1.75}{1.2};
\openinterval{-.1}{1.3}{1.75};
\end{scope}
\oppositeinterval{.32}{.8}{2.85};
\oppositeinterval{1.65}{.33}{.2};
\end{scope}
  
\end{tikzpicture}
\hspace{-1.7cm}
}
\\
\hline
&
\\[-15pt]
\adjustbox{raise=-.8cm}{
\begin{tikzpicture}
  \draw[
    line width=1.2,
    -Latex
  ]
    (0:1) arc (0:182:1);
  \draw[
    line width=1.2,
    -Latex
  ]
    (180:1) arc (180:362:1);
\end{tikzpicture}
}
\;\;
\begin{tikzpicture}[decoration=snake]
\draw[decorate,->]
  (0,0) -- (+.55,0);
\draw[decorate,->]
  (0,0) -- (-.55,0);
\node[scale=1.2] at (1.8,0) {$\varnothing$};
\end{tikzpicture}
\hspace{.7cm}
&
\adjustbox{
  scale=.35,
  raise=-1.2cm
}{
\begin{tikzpicture}
  \halfcircle{0}{0}
  \begin{scope}[yscale=-1]
  \halfcircle{0}{0}
  \end{scope}
\end{tikzpicture}
}
\;\;
\begin{tikzpicture}[decoration=snake]
\draw[decorate,->]
  (0,0) -- (+.55,0);
\draw[decorate,->]
  (0,0) -- (-.55,0);
\node[scale=1.2] at (1.8,0) {$\varnothing$};
\end{tikzpicture}
\hspace{.7cm}
\end{tabular}
}
$}
\end{equation}
These relations (which go beyond those from Fig. \ref{FigureRibbonCategory} defining basic link diagram equivalence) happen to be known, respectively, as the {\it birth}/{\it death move} and the {\it fusion moves} (\cite[\S 6.3]{Khovanov00}\cite[Fig. 15]{Jacobsson04}, cf. \cite[Fig. 12]{Lobb24}) or {\it oriented saddle point} moves (e.g. \cite[Fig. 16]{Kauffman15}) 
generating (on top of usual link diagram equivalence) the relation of {\it link cobordism}
\footnote{
  Beware that early authors (e.g. \cite{Hosokawa67}\cite{CS80}) say ``link cobordism'' for what is now called ``link concordance'', with cylindrical cobordisms only. In this case, the corresponding equivalence classes of links are non-trivial.
  The modern use of ``link cobordism'' for actual cobordisms considered here seems to originate with \cite[\S 6.3]{Khovanov00}, cf. \cite[Fig 12]{Lobb24}. With this notion, all (framed) links are equivalent to (framed) unknots (Lem. \ref{FramedLinksCobordantToFramedUnknots} below), and hence the broader interest in general link cobordism is instead in characterizing the cobordisms themselves, notably through their associated homomorphism between Khovanov homologies \cite{Jacobsson04}.
}.
\end{example}

\begin{proposition}[\bf Configuration loop classes as link invariants]
\label{ChargedStringLoopsAsFramedLinksOnEquivalenceClasses}
  The map \eqref{MappingFramedLinksToStringLoops} descends to  equivalence classes, here sending framed oriented links {\rm (instead of their representing diagrams)} to elements in the fundamental group of the interval configuration space: 
  \begin{equation}
    \label{FramedLinksToPiOfStrings}
    \begin{tikzcd}
      \mathrm{FrmdOrntdLnk}
      \ar[
        r,
        ->>
      ]
      &
      \pi_1\big(
        \mathrm{Conf}^I_0
        (\mathbb{R}^2)
      \big)
      \,.
    \end{tikzcd}
  \end{equation}
\end{proposition}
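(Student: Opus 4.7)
The claim has two parts: \emph{(i)} the assignment \eqref{MappingFramedLinksToStringLoops} is invariant under planar isotopy and the three Reidemeister moves of \hyperlink{FigureReidemeister}{Figure R}, so that it descends to a well-defined map from framed oriented links; and \emph{(ii)} this induced map is surjective onto $\pi_1(\mathrm{Conf}^I(\mathbb{R}^2))$.

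For \emph{(i)}, we check each move separately. Planar isotopy of a framed oriented link diagram deforms the associated thickened worldsheet (cf.\ \hyperlink{FigureStringLoopsAsLinks}{Figure F}) continuously and hence produces a homotopy of based loops. The second and third Reidemeister moves each alter the diagram by a compactly-supported local change which, at the level of string worldsheets, is an ambient isotopy inside $\mathbb{R}^2 \times [0,1]$ (two strands passing each other, resp.\ a strand sliding across a crossing); writing down the local worldsheet models produces an explicit based-loop homotopy in each case. The framed (double-twist) version of the first Reidemeister move involves two consecutive self-crossings of opposite sign and hence reduces, via one second Reidemeister move plus planar isotopy, to the cases already dealt with.

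For \emph{(ii)}, we invoke the Pontrjagin--Thom construction together with the identification $\pi_1 \mathrm{Conf}^I(\mathbb{R}^2) \simeq \pi_3(S^2)$ of Prop.~\ref{TheFundamenralGroupOfStringConfigurationSpace}. Given any based loop $\gamma \colon [0,1] \to \mathrm{Conf}^I(\mathbb{R}^2)$, the union of all strings at all times is a compact subset $\Sigma_\gamma \subset \mathbb{R}^2 \times (0,1) \cong \mathbb{R}^3$ (empty at the endpoints, since $\gamma$ is based at $\varnothing$), canonically framed by the fixed first-coordinate direction in which the strings extend. After a small homotopy putting $\gamma$ in general position, $\Sigma_\gamma$ becomes a smoothly embedded surface whose boundary $L_\gamma := \partial \Sigma_\gamma$ is a smooth embedded framed $1$-manifold, i.e.\ a framed oriented link in $\mathbb{R}^3$; projecting $L_\gamma$ to the plane along a generic direction produces a framed oriented link diagram whose image under \eqref{MappingFramedLinksToStringLoops} is homotopic to $\gamma$.

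The main obstacle is the transversality step in \emph{(ii)}: one must verify rigorously that a generic loop in the (infinite-dimensional, stratified) space $\mathrm{Conf}^I(\mathbb{R}^2)$ can, after an arbitrarily small homotopy, be realised as the thickening of a smoothly embedded framed link. This requires a stratified general-position argument, using the natural filtration of $\mathrm{Conf}^I(\mathbb{R}^2)$ by the combinatorial type of the string configuration (number of closed, half-open and open intervals, and their coincidences), so that generic loops avoid the high-codimension strata where the boundary of $\Sigma_\gamma$ would fail to be a manifold compatible with the link-diagram description.
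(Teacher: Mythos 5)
Your step \emph{(i)} has a genuine gap precisely at the point where the paper's proof has to do real work: the first Reidemeister move. The move shown in \hyperlink{FigureReidemeister}{Figure R} cancels two kinks that curl the \emph{same} way in the plane while carrying opposite crossing signs. Such a pair cannot be removed by a second Reidemeister move plus planar isotopy: isotopies and the 2nd and 3rd moves are realized by regular homotopies of the underlying immersed plane curve and therefore preserve its turning (Whitney) degree, whereas deleting two like-curling kinks changes the turning degree by $\pm 2$. (The configuration of opposite-sign self-crossings that \emph{does} cancel via the 2nd and 3rd moves is the ``S-shaped'' one where the kinks curl oppositely -- the paper's example on invariance of the framing number draws exactly this distinction.) For the same reason the required homotopy of based loops in $\mathrm{Conf}^I(\mathbb{R}^2)$ cannot be an ambient isotopy of the thickened worldsheet at all: it must pass through string reconnections. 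The paper supplies it by first showing (Ex.\ \ref{RelationsBetweenChargedOpenStringWorldsheets}) that the saddle and birth/death moves \eqref{SaddleAndVacuumMoves} are realized by continuous paths in $\mathrm{Conf}^I(\mathbb{R}^2)$, and then deriving the framed first Reidemeister move diagrammatically from those. This ingredient is missing from your argument, and without it the map does not demonstrably descend to links.

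On step \emph{(ii)}: your Pontrjagin--Thom extraction of a framed link from a generic loop would, if completed, prove more than surjectivity (namely that every class is \emph{represented} by a link image), but as you acknowledge the stratified transversality step is not carried out, so surjectivity is not actually established. A much cheaper route is available and is the one the paper effectively takes: since distant disjoint unions of links map to Pontrjagin products and orientation/time reversal gives inverses, the image of \eqref{FramedLinksToPiOfStrings} is a subgroup of $\pi_1\big(\mathrm{Conf}^I(\mathbb{R}^2)\big)\simeq\mathbb{Z}$ (Prop.\ \ref{TheFundamenralGroupOfStringConfigurationSpace}), so it suffices to exhibit a single link hitting a generator; the unit-framed unknot does so, cf.\ \eqref{UnitFramedUnknotIsTheGenerator}.
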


\hspace{-1cm}
\def\tabcolsep{10pt}
\begin{tabular}{p{7.5cm}l}
  {\it Proof.}
  By Shum's theorem (Prop. \ref{ShumTheorem}) it is sufficient to see that the moves in Fig. \ref{FigureRibbonCategory} are respected.
  This is clear for the sliding moves and hence for the 1st and 2nd Reidemeister moves; while  
  the zig-zag move is respected by \eqref{StringyZigZagMove}.
  
  What remains to be shown is that also the 1st Reidemeister move is respected.

  For this, it is sufficient to show that the extra moves \eqref{SaddleAndVacuumMoves} imply the 1st Reidemeister move. 
  
  How this is the case is shown on the right. 

  \medskip

  That the map \eqref{FramedLinksToPiOfStrings} thus established is furthermore surjective is implied by the following analysis, culminating in  Thm. \ref{ChargedOpenStringLoopsClassifiedByCrossingNumber} below.
  &
\def\tabcolsep{10pt}
\def\arraystretch{.9}
\adjustbox{
  margin=3pt,
  fbox,
  raise=-3.7cm, 
  scale=0.68
}{
\hspace{-.4cm}
\begin{tabular}{c|c}
\begin{tikzpicture}[
  rotate=2
]

\draw[
  line width=1.2,
  -Latex
]
  (-1.3,-1) --
  (0,-1)
  .. controls 
     (.5,-1) and (1,-.5) ..
  (1,0)
  .. controls 
     (1,.5) and (.5,1) ..
  (0,1)  
  .. controls
    (-.5,1) and (-1,.5)
  .. (-1,0)
  .. controls
    (-1,-.4) and (-.5, -.8) ..
  (0,-.8);
\draw[
  -Latex,
  line width=1.2,
]
 (2.5,-.8)
 .. controls 
   (3,-.8) and (3.5,-.4) ..
 (3.5,0)
 .. controls
   (3.5,.5) and (3,1) ..
 (2.5,1)
  .. controls
    (2,1) and (1.5,.5) ..
  (1.5,0)
  .. controls
    (1.5,-.5) and (2,-1) ..
  (2.5,-1)
  --
  (3.8,-1);
\draw[
  line width=4.5,
  white
]
 (0,-.8) -- (2.5,-.8);
\draw[
  line width=1.2,
]
 (0,-.8) -- (2.5,-.8);

\node 
  at (.4,-.64) {
    \scalebox{.7}{$-$}
  };
\node 
  at (2.1,-.62) {
    \scalebox{.7}{$+$}
  };

\end{tikzpicture}
&
\begin{tikzpicture}[
  rotate=2
]

\draw[
  line width=1.2,
]
 (0,-.8) -- (2.5,-.8);

\draw[
  line width=4.5,
  white
]
  (-1.3,-1) --
  (0,-1)
  .. controls 
     (.5,-1) and (1,-.5) ..
  (1,0)
  .. controls 
     (1,.5) and (.5,1) ..
  (0,1)  
  .. controls
    (-.5,1) and (-1,.5)
  .. (-1,0)
  .. controls
    (-1,-.4) and (-.5, -.8) ..
  (0,-.8);

\draw[
  line width=1.2,
  -Latex
]
  (-1.3,-1) --
  (0,-1)
  .. controls 
     (.5,-1) and (1,-.5) ..
  (1,0)
  .. controls 
     (1,.5) and (.5,1) ..
  (0,1)  
  .. controls
    (-.5,1) and (-1,.5)
  .. (-1,0)
  .. controls
    (-1,-.4) and (-.5, -.8) ..
  (0,-.8);

\draw[
  line width=4.5,
  white
]
 (2.5,-.8)
 .. controls 
   (3,-.8) and (3.5,-.4) ..
 (3.5,0)
 .. controls
   (3.5,.5) and (3,1) ..
 (2.5,1)
  .. controls
    (2,1) and (1.5,.5) ..
  (1.5,0)
  .. controls
    (1.5,-.5) and (2,-1) ..
  (2.5,-1)
  --
  (3.5,-1);
\draw[
  -Latex,
  line width=1.2,
]
 (2.5,-.8)
 .. controls 
   (3,-.8) and (3.5,-.4) ..
 (3.5,0)
 .. controls
   (3.5,.5) and (3,1) ..
 (2.5,1)
  .. controls
    (2,1) and (1.5,.5) ..
  (1.5,0)
  .. controls
    (1.5,-.5) and (2,-1) ..
  (2.5,-1)
  --
  (3.8,-1);

\node 
  at (.4,-.64) {
    \scalebox{.7}{$-$}
  };
\node 
  at (2.1,-.62) {
    \scalebox{.7}{$+$}
  };

\end{tikzpicture}
\\
\hspace{.5cm}
 \begin{tikzpicture}[
   rotate=90,
   decoration=snake
 ]
   \draw[decorate, ->]
    (-.01,0) -- (0.54,0);
   \draw[decorate, ->]
    (0.01,0) -- (-0.54,0);
 \end{tikzpicture}
&
\hspace{.5cm}
 \begin{tikzpicture}[
   rotate=90,
   decoration=snake
  ]
   \draw[decorate, ->]
    (-.01,0) -- (0.54,0);
   \draw[decorate, ->]
    (0.01,0) -- (-0.54,0);
 \end{tikzpicture}
\\
\begin{tikzpicture}[
  rotate=2
]

\draw[
  line width=1.2,
  -Latex
]
  (-1.3,-1) --
  (0,-1)
  .. controls 
     (.5,-1) and (1,-.5) ..
  (1,0)
  .. controls 
     (1,.5) and (.5,1) ..
  (0,1)  
  .. controls
    (-.5,1) and (-1,.5)
  .. (-1,0)
  .. controls
    (-1,-.4) and (-.5, -.8) ..
  (0,-.8);
\draw[
  -Latex,
  line width=1.2,
]
 (2.5,-.8)
 .. controls 
   (3,-.8) and (3.5,-.4) ..
 (3.5,0)
 .. controls
   (3.5,.5) and (3,1) ..
 (2.5,1)
  .. controls
    (2,1) and (1.5,.5) ..
  (1.5,0)
  .. controls
    (1.5,-.5) and (2,-1) ..
  (2.5,-1)
  --
  (3.8,-1);
\draw[
  line width=4.5,
  white
]
 (0,-.8) -- (2.5,-.8);
\draw[
  line width=1.2,
]
 (0,-.8) -- (2.5,-.8);

\draw[white,fill=white]
  (0.3,-.5) rectangle (2.3,+.5);

\draw[line width=1.2]
  (.828,.52)
  .. controls
    (.828+.3,.1) and 
  (1.673-.3,.1) ..
  (1.673,.52);

\begin{scope}[
  yscale=-1
]
\draw[line width=1.2]
  (.828,.52)
  .. controls
    (.828+.3,.1) and 
  (1.673-.3,.1) ..
  (1.673,.52);
\end{scope}

\end{tikzpicture}
&
\begin{tikzpicture}[
  rotate=2
]

\draw[
  line width=1.2,
]
 (0,-.8) -- (2.5,-.8);

\draw[
  line width=4.5,
  white
]
  (-1.3,-1) --
  (0,-1)
  .. controls 
     (.5,-1) and (1,-.5) ..
  (1,0)
  .. controls 
     (1,.5) and (.5,1) ..
  (0,1)  
  .. controls
    (-.5,1) and (-1,.5)
  .. (-1,0)
  .. controls
    (-1,-.4) and (-.5, -.8) ..
  (0,-.8);
\draw[
  line width=1.2,
  -Latex
]
  (-1.3,-1) --
  (0,-1)
  .. controls 
     (.5,-1) and (1,-.5) ..
  (1,0)
  .. controls 
     (1,.5) and (.5,1) ..
  (0,1)  
  .. controls
    (-.5,1) and (-1,.5)
  .. (-1,0)
  .. controls
    (-1,-.4) and (-.5, -.8) ..
  (0,-.8);

\draw[
  line width=4.5,
  white
]
 (2.5,-.8)
 .. controls 
   (3,-.8) and (3.5,-.4) ..
 (3.5,0)
 .. controls
   (3.5,.5) and (3,1) ..
 (2.5,1)
  .. controls
    (2,1) and (1.5,.5) ..
  (1.5,0)
  .. controls
    (1.5,-.5) and (2,-1) ..
  (2.5,-1)
  --
  (3.5,-1);
\draw[
  -Latex,
  line width=1.2,
]
 (2.5,-.8)
 .. controls 
   (3,-.8) and (3.5,-.4) ..
 (3.5,0)
 .. controls
   (3.5,.5) and (3,1) ..
 (2.5,1)
  .. controls
    (2,1) and (1.5,.5) ..
  (1.5,0)
  .. controls
    (1.5,-.5) and (2,-1) ..
  (2.5,-1)
  --
  (3.8,-1);

\draw[white,fill=white]
  (0.3,-.5) rectangle (2.3,+.5);

\draw[line width=1.2]
  (.828,.52)
  .. controls
    (.828+.3,.1) and 
  (1.673-.3,.1) ..
  (1.673,.52);

\begin{scope}[
  yscale=-1
]
\draw[line width=1.2]
  (.828,.52)
  .. controls
    (.828+.3,.1) and 
  (1.673-.3,.1) ..
  (1.673,.52);
\end{scope}

\end{tikzpicture}
\\
\hspace{.5cm}
 \begin{tikzpicture}[
   rotate=90,
   decoration=snake
 ]
   \draw[decorate, ->]
    (-.01,0) -- (0.54,0);
   \draw[decorate, ->]
    (0.01,0) -- (-0.54,0);
 \end{tikzpicture}
&
\hspace{.5cm}
 \begin{tikzpicture}[
   rotate=90,
   decoration=snake
 ]
   \draw[decorate, ->]
    (-.01,0) -- (0.54,0);
   \draw[decorate, ->]
    (0.01,0) -- (-0.54,0);
 \end{tikzpicture}
\\
\;
\begin{tikzpicture}[
  rotate=2
]
  \draw[
    line width=1.2,
    -Latex
  ]
  (-1.9,0) -- (3.2,0);
\end{tikzpicture}
&
\;
\begin{tikzpicture}[
  rotate=2
]
  \draw[
    line width=1.2,
    -Latex
  ]
  (-1.9,0) -- (3.2,0);
\end{tikzpicture}
\end{tabular}
\hspace{-.5cm}
}
\end{tabular}

{$\;$ \hspace{16.2cm} $\Box$}

\begin{example}[\bf Group of stringy images of framed unknots]
  \label{GroupOfStringyImagesOfFRamedUnknots}
  The images of the framed unknots under \eqref{FramedLinksToPiOfStrings} constitute an integer subgroup $\mathbb{Z} \,\subset\, \mathbb{Z} \simeq \pi_1\big(\mathrm{Conf}^I_0(\mathbb{R}^2)\big)$ (cf. Prop. \ref{TheFundamenralGroupOfStringConfigurationSpace}) whose group operation corresponds to the addition of total crossing/framing number (Def. \ref{LinkingNumber}).
  For instance, the following is the move corresponding to the equation $1 + 1 = 2$ in this subgroup:
\begin{center}
\adjustbox{
  scale=.35,
  raise=-6.4cm
}{
\begin{tikzpicture}

\begin{scope}[scale=-1]
\halfcircleover{0}{0}
\end{scope}

\draw[
  line width=43,
  white
]
  (-3,-.01) -- (-3,2.8);
\draw[
  gray,
  draw opacity=.4,
  line width=30
]
  (-3,-2.8) -- (-3,2.8);

\draw[
  gray,
  draw opacity=.4,
  line width=30
]
  (-9,-2.8) -- 
  (-9,2.8);

\begin{scope}[xscale=-1]
\halfcircleover{0}{0}
\end{scope}

\begin{scope}[
  shift={(-6,2.8)},
  yscale=-1
]
\halfcircleover{0}{0}
\end{scope}

\begin{scope}[
  shift={(-6,-2.8)}
]
\halfcircleover{0}{0}
\end{scope}

\foreach \n in {0,...,6} {
  \openinterval{-9.5}{-2.4+\n*.8}{1}
}

\foreach \n in {0,...,2} {
  \closedinterval{-3.5}{.5+\n*.9}{1}
}

\begin{scope}[shift={(0,-13.5)}]

\begin{scope}[scale=-1]
\halfcircleover{0}{0}
\end{scope}

\draw[
  line width=43,
  white
]
  (-3,-.01) -- (-3,2.8);
\draw[
  gray,
  draw opacity=.4,
  line width=30
]
  (-3,-2.8) -- (-3,2.8);

\draw[
  gray,
  draw opacity=.4,
  line width=30
]
  (-9,-2.8) -- 
  (-9,2.8);

\begin{scope}[xscale=-1]
\halfcircleover{0}{0}
\end{scope}

\begin{scope}[
  shift={(-6,2.8)},
  yscale=-1
]
\halfcircleover{0}{0}
\end{scope}

\begin{scope}[
  shift={(-6,-2.8)}
]
\halfcircleover{0}{0}
\end{scope}

\foreach \n in {0,...,6} {
  \openinterval{-9.5}{-2.4+\n*.8}{1}
}

\foreach \n in {0,...,2} {
  \closedinterval{-3.5}{.5+\n*.9}{1}
}

\end{scope}
  
\end{tikzpicture}
}
\hspace{.3cm}
\adjustbox{
  raise=-1.6cm
}{
\begin{tikzpicture}[decoration=snake]
  \draw[decorate,->] (0,0) -- (+.55,0);
  \draw[decorate,->] (0,0) -- (-.55,0);
\end{tikzpicture}
}
\hspace{+.3cm}
\adjustbox{
  scale=.35,
  raise=-5.2cm
}{
\begin{tikzpicture}
\begin{scope}[scale=-1]
\halfcircleover{0}{0}
\end{scope}

\draw[
  line width=43,
  white
]
  (-3,-.01) -- (-3,2.8);
\draw[
  gray,
  draw opacity=.4,
  line width=30
]
  (-3,-5) -- (-3,2.8);

\draw[
  gray,
  draw opacity=.4,
  line width=30
]
  (-9,-12.1) -- 
  (-9,2.8);

\begin{scope}[xscale=-1]
\halfcircleover{0}{0}
\end{scope}

\begin{scope}[
  shift={(-6,2.8)},
  yscale=-1
]
\halfcircleover{0}{0}
\end{scope}

\begin{scope}[
  shift={(0,-8)},
  scale=-1
]
\halfcircleover{0}{0}
\end{scope}

\draw[
  line width=47,
  white
]
  (-3,-5) -- (-3,-8);
\draw[
  gray,
  draw opacity=.4,
  line width=30
]
  (-3,-5) -- (-3,-12);

\begin{scope}[
  shift={(0,-8)},
  xscale=-1
]
\halfcircleover{0}{0}
\end{scope}

\begin{scope}[
  shift={(-6,-12)},
]
\halfcircleover{0}{0}
\end{scope}

\foreach \n in {-11,...,5} {
  \openinterval{-9.5}{-2.01+\n*.86}{1}
}

\foreach \n in {0,...,2} {
  \closedinterval{-3.5}{.5+\n*.9}{1}
}

\foreach \n in {0,...,5} {
  \closedinterval{-3.5}{-7.5+\n*.9}{1}
}

\foreach \n in {0,...,0} {
  \closedinterval{-3.5}{-11.4+\n*.9}{1}
}

\end{tikzpicture}
}
\end{center}

\end{example}

In fact, this subgroup inclusion is surjective \eqref{UnitFramedUnknotIsTheGenerator}, hence exhausts the full fundamental group, by the following further analysis.

\begin{lemma}[\bf Framed links are cobordant to framed unknots]
  \label{FramedLinksCobordantToFramedUnknots}
  Every framed oriented link is related by the cobordism moves \eqref{SaddleAndVacuumMoves} to a framed oriented unknot.
\end{lemma}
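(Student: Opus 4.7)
My plan is to prove the lemma in two stages: first use fusion (saddle) moves to reduce $L$ to a single-component framed oriented knot, then use a Seifert surface to exhibit that knot as cobordant to an unknot.

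\textbf{Stage 1 (reducing components).} If $L$ has $k \geq 2$ components, I would pick two of them and use a planar isotopy to bring an oriented arc of one close to an oppositely-oriented arc of the other, producing a local picture of two anti-parallel parallel arcs. A single application of the fusion (saddle) move on the left of \eqref{SaddleAndVacuumMoves} then merges these two components into one. Iterating at most $k-1$ times produces a single-component framed oriented knot $K$.

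\textbf{Stage 2 (unknotting via a Seifert surface).} Next I would apply Seifert's algorithm to $K$ to obtain a Seifert surface $\Sigma \subset \mathbb{R}^3$ with $\partial \Sigma = K$, and choose a generic Morse height function on $\Sigma$. Reading the resulting handle decomposition as a movie from the bottom of $\Sigma$ (where the horizontal slice is $K$) upward, each 1-handle attachment (saddle critical point) corresponds locally to the oriented saddle move of \eqref{SaddleAndVacuumMoves}, and the maximum of $\Sigma$ corresponds to a death move on the final small unknotted circle. Stopping the movie one step before this last death move exhibits $K$ as cobordant via the stringy moves to the framed oriented unknot formed by that small circle.

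\textbf{Main obstacle.} The main point requiring verification is the local identification: one must check that, after a small perturbation of $\Sigma$ or of its height function, each Morse saddle critical point of $\Sigma$ can be realized by the local move in \eqref{SaddleAndVacuumMoves} with compatible orientations, i.e., that the horizontal slice just below the critical value looks locally like two oppositely-oriented parallel arcs (the left-hand picture of \eqref{SaddleAndVacuumMoves}) and just above like two nested arcs. A secondary concern is tracking the blackboard framing throughout, so that the final circle is an honest framed unknot with a well-defined framing number; this is expected to follow from the invariance of the total crossing number \eqref{TotalCrossingNumber} under all stringy moves, combined with the additivity of framing numbers under fusion exhibited in Example \ref{GroupOfStringyImagesOfFRamedUnknots}.
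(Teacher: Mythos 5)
Your route is genuinely different from the paper's and, as written, has a gap precisely at the point you flag as the ``main obstacle.'' The paper's proof is purely local and diagrammatic: it applies a saddle move to \emph{each crossing} of the given diagram, turning a crossing of two strands into an avoided crossing in which one strand has picked up a kink \eqref{AvoidCrossing}; since this replaces an inter-strand (or self-) crossing by a self-crossing of the same sign, it manifestly stays within the allowed framed moves and preserves $\#$ \eqref{TotalCrossingNumber}. The result is a crossing-free diagram, i.e.\ a framed unlink, whose components are then fused into a single framed unknot as in Ex.~\ref{GroupOfStringyImagesOfFRamedUnknots}. No Seifert surfaces or Morse theory are needed. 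Your Stage~1 (fusing components by anti-parallel saddles) is fine and is essentially the paper's last step performed first.

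The problem is Stage~2. A movie presentation of a (pushed-in) Seifert surface consists not only of births, saddles and deaths but also of ambient isotopies of the intermediate slices, and on blackboard-framed diagrams a generic ambient isotopy is realized by \emph{all three unframed} Reidemeister moves. The unframed first Reidemeister move is not among the allowed moves of Def.~\ref{FramedOrientedLinks} (only its framed, double-kink version is, and only that version is derivable from \eqref{SaddleAndVacuumMoves}), and it changes the framing number; so the movie steps are not automatically stringy moves, and your appeal to the invariance of $\#$ under the stringy moves cannot repair this, since the offending steps are exactly the ones that are not stringy moves. A second, related issue is that the Seifert surface induces the $0$-framing on $K$, not the blackboard framing of the diagram you started with, so even a correctly framed version of the movie would connect the wrong framed knot to the unknot unless you separately account for the writhe. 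These issues are likely repairable (e.g.\ by refusing to perform the R1 simplifications and letting kinks accumulate, so that you still terminate on \emph{some} framed unknot), but that repair is the actual content of the lemma and is not supplied; the paper's crossing-by-crossing resolution sidesteps it entirely.
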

\begin{proof}
  Using the zig-zag move 
  \eqref{StringyZigZagMove}
  and the 
  saddle move \eqref{SaddleAndVacuumMoves}, every crossing may be turned into an avoided crossing of a straight edge with a twisted edge, like this:
  \begin{equation}
  \label{AvoidCrossing}
  \scalebox{0.65}{$
\adjustbox
{
  scale=.8,
  raise=-1.4cm
}{
\begin{tikzpicture}

\draw[
  line width=1.2,
  -Latex
]
  (-1.6,-1.6) -- (+1.6,+1.6);

\draw[
  line width=15,
  white
]
  (+1.6,-1.6) -- (-1.6,+1.6);
\draw[
  line width=1.2,
  -Latex
]
  (+1.6,-1.6) -- (-1.6,+1.6);
  
\end{tikzpicture}}
\;\;\;\;\;\;
\begin{tikzpicture}[decoration=snake]
  \draw[decorate,->]
    (0,0) -- (+.55,0);
  \draw[decorate,->]
    (0,0) -- (-.55,0);
\end{tikzpicture}
\;\;\;\;\;\;
\adjustbox{
  scale=.8,
  raise=-1.3cm
}
{
\begin{tikzpicture}

\draw[
  line width=1.2,
  -Latex
]
  (-1.7,-1.7) -- (0,0);

\draw[
  -Latex,
  line width=1.2,
]
  (+1.7,-1.7) -- (-1.7,1.7);

\draw[
  white,fill=white
]
  (-1.5,-1.5) rectangle (1.5,1.5);

\draw[line width=1.2]
  (-1.5,-1.5)
   .. controls
     (-.8,-.8) and (-.8,+.8) ..
  (-1.5,+1.5);

\begin{scope}[shift={(-.42,0)}]
\draw[
  line width=1.25
]
  (45:.3) arc (45:360-45:.3);
\end{scope}

\draw[line width=1.2]
  (+1.6,-1.6) -- (-.2,.2);
\draw[line width=1.2, -Latex]
  (+.2,+.2) -- (1.6,1.6);

\draw[white,fill=white]
  (-1.1,-.2) rectangle (-.5,+.2);
\clip
  (-1.1,-.2) rectangle (-.5,+.2);

\draw[line width=1.2]
  (-.82,.24) circle (.17);

\draw[line width=1.2]
  (-.81,-.24) circle (.18);
\end{tikzpicture}
}
\;\;\;\;\;\;
\begin{tikzpicture}[decoration=snake]
  \draw[decorate,->]
    (0,0) -- (+.55,0);
  \draw[decorate,->]
    (0,0) -- (-.55,0);
\end{tikzpicture}
\;\;\;\;\;\;
\adjustbox{
  scale=.8,
  raise=-1.3cm
}
{
\begin{tikzpicture}

\draw[
  line width=1.2,
  -Latex
]
  (-1.7,-1.7) -- (0,0);

\draw[
  line width=1.2,
  -Latex, 
]
  (+1.7,-1.7) -- (-1.7,+1.7);

\draw[
  white,fill=white
]
  (-1.5,-1.5) rectangle (1.5,1.5);

\draw[line width=1.2]
  (-1.5,-1.5)
   .. controls
     (-.8,-.8) and (-.8,+.8) ..
  (-1.5,+1.5);

\begin{scope}[shift={(-.42,0)}]
\draw[
  line width=1.25
]
  (45:.3) arc (45:360-45:.3);
\end{scope}

\draw[line width=1.2]
  (+1.7,-1.7) -- (-.2,.2);
\draw[line width=1.2, -Latex]
  (+.2,+.2) -- (1.7,1.7);
 
\end{tikzpicture}
}
$}
\end{equation}
Applying such a move to all crossings of a given link diagram yields a framed unlink. Then forming the connected sum of its connected components (as in Ex. \ref{GroupOfStringyImagesOfFRamedUnknots}) yields a framed unknot.
\end{proof}

\begin{example}[\bf Framed links turned into framed unknots]
\label{FramedLinksTurnedIntoFramedUnknots}
The Hopf link becomes the unknot with framing $\pm 2$ by applying the saddle move either on the right or in the middle, depending on the given orientations:
\begin{equation}
\adjustbox{scale=0.8}{
\adjustbox{
  scale=.5,
  raise=-1.9cm
}{
\begin{tikzpicture}

\begin{scope}[yscale=-1]
  \halfcircle{0,0};
\end{scope}
\begin{scope}[
  shift={(3.4,.2)},
  xscale=-1
]
\halfcircle{0}{0};
\end{scope}
\begin{scope}[shift={(0,0)}]
\halfcircleover{0}{0};
\end{scope}
\begin{scope}[
  yscale=-1,
  shift={(3.4,-.2)},
  xscale=-1
]
\halfcircleover{0}{0};
\end{scope}
  
\end{tikzpicture}
} 
\begin{tikzpicture}[decoration=snake]
  \draw[decorate,->]
    (0,0) -- (+.55, 0);
  \draw[decorate,->]
    (0,0) -- (-.55, 0);
\end{tikzpicture}
\adjustbox{
  scale=.5,
  raise=-1.8cm
}{
\begin{tikzpicture}

\begin{scope}[yscale=-1]
  \halfcircle{0,0};
\end{scope}
\begin{scope}[
  shift={(3.4,.2)},
  xscale=-1
]
\halfcircle{0}{0};
\end{scope}
\begin{scope}[shift={(0,0)}]
\halfcircleover{0}{0};
\end{scope}
\begin{scope}[
  yscale=-1,
  shift={(3.4,-.2)},
  xscale=-1
]
\halfcircleover{0}{0};
\end{scope}

\draw[
  white, fill=white
] 
  (1.8,-1.35) rectangle (7.5,1.55);
\clip  
  (1.8,-1.35) rectangle (7.5,1.55);

\draw[
  gray,
  line width=24,
  draw opacity=.4
]
 (4.33,3.41) circle (2.6);

\draw[
  gray,
  line width=22,
  draw opacity=.4
]
 (4.33,-3.3) circle (2.6);

\closedinterval{2.25}{1.25}{1.7};
\openinterval{4.15}{1.25}{2.25};
\oppositeinterval{2.86}{.75}{2.7};
\oppositeinterval{4.2}{.38}{.2};

\begin{scope}[
  shift={(0,.1)},
  yscale=-1
]
  \closedinterval{2.35}{1.2}{1.85};
  \openinterval{4.4}{1.2}{1.95};
  \oppositeinterval{2.98}{.75}{2.7};
  \oppositeinterval{4.2}{.38}{.2};
\end{scope}

\end{tikzpicture}
}
}
\end{equation}

\begin{equation}
\adjustbox{scale=0.8}{
\adjustbox{
  scale=.5,
  raise=-1.8cm
}{
\begin{tikzpicture}

\begin{scope}[yscale=-1]
  \halfcircle{0,0};
\end{scope}
\begin{scope}[
  shift={(3.4,.2)}
]
\halfcircle{0}{0};
\end{scope}
\begin{scope}[shift={(0,0)}]
\halfcircleover{0}{0};
\end{scope}
\begin{scope}[
  yscale=-1,
  shift={(3.4,-.2)}
]
\halfcircleover{0}{0};
\end{scope}

\end{tikzpicture}
}
\begin{tikzpicture}[decoration=snake]
  \draw[decorate,->] (0,0) -- (.55,0);
  \draw[decorate,->] (0,0) -- (-.55,0);
\end{tikzpicture}
\adjustbox{
  scale=.5,
  raise=-1.8cm
}{
\begin{tikzpicture}

\begin{scope}[yscale=-1]
  \halfcircle{0,0};
\end{scope}
\begin{scope}[
  shift={(3.4,.2)}
]
\halfcircle{0}{0};
\end{scope}
\begin{scope}[shift={(0,0)}]
\halfcircleover{0}{0};
\end{scope}
\begin{scope}[
  yscale=-1,
  shift={(3.4,-.2)}
]
\halfcircleover{0}{0};
\end{scope}

\draw[white,fill=white] 
  (-.6,-1.4) rectangle (4,1.56);

\clip
  (-.6,-1.4) rectangle (4,1.56);

\draw[
  gray,
  line width=35,
  draw opacity=.4
]
  (1.75,-1.35) circle (.84);

\interval{.4}{-.95}{2.75};
\interval{.75}{-.3}{2};

\begin{scope}[
  shift={(-.15,.25)},
  yscale=-1
]
\draw[
  gray,
  line width=35,
  draw opacity=.4
]
  (1.75,-1.35) circle (.84);
\interval{.4}{-.95}{2.75};
\interval{.75}{-.3}{2};

\end{scope}

\end{tikzpicture}
}
}
\end{equation}

If we understand the stringy moves applied already to the corresponding framed link diagrams, 
then we may draw the above example more succinctly as
\begin{equation}
\label{HopfLinktoUnknot}
\adjustbox{scale=0.65}{
\adjustbox{raise=-1cm}{
\begin{tikzpicture}[
  scale=1
]

\begin{scope}[
  shift={(1,0)}
]
\draw[line width=2, -Latex]
  (0:1) arc (0:180:1);
\end{scope}

\draw[line width=7,white]
  (0:1) arc (0:180:1);
\draw[line width=2, -Latex]
  (0:1) arc (0:180:1);

\draw[line width=2, -Latex]
  (180:1) arc (180:360:1);

\begin{scope}[shift={(1,0)}]
\draw[line width=7, white]
  (180:1) arc (180:360:1);
\draw[line width=2, -Latex]
  (180:1) arc (180:360:1);
\end{scope}

\node[gray]
  at (.5,.64) {\color{red} 
    \scalebox{.9}{$-$}
  };
\node[gray]
  at (.5,-.64) {\color{red}
    \scalebox{.9}{$-$}
  };
  
\end{tikzpicture}}
\hspace{.2cm}
 \begin{tikzpicture}[decoration=snake]
   \draw[decorate, ->]
    (-.01,0) -- (0.54,0);
   \draw[decorate, ->]
    (0.01,0) -- (-0.54,0);
 \end{tikzpicture}
\hspace{.0cm}
\adjustbox{raise=-1cm}{
\begin{tikzpicture}

\begin{scope}[shift={(1,0)}]
\draw[line width=2, -Latex]
  (0:1) arc (0:180:1);
\end{scope}

\draw[line width=7,white]
  (0:1) arc (0:180:1);
\draw[line width=2, -Latex]
  (0:1) arc (0:180:1);

\draw[line width=2, -Latex]
  (180:1) arc (180:360:1);

\begin{scope}[shift={(1,0)}]
\draw[line width=7, white]
  (180:1) arc (180:360:1);
\draw[line width=2, -Latex]
  (180:1) arc (180:360:1);
\end{scope}

\draw[white,fill=white]
 (-.3,-.55) rectangle 
 (1.2,+.55);

\draw[line width=2]
  (.067,.55) 
  .. controls
    (.05-.2, .1) and 
    (.86+.2, .1) ..
  (+.84,.55);

\begin{scope}[
  shift={(.1,0)},
  yscale=-1
]
\draw[line width=2]
  (.067,.55) 
  .. controls
    (.05-.2, .1) and 
    (.86+.2, .1) ..
  (+.84,.55);
\end{scope}
 
\end{tikzpicture}}
}
\end{equation}
Further examples in this notation are the following:
The trefoil knot becomes
\begin{equation}
\label{TrefoilLinkToUnknot}
\adjustbox{scale=0.65}{
\adjustbox{
  raise=-2.6cm,
  scale=.8
}{
\begin{tikzpicture}
\foreach \n in {0,1,2} {
\begin{scope}[
  rotate=\n*120-4
]
\draw[
  line width=2,
  -Latex
]
 (0,-1)
   .. controls
   (-1,.2) and (-2,2) ..
 (0,2)
   .. controls
   (1,2) and (1,1) ..
  (.9,.7);
\end{scope}

\node[darkgreen]
  at (\n*120+31:.7) {
    \scalebox{1}{$+$}
  };

};

\end{tikzpicture}
}
\hspace{-.5cm}
 \begin{tikzpicture}[decoration=snake]
   \draw[decorate, ->]
    (-.01,0) -- (0.54,0);
   \draw[decorate, ->]
    (0.01,0) -- (-0.54,0);
 \end{tikzpicture}
\adjustbox{
  raise=-2.6cm,
  scale=.8
}{
\begin{tikzpicture}
\foreach \n in {0,1,2} {
\begin{scope}[
  rotate=\n*120-4
]
\draw[
  line width=2,
  -Latex
]
 (0,-1)
   .. controls
   (-1,.2) and (-2,2) ..
 (0,2)
   .. controls
   (1,2) and (1,1) ..
  (.9,.7);
\end{scope}
};

\draw[white,fill=white]
  (-.9,-.7)
  rectangle
  (.9,.2);

\draw[line width=2]
  (-.79,.21) 
    .. controls
    (-.7,-.25) and (+.7,-.25) ..
  (.79,.21);

\draw[line width=2]
  (-.27,-.7) 
  .. controls
    (-.4,-.2) and 
    (+.4,-.2) ..
  (+.27,-.7);

\end{tikzpicture}
}
\hspace{-.5cm}
 \begin{tikzpicture}[decoration=snake]
   \draw[decorate, ->]
    (-.01,0) -- (0.54,0);
   \draw[decorate, ->]
    (0.01,0) -- (-0.54,0);
 \end{tikzpicture}
\adjustbox{
  raise=-2.6cm,
  scale=.8
}{
\begin{tikzpicture}
\foreach \n in {0,1,2} {
\begin{scope}[
  rotate=\n*120-4
]
\draw[
  line width=2,
  -Latex
]
 (0,-1)
   .. controls
   (-1,.2) and (-2,2) ..
 (0,2)
   .. controls
   (1,2) and (1,1) ..
  (.9,.7);
\end{scope}
};

\draw[white,fill=white]
  (-.9,-.7)
  rectangle
  (.9,.2);

\draw[line width=2]
  (-.79,.21) 
    .. controls
    (-.7,-.25) and (+.7,-.25) ..
  (.79,.21);

\draw[line width=2]
  (-.27,-.7) 
  .. controls
    (-.4,-.2) and 
    (+.4,-.2) ..
  (+.27,-.7);

\draw[white,fill=white]
 (-.5,.8) 
 rectangle (+.5,-.2);

\draw[line width=2]
  (-.5,.58)
  .. controls
    (-.0,.7) and (-.0,-.26)
    ..
  (-.5,-.06);

\draw[line width=2]
  (+.5,.58)
  .. controls
    (+.0,.7) and (+.0,-.26)
    ..
  (+.5,-.06);

\end{tikzpicture}
}
}
\end{equation}

\vspace{-.4cm}
\noindent
and the figure-eight knot becomes

\vspace{-.3cm}
\begin{equation}
\label{FigureEightNotToUnknot}
\adjustbox{scale=0.65}{
\adjustbox{
  raise=-1cm,
  scale=1.3
}{
\begin{tikzpicture}

\node at (0,0) {
  \includegraphics[width=2.2cm]{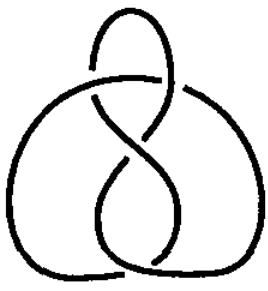}
};

\draw[line width=1.2pt,-Latex]
  (-.276,-.8) -- 
  (-.274,-.8-.02);
\draw[line width=1.2pt,-Latex]
  (.35, -.5) --
  (.35, -.47);
\draw[line width=1.2pt,-Latex]
  (-.09,1.09) --
  (-.09+.05, 1.13);
\draw[line width=1.2pt,-Latex]
  (-.18,.54) --
  (-.19-.01, .54);

\node[red]
  at (0,-1.2) {
    \scalebox{.7}{$-$}
  };
\node[red]
  at (.3,0) {
    \scalebox{.7}{$-$}
  };
\node[darkgreen]
  at (.5,.64) {
    \scalebox{.7}{$+$}
  };
\node[darkgreen]
  at (-.57,.64) {
    \scalebox{.7}{$+$}
  };

\end{tikzpicture}
}
\hspace{0cm}
 \begin{tikzpicture}[decoration=snake]
   \draw[decorate, ->]
    (-.01,0) -- (0.54,0);
   \draw[decorate, ->]
    (0.01,0) -- (-0.54,0);
 \end{tikzpicture}
\hspace{.0cm}
\adjustbox{
  scale=1.3,
  raise=-1.2cm
}{
\begin{tikzpicture}

\node at (0,0) {
  \includegraphics[width=2.2cm]{FigureEightKnot.png}
};

\draw[line width=1.2pt,-Latex]
  (-.09,1.09) --
  (-.09+.05, 1.13);
\draw[line width=1.2pt,-Latex]
  (-.18,.54) --
  (-.19-.01, .54);

\draw[white,fill=white]
  (-.4,-.8) rectangle (.5,-.3);

\draw[line width=1.5]
  (-.27,-.81)
    ..
    controls
    (-.32, -.6) and (.4,-.6) ..
  (.3,-.81);

\begin{scope}[
  shift={(0,-1.1)},
  yscale=-1
]
\draw[line width=1.5]
  (-.19,-.81)
    ..
    controls
    (-.3, -.6) and (.4,-.6) ..
  (.28,-.81);
\end{scope}

\end{tikzpicture}
}
\hspace{.2cm}
 \begin{tikzpicture}[decoration=snake]
   \draw[decorate, ->]
    (-.01,0) -- (0.54,0);
   \draw[decorate, ->]
    (0.01,0) -- (-0.54,0);
 \end{tikzpicture}
\hspace{.0cm}
\adjustbox{
  scale=1.3,
  raise=-1.2cm
}{
\begin{tikzpicture}

\node at (0,0) {
  \includegraphics[width=2.2cm]{FigureEightKnot.png}
};

\draw[white,fill=white]
  (-.4,-.8) rectangle (.5,-.3);

\draw[line width=1.5]
  (-.27,-.81)
    ..
    controls
    (-.32, -.6) and (.4,-.6) ..
  (.3,-.81);

\begin{scope}[
  shift={(0,-1.1)},
  yscale=-1
]
\draw[line width=1.5]
  (-.19,-.81)
    ..
    controls
    (-.3, -.6) and (.4,-.6) ..
  (.28,-.81);
\end{scope}

\draw[white,fill=white]
  (-.2,.4) rectangle (+.15,1.4);

\draw[line width=1.5]
 (-.22,1.01) 
 .. controls 
   (0,1.1) and (0,.55) ..
 (-.22,.54);

\begin{scope}[
  shift={(-.03,0)},
  xscale=-1
]
\draw[line width=1.5]
 (-.22,1) 
 .. controls 
   (0,1.1) and (0,.55) ..
 (-.22,.54);
\end{scope}

\draw[
  line width=1.2,
  -Latex
]
  (-1.01,-.6) --
  (-1.01,-.6-.01);

\end{tikzpicture}
}
}
\end{equation}
\end{example}

Using all this we finally obtain our main result in this section:

\begin{theorem}[\bf Interval configuration loops classified by crossing number]
\label{ChargedOpenStringLoopsClassifiedByCrossingNumber}
  The map \eqref{FramedLinksToPiOfStrings} from framed oriented links to the fundamental group of the configuration space of signed intervals in the plane is, under the latter's identification with the integers \eqref{FundamentalGroupOfStringConfigurationSpace}, given by sending a link $L$ to its total crossing number/writhe $\# L$ \eqref{TotalCrossingNumber}:
  \begin{equation}
    \label{LinksToWrithe}
    \begin{tikzcd}[
      sep=5pt,
    ]
      \mathrm{FrmdOrntdLnk}
      \ar[
        r,
        ->>
      ]
      &[+8pt]
      \pi_0\big(
        \Omega\,
        \mathrm{Conf}^I_0(\mathbb{R}^2)
      \big)
      \ar[
        r,
        phantom,
        "{ \simeq }"
      ]
      &
      \mathbb{Z}
      \\[-4pt]
      L
      \ar[
        rr,
        |->,
        shorten=10pt
      ]
      &&
      \# L
      \mathrlap{\,.}
    \end{tikzcd}
  \end{equation}
\end{theorem}
\begin{proof}
  By Lem. \ref{FramedLinksCobordantToFramedUnknots},
  the image of $L$ is equivalently a framed unknot via the saddle moves \eqref{SaddleAndVacuumMoves}.
  Since all framed unknots are multiples of the unit-framed unknot, by Ex. \ref{GroupOfStringyImagesOfFRamedUnknots}, this exhibits 
  the unit framed unknot as the generator
  \begin{equation}
  \label{UnitFramedUnknotIsTheGenerator}
  \scalebox{0.7}{$
\left[
\adjustbox{
  scale=.5,
  raise=-3.3cm
}
{
\begin{tikzpicture}

\begin{scope}[scale=-1]
\halfcircleover{0}{0}
\end{scope}

\draw[
  line width=43,
  white
]
  (-3,-.01) -- (-3,2.8);
\draw[
  gray,
  draw opacity=.4,
  line width=30
]
  (-3,-2.8) -- (-3,2.8);

\draw[
  gray,
  draw opacity=.4,
  line width=30
]
  (-9,-2.8) -- 
  (-9,2.8);

\begin{scope}[xscale=-1]
\halfcircleover{0}{0}
\end{scope}

\begin{scope}[
  shift={(-6,2.8)},
  yscale=-1
]
\halfcircleover{0}{0}
\end{scope}

\begin{scope}[
  shift={(-6,-2.8)}
]
\halfcircleover{0}{0}
\end{scope}

\foreach \n in {0,...,6} {
  \openinterval{-9.5}{-2.4+\n*.8}{1}
}

\foreach \n in {0,...,2} {
  \closedinterval{-3.5}{.5+\n*.9}{1}
}
  
\end{tikzpicture}
}
\right]
$}
\;\;
=
\;\;
1
\quad \in\;
\mathbb{Z}
\;\simeq\;
\pi_1\big(
  \mathrm{Conf}^I(\mathbb{R}^2)
\big)
\,.
\end{equation}
(which hence corresponds to the Hopf fibration under the identification of Prop. \ref{TheFundamenralGroupOfStringConfigurationSpace}).

Moreover, since the saddle move \eqref{AvoidCrossing} used in Lem. \ref{FramedLinksCobordantToFramedUnknots}
  manifestly preserves writhe $\#$ \eqref{TotalCrossingNumber}, the writhe of the resulting unknot (being its framing number) is that of $L$ (cf. Ex. \ref{FramedLinksTurnedIntoFramedUnknots}), and hence
  it represents the $\#(L)$-fold multiple of the generator \eqref{UnitFramedUnknotIsTheGenerator}.
\end{proof}

\begin{remark}[\bf Relation to discriminants of configurations]
\label{RelationToDiscriminant}
A statement similar to Thm. \ref{ChargedOpenStringLoopsClassifiedByCrossingNumber} appears as \cite[Thm. 5.2]{MoravaRolfsen23} following \cite[Lem. 3.6]{GorinLin69}
(we are grateful to Jack Morava for pointing this out): There is a locally trivial fibration
$$
  \Delta_k
  \,:\,
  \begin{tikzcd}
  \mathrm{Conf}_k(\mathbb{C})
  \ar[r]
  &
  \mathbb{C}^\times
  \,,
  \end{tikzcd}
$$
called the {\it discriminant} and given by the product of $\mathbb{C}$-coordinate differences of distinct points in a  configuration, which is such that under passage to $\pi_1$ it becomes the writhe function \eqref{TotalCrossingNumber}
$$
  \mathllap{
  \#_k
  \,=\,
  }
  \pi_1(\Delta_k)
  \,:\,
  \begin{tikzcd}
  \mathrm{Br}_k
  \ar[r]
  &
  \mathbb{Z}
  \end{tikzcd}
$$
on plain braids (counting their signed number of crossings).
We may thus think of the discriminant as giving a complex-analytic formula for the ``restriction'' from framed links to plain braids of our assignment \eqref{LinksToWrithe} (in that every framed link can be obtained from adding framing to a plain braid and then closing it by connecting endpoints).
\end{remark}

\begin{remark}[\bf Loops based in the $n$-component]\label{LoopsBasedInTheNComponent}
$\,$
  
\noindent {\bf (i)}  Since the group completed configuration space $\mathbb{G}\mathrm{Conf}(\mathbb{R}^2)$ is, by construction, a topological group, it follows abstractly that all its connected components 
are (weak homotopy) equivalent, hence so are the connected components 
\eqref{ConnectedComponentOfConfigurationSpaceOfSignedIntervals} of the interval configuration space $\mathrm{Conf}^I(\mathbb{R}^2)$, by Prop. \ref{OkuyamaTheorem}, and hence so are also the loop spaces based on any of these connected components:
\begin{equation}
 \label{EquivalenceOfBasedLoopSpaces}
    \underset{n,n' \in \mathbb{Z}}{\forall}
    \;\;\;\;\;\;\;
    \Omega \,
    \mathrm{Conf}^I_n(\mathbb{R}^2) 
    \;\;
    \weakHomotopyEquivalence
    \;\;
    \Omega \,
    \mathrm{Conf}^I_{n'}
    (\mathbb{R}^2) 
    \,.
\end{equation}

\noindent {\bf (ii)}  
Concretely, we may now exhibit this equivalence in terms of the interpretation of loops in $\mathrm{Conf}^I_0(\mathbb{R}^2)$ as framed links that we have established. Or rather, this interpretation applies to the loops in the $0$-component, while loops in the $n$-component may be understood more generally as {\it braids} on $n$ strands interlinked with any number of links, as illustrated in Fig. \ref{FigureL}:

\medskip 
\begin{tabular}{p{4.1cm}l}
  \footnotesize  
  {\bf Figure \figurenumber:
  \label{FigureL}}
  A  loop 
  in $\mathrm{Conf}^I_(\mathbb{R}^2)$
  based in the $n = 3$ component.
&
\hspace{.4cm}
    \adjustbox{
      raise=-2.5cm, 
      scale=0.75
    }{
\begin{tikzpicture}

\begin{scope}[shift={(-2.36,0)}]
\draw[dashed, line width=2]
  (0,0) ellipse (.6 and 1.8);
\draw[white,fill=white]
  (.2,   1.44) rectangle 
  (.2+1,-1.5);
\end{scope}

\begin{scope}[shift={(2-2.36,0)}]
\draw[dashed, line width=2]
  (0,0) ellipse (.6 and 1.8);
\draw[white,fill=white]
  (.2,   1.44) rectangle 
  (.2+1,-1.5);
\end{scope}

\begin{scope}[shift={(4-2.36,0)}]
\draw[dashed, line width=2]
  (0,0) ellipse (.6 and 1.8);
\draw[white,fill=white]
  (.2,   1.44) rectangle 
  (.2+1,-1.5);
\end{scope}

\draw[
  line width=7,
  white
]
  (0,-1.5)
  .. controls (0,0) and (2,0) ..
  (2,+1.5);
\draw[
  line width=1.8,
  -Latex
]
  (0,-1.5)
  .. controls (0,0) and (2,0) ..
  (2,+1.5);

\begin{scope}[shift={(0,-.03)}]
\draw[
  line width=8,
  white
]
  (2,-1.5) 
    .. controls (2,0) and (-2,0) ..
  (-2,+1.5);
\end{scope}
\begin{scope}
\clip
  (-1.5,.8) rectangle (0,-.8);
\begin{scope}[shift={(0,-.07)}]
\draw[
  line width=8,
  white
]
  (2,-1.5) 
    .. controls (2,0) and (-2,0) ..
  (-2,+1.5);
\end{scope}
\end{scope}

\draw[
  line width=1.8,
  -Latex
]
  (2,-1.5) 
    .. controls (2,0) and (-2,0) ..
  (-2,+1.5);

\begin{scope}[shift={(.1,0)}]
\draw[
  line width=8,
  white
]
  (-2,-1.5)
    .. controls (-2,0) and (0,0) ..
  (0,+1.5);
\end{scope}
\draw[
  line width=1.8,
  -Latex
]
  (-2,-1.5)
    .. controls (-2,0) and (0,0) ..
  (0,+1.5);

\begin{scope}[shift={(.8,.65)}]
\begin{scope}[shift={(0,-.04)}]
\draw[line width=8, white]
  (403:1.3 and .3) arc 
  (403:472:1.4 and .3);
\end{scope}
\draw[line width=1.8]
  (403:1.3 and .3) arc 
  (403:472:1.4 and .3);
\draw[line width=7, white]
  (141:1.3 and .3) arc 
  (141:390:1.4 and .3);
\draw[line width=1.8]
  (141:1.3 and .3) arc 
  (141:390:1.4 and .3);
\draw[line width=1.8,-Latex]
  (250:1.3 and .3) arc 
  (250:264:1.4 and .3);
\end{scope}

\begin{scope}[shift={(-2,-.5)}, rotate=-40]

\begin{scope}[shift={(.23,0)}]
\draw[line width=5, white]
  (180:.35) arc 
  (180:274:.35); 
\draw[line width=1.8]
  (180:.35) arc 
  (180:274:.35); 
\draw[line width=1.8]
  (315:.35) arc 
  (315:360:.35); 
\end{scope}

\begin{scope}[shift={(-.23,0)}]
\draw[line width=5,white]
  (0:.35) arc 
  (0:180:.35); 
\draw[line width=1.8]
  (0:.35) arc 
  (0:180:.35); 
\end{scope}

\begin{scope}[shift={(.23,0)}]
\draw[line width=5, white]
  (0:.35) arc 
  (0:180:.35); 
\draw[line width=1.8]
  (0:.35) arc 
  (0:180:.35); 
\end{scope}

\begin{scope}[shift={(-.23,0)}]
\draw[line width=5, white]
  (180:.35) arc 
  (180:360:.35); 
\draw[line width=1.8]
  (180:.35) arc 
  (180:360:.35); 
\end{scope}

\end{scope}

\end{tikzpicture}      
    }
\adjustbox{
  raise=-15pt
}{
$
    \;\;\;
    \in
    \;\;\;
    \Omega
    \,
    \mathrm{Conf}^I_3(\mathbb{R}^2)
    \,.
$
}
\end{tabular}  
\vspace{.2cm} 

\noindent {\bf (iii)} To see the equivalences $\Omega \, \mathrm{Conf}^I_n(\mathbb{R}^2) \,\weakHomotopyEquivalence\, \Omega\, \mathrm{Conf}^I_0(\mathbb{R}^2)$, and thereby also all the others 
\eqref{EquivalenceOfBasedLoopSpaces},
in terms of such 
``framed link-braids'' (Fig. \ref{FigureL}) being equivalently framed links with un-braids, and hence equivalently just framed links,
observe that the saddle move from Lem. \ref{FramedLinksCobordantToFramedUnknots} in the following symmetrized form
\begin{equation}
\hspace{-.3cm}
\adjustbox{scale=0.7}{
\def\tabcolsep{9pt}
\begin{tabular}{ccccccccc}
&&&&
\\[-8pt]
\adjustbox
{
  scale=.7,
  raise=-1.4cm
}{
\hspace{-10pt}
\begin{tikzpicture}

\draw[
  line width=1.8,
  -Latex
]
  (-1.6,-1.6) -- (+1.6,+1.6);

\draw[
  line width=15,
  white
]
  (+1.6,-1.6) -- (-1.6,+1.6);
\draw[
  line width=1.8,
  -Latex
]
  (+1.6,-1.6) -- (-1.6,+1.6);
  
\end{tikzpicture}}
&
\hspace{-10pt}
\adjustbox{
  raise=-5pt,
  scale=.8
}{
\begin{tikzpicture}[decoration=snake]
  \draw[decorate, ->]
    (0,0) -- (+.55,0);
  \draw[decorate, ->]
    (0,0) -- (-.55,0);
\end{tikzpicture}
}
\hspace{-10pt}
&
\adjustbox{
  scale=.7,
  raise=-1.3cm
}
{
\begin{tikzpicture}

\draw[
  line width=1.8,
  -Latex
]
  (-1.7,-1.7) -- (0,0);

\draw[
  -Latex,
  line width=1.8,
]
  (+1.7,-1.7) -- (-1.7,1.7);

\draw[
  white,fill=white
]
  (-1.5,-1.5) rectangle (1.5,1.5);

\draw[line width=1.8]
  (-1.5,-1.5)
   .. controls
     (-.8,-.8) and (-.8,+.8) ..
  (-1.5,+1.5);

\begin{scope}[shift={(-.42,0)}]
\draw[
  line width=1.8
]
  (45:.3) arc (45:360-45:.3);
\end{scope}

\draw[line width=1.8]
  (+1.6,-1.6) -- (-.2,.2);
\draw[line width=1.8, -Latex]
  (+.2,+.2) -- (1.6,1.6);

\draw[white,fill=white]
  (-1.1,-.2) rectangle (-.5,+.2);
\clip
  (-1.1,-.2) rectangle (-.5,+.2);

\draw[line width=1.8]
  (-.82,.24) circle (.17);

\draw[line width=1.8]
  (-.81,-.24) circle (.18);
\end{tikzpicture}
}
&
\hspace{-10pt}
\adjustbox{
  raise=-5pt,
  scale=.8
}{
\begin{tikzpicture}[decoration=snake]
  \draw[decorate, ->]
    (0,0) -- (+.55,0);
  \draw[decorate, ->]
    (0,0) -- (-.55,0);
\end{tikzpicture}
}
\hspace{-10pt}
&
\adjustbox{
  scale=.7,
  raise=-1.3cm
}
{
\begin{tikzpicture}

\draw[
  line width=1.8,
  -Latex
]
  (-1.7,-1.7) -- (0,0);

\draw[
  line width=1.8,
  -Latex, 
]
  (+1.7,-1.7) -- (-1.7,+1.7);

\draw[
  white,fill=white
]
  (-1.5,-1.5) rectangle (1.5,1.5);

\draw[line width=1.8]
  (-1.5,-1.5)
   .. controls
     (-.8,-.8) and (-.8,+.8) ..
  (-1.5,+1.5);

\begin{scope}[shift={(-.42,0)}]
\draw[
  line width=1.85
]
  (45:.3) arc (45:360-45:.3);
\end{scope}

\draw[line width=1.8]
  (+1.7,-1.7) -- (-.2,.2);
\draw[line width=1.8, -Latex]
  (+.2,+.2) -- (1.7,1.7);
 
\end{tikzpicture}
}
&
\hspace{-13pt}
\adjustbox{
  raise=-5pt,
  scale=.8
}{
\begin{tikzpicture}[decoration=snake]
  \draw[decorate, ->]
    (0,0) -- (+.55,0);
  \draw[decorate, ->]
    (0,0) -- (-.55,0);
\end{tikzpicture}
}
\hspace{-13pt}
&
\adjustbox{
  scale=.7,
  raise=-1.3cm
}
{
\begin{tikzpicture}

\draw[
  line width=1.8,
]
  (-1.7,-1.7) -- (-1.5,-1.5);
\draw[
  line width=1.8,
  -Latex
]
  (-1.5,+1.5) -- (-1.7,+1.7);

\draw[
  line width=1.8,
]
  (+1.7,-1.7) -- (+1.5,-1.5);
\draw[line width=1.8, -Latex]
  (+1.5,+1.5) -- (1.7,1.7);

\draw[
  line width=1.8,
  -Latex
]
  (+.25,-.25) -- (-.25,+.25);

\draw[line width=1.8]
  (-1.5,-1.5)
   .. controls
     (-.8,-.8) and (-.8,+.8) ..
  (-1.5,+1.5);

\begin{scope}[shift={(-.42,0)}]
\draw[
  line width=1.85
]
  (45:.3) arc (45:360-45:.3);
\end{scope}

\begin{scope}[
  shift={(+.42,0)},
  xscale=-1
]
\draw[
  line width=1.85
]
  (45:.3) arc (45:360-45:.3);
\end{scope}

\begin{scope}[
  xscale=-1
]

\draw[line width=1.8]
  (-1.5,-1.5)
   .. controls
     (-.8,-.8) and (-.8,+.8) ..
  (-1.5,+1.5);

\draw[white,fill=white]
  (-1.1,-.2) rectangle (-.5,+.2);
\clip
  (-1.1,-.2) rectangle (-.5,+.2);

\draw[line width=1.8]
  (-.82,.24) circle (.17);

\draw[line width=1.8]
  (-.81,-.24) circle (.18);

\end{scope}

\end{tikzpicture}
}
&
\hspace{-13pt}
\adjustbox{
  raise=-5pt,
  scale=.8
}{
\begin{tikzpicture}[decoration=snake]
  \draw[decorate, ->]
    (0,0) -- (+.55,0);
  \draw[decorate, ->]
    (0,0) -- (-.55,0);
\end{tikzpicture}
}
\hspace{-13pt}
&
\adjustbox{
  scale=.7,
  raise=-1.3cm
}
{
\begin{tikzpicture}

\draw[
  line width=1.8,
]
  (-1.7,-1.7) -- (-1.5,-1.5);
\draw[
  line width=1.8,
  -Latex
]
  (-1.5,+1.5) -- (-1.7,+1.7);

\draw[
  line width=1.8,
]
  (+1.7,-1.7) -- (+1.5,-1.5);
\draw[line width=1.8, -Latex]
  (+1.5,+1.5) -- (1.7,1.7);

\draw[
  line width=1.8,
  -Latex
]
  (+.25,-.25) -- (-.25,+.25);

\draw[line width=1.8]
  (-1.5,-1.5)
   .. controls
     (-.8,-.8) and (-.8,+.8) ..
  (-1.5,+1.5);

\begin{scope}[shift={(-.42,0)}]
\draw[
  line width=1.85
]
  (45:.3) arc (45:360-45:.3);
\end{scope}

\begin{scope}[
  shift={(+.42,0)},
  xscale=-1
]
\draw[
  line width=1.85
]
  (45:.3) arc (45:360-45:.3);
\end{scope}

\begin{scope}[
  xscale=-1
]

\draw[line width=1.8]
  (-1.5,-1.5)
   .. controls
     (-.8,-.8) and (-.8,+.8) ..
  (-1.5,+1.5);

\end{scope}
 
\end{tikzpicture}
}
\\[-8pt]
&&&&
\end{tabular}
}
\end{equation}
\vspace{.1cm}

\noindent
un-braids any braid-link at the cost of picking up a corresponding collection of further framed link components, e.g.:

\begin{equation}
\adjustbox{
  raise=-1.7cm, scale=0.7
}{
\begin{tikzpicture}

\begin{scope}[shift={(-2.36,0)}]
\draw[dashed, line width=2]
  (0,0) ellipse (.6 and 1.8);
\draw[white,fill=white]
  (.2,   1.44) rectangle 
  (.2+1,-1.5);
\end{scope}

\begin{scope}[shift={(2-2.36,0)}]
\draw[dashed, line width=2]
  (0,0) ellipse (.6 and 1.8);
\draw[white,fill=white]
  (.2,   1.44) rectangle 
  (.2+1,-1.5);
\end{scope}

\begin{scope}[shift={(4-2.36,0)}]
\draw[dashed, line width=2]
  (0,0) ellipse (.6 and 1.8);
\draw[white,fill=white]
  (.2,   1.44) rectangle 
  (.2+1,-1.5);
\end{scope}

\draw[
  line width=7,
  white
]
  (0,-1.5)
  .. controls (0,0) and (2,0) ..
  (2,+1.5);
\draw[
  line width=1.8,
  -Latex
]
  (0,-1.5)
  .. controls (0,0) and (2,0) ..
  (2,+1.5);

\begin{scope}[shift={(0,-.03)}]
\draw[
  line width=8,
  white
]
  (2,-1.5) 
    .. controls (2,0) and (-2,0) ..
  (-2,+1.5);
\end{scope}
\begin{scope}
\clip
  (-1.5,.8) rectangle (0,-.8);
\begin{scope}[shift={(0,-.07)}]
\draw[
  line width=8,
  white
]
  (2,-1.5) 
    .. controls (2,0) and (-2,0) ..
  (-2,+1.5);
\end{scope}
\end{scope}

\draw[
  line width=1.8,
  -Latex
]
  (2,-1.5) 
    .. controls (2,0) and (-2,0) ..
  (-2,+1.5);

\begin{scope}[shift={(.1,0)}]
\draw[
  line width=8,
  white
]
  (-2,-1.5)
    .. controls (-2,0) and (0,0) ..
  (0,+1.5);
\end{scope}
\draw[
  line width=1.8,
  -Latex
]
  (-2,-1.5)
    .. controls (-2,0) and (0,0) ..
  (0,+1.5);

\end{tikzpicture}
\hspace{-.2cm}
\adjustbox{raise=1.5cm}{
\begin{tikzpicture}[decoration=snake]
  \draw[decorate, ->]
    (0,0) -- (+.55,0);
  \draw[decorate, ->]
    (0,0) -- (-.55,0);
\end{tikzpicture}
\hspace{20pt}
}
\begin{tikzpicture}

\begin{scope}[shift={(-2.36,0)}]
\draw[dashed, line width=2]
  (0,0) ellipse (.6 and 1.8);
\draw[white,fill=white]
  (.2,   1.44) rectangle 
  (.2+1,-1.5);
\end{scope}

\begin{scope}[shift={(2-2.36,0)}]
\draw[dashed, line width=2]
  (0,0) ellipse (.6 and 1.8);
\draw[white,fill=white]
  (.2,   1.44) rectangle 
  (.2+1,-1.5);
\end{scope}

\begin{scope}[shift={(4-2.36,0)}]
\draw[dashed, line width=2]
  (0,0) ellipse (.6 and 1.8);
\draw[white,fill=white]
  (.2,   1.44) rectangle 
  (.2+1,-1.5);
\end{scope}

\draw[
  line width=1.8,
  -Latex
]
  (-2,-1.5) -- (-2,+1.5);
\draw[
  line width=1.8,
  -Latex
]
  (0,-1.5) -- (0,+1.5);
\draw[
  line width=1.8,
  -Latex
]
  (+2,-1.5) -- (+2,+1.5);

\begin{scope}[
  shift={(3.7,-.7)}
]
\begin{scope}[shift={(-.42,0)}]
\draw[
  line width=1.8
]
  (45:.3) arc (45:360-45:.3);
\end{scope}

\begin{scope}[
  shift={(+.42,0)},
  xscale=-1
]
\draw[
  line width=1.8
]
  (45:.3) arc (45:360-45:.3);
\end{scope}

\draw[line width=1.8, -Latex]
  (+.22,-.22) -- (-.27,+.27);

\end{scope}

\begin{scope}[
  shift={(3.7,+.5)},
  xscale=-1
]
\begin{scope}[shift={(-.42,0)}]
\draw[
  line width=1.8
]
  (45:.3) arc (45:360-45:.3);
\end{scope}

\begin{scope}[
  shift={(+.42,0)},
  xscale=-1
]
\draw[
  line width=1.8
]
  (45:.3) arc (45:360-45:.3);
\end{scope}

\draw[line width=1.8, -Latex]
  (+.22,-.22) -- (-.27,+.27);

\end{scope}

\end{tikzpicture}
}
\end{equation}
\end{remark}

\medskip

\section{Quantum States and Braiding}
\label{QuantumObservables}

In consequence and elaboration of Thm. \ref{ChargedOpenStringLoopsClassifiedByCrossingNumber}, we show here that pure quantum states on the group algebra of any of the groups
\begin{equation}
  \pi^3(S^2) 
    \;\simeq\; 
  \pi_1 \mathbb{G}\mathrm{Map}^\ast(S^2, S^2)   \;\simeq\; 
  \pi_1 \, \mathrm{Conf}^I(\mathbb{R}^2)
\end{equation}
are naturally identified with the ``Wilson loop observables'' of abelian Chern-Simons theory, automatically with the correct expected regularization (cf. Rem. \ref{RegularizationOfWilsonLoops}), and assign {\it braiding} phases to links in the way corresponding to worldlines of abelian anyonic particles/solitons. We close in \S\ref{Conclusion} with outlook on what this means for physics.

\smallskip 
\noindent
{\bf Quantum states.}
In quantum probability theory (cf. \cite[\S 2.4]{Strocchi08}), given a complex star-algebra thought of as an algebra of quantum observables and hence here to be denoted 
$$
  \mathrm{QObs}
  \;\in\;
  \mathbb{C}\mathrm{Alg}
  \,, 
  \hspace{.8cm}
  \begin{tikzcd}[sep=0pt]
    \mathrm{QObs}
    \ar[rr, "{(-)^\ast}"]
    &&
    \mathrm{QObs}
    \\
    c \, \mathcal{O} 
      &\longmapsto& 
    \overline{c} \,\mathcal{O}^\ast
    \\
    \mathcal{O}_1 
    \mathcal{O}_2
    &\longmapsto&
    \mathcal{O}_2^\ast 
    \mathcal{O}_1^\ast
  \end{tikzcd}
  \;\;\;
  \mbox{
    for
  }
  \;
  \begin{array}{l}
    c \in \mathbb{C}
    \\
    \mathcal{O}_i \in \mathrm{QObs}
    \,,
  \end{array}
$$
the corresponding {\it quantum states} 
are 
(cf. \cite[\S I.1.1]{Meyer95}\cite[\S 7]{Warner10}\cite[Def. 2.4]{Landsman17}, exposition in \cite[p. 6]{Gleason11})
embodied by the expectation values that they induce, which are linear forms $\rho$ on observables subject to (i) reality, (ii) semi-positivity, and (3.) normalization:
\begin{equation}
  \label{QuantumStates}
  \hspace{-.2cm} 
  \mathrm{QStates}
  \;\;
  :=
  \;\;
  \bigg\{\!\!
  \begin{tikzcd}
    \rho
    \;:\;
    \mathrm{QObs}
    \ar[
      r,
      "{\color{darkgreen} \mathrm{linear} }"{swap}
    ]
    &
    \mathbb{C}
  \end{tikzcd}
  \;\Big\vert\;
  \underset{
    \mathcal{O}
    \in \mathrm{QObs}
  }{\forall}
  \Big(
    \underset{
      \mathclap{
        \raisebox{-1pt}{
          \scalebox{.7}{
            \color{darkblue}
            \bf
            reality
          }
        }
      }
    }{
    \rho\big(
      \mathcal{O}^\ast
    \big)
    \,=\,
    \rho(\mathcal{O})^\ast
    }
    ,\;
  \underset{
      \mathclap{
        \raisebox{-1pt}{
          \scalebox{.7}{
            \color{darkblue}
            \bf
            (semi-)positivity
          }
        }
      }  
  }{
  \rho(\mathcal{O}^\ast 
    \!\cdot\! 
  \mathcal{O})
  \;\geq 0\;
  \in
  \mathbb{R}
  \hookrightarrow
  \mathbb{C}
  }
  \Big)
  \,,
  \underset{
      \mathclap{
        \raisebox{-1pt}{
          \scalebox{.7}{
            \color{darkblue}
            \bf
            normalization
          }
        }
      }  
  }{
  \;\;
  \rho(1) = 1
  }\;\,
  \bigg\}
  \,.
\end{equation}
This subsumes all {\it mixed} states (``density matrices''), for which the expectation values are {\it not required} to preserve the algebra product. Among them, the {\it pure} states (those which form a Hilbert space of states) are characterized as not being convex combinations of other states.

Those (expectation values of) states that do preserve the algebra structure called {\it multiplicative states}:
\begin{equation}
  \label{MultiplicativeStates}
  \mbox{$
  \rho \,\in\, \mathrm{QStates}
  $ is multiplicative}
  \hspace{.8cm}
  :\Leftrightarrow
  \hspace{.8cm}
  \underset{
    \mathcal{O},
    \mathcal{O}'
    \in \mathrm{QObs}
  }{\forall}
  \;\;\;
  \rho\big(
    \mathcal{O}
    \cdot
    \mathcal{O}'
  \big)
  \;\;
  =
  \;\;
  \rho\big(\mathcal{O}\big)
  \,
  \rho\big(\mathcal{O}'\big)
  \,.
\end{equation}
We will need the following fact:
\begin{lemma}[{\bf Multiplicative states are pure} {(e.g. \cite[Ex. 13.3-4]{Zhu93}\cite[Lem. 7.20-21]{Warner10})}]
\label{MultiplicativeStatesArePure}
  Every multiplicative state \eqref{MultiplicativeStates} is pure; and on central observables the multiplicative states coincide with the pure states.
\end{lemma}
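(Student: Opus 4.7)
The two parts admit short, essentially standard $\ast$-algebraic arguments; I outline them in turn.

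\emph{Multiplicative $\Rightarrow$ pure.} The key tool is the Cauchy--Schwarz inequality for states, $|\sigma(\mathcal{O})|^2 \le \sigma(\mathcal{O}^\ast\mathcal{O})$, which in the $\ast$-algebra setting \eqref{QuantumStates} follows from semi-positivity applied to $(\mathcal{O}-z\cdot 1)^\ast(\mathcal{O}-z\cdot 1)$ for suitable $z \in \mathbb{C}$, together with reality. Suppose a multiplicative state $\rho$ decomposes as $\rho = \lambda\rho_1 + (1-\lambda)\rho_2$ with $\lambda \in (0,1)$ and $\rho_1,\rho_2$ states. Using multiplicativity of $\rho$, reality $\rho(\mathcal{O}^\ast)=\overline{\rho(\mathcal{O})}$, and linearity to pass $|\rho(\mathcal{O})|^2 = \rho(\mathcal{O}^\ast)\rho(\mathcal{O}) = \rho(\mathcal{O}^\ast\mathcal{O})$ through the convex combination:
$$|\rho(\mathcal{O})|^2 \;=\; \rho(\mathcal{O}^\ast\mathcal{O}) \;=\; \lambda\,\rho_1(\mathcal{O}^\ast\mathcal{O}) + (1-\lambda)\,\rho_2(\mathcal{O}^\ast\mathcal{O}) \;\ge\; \lambda\,|\rho_1(\mathcal{O})|^2 + (1-\lambda)\,|\rho_2(\mathcal{O})|^2.$$
Convexity of $|{-}|^2$ supplies the reverse inequality, so equality must hold, and strict convexity of $|{-}|^2$ then forces $\rho_1(\mathcal{O}) = \rho_2(\mathcal{O}) = \rho(\mathcal{O})$ for every observable $\mathcal{O}$. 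Hence $\rho$ is pure.

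\emph{On central observables: pure $\Rightarrow$ multiplicative.} The natural strategy is the GNS construction. Form the pre-Hilbert quotient $\mathcal{H}_\rho$ of $\mathrm{Obs}_\bullet$ by the null-space of the sesquilinear form $\langle\mathcal{O},\mathcal{O}'\rangle_\rho := \rho(\mathcal{O}^\ast\mathcal{O}')$, with left-multiplication representation $\pi_\rho$ and cyclic vector $\Omega_\rho := [1]$, so that $\rho(-) = \langle\Omega_\rho,\pi_\rho(-)\Omega_\rho\rangle_\rho$. Purity of $\rho$ is equivalent to irreducibility of $\pi_\rho$ (the standard characterization, converse to what was shown in the previous part). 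For a central observable $\mathcal{O}$, the operator $\pi_\rho(\mathcal{O})$ commutes with all of $\pi_\rho(\mathrm{Obs}_\bullet)$, so by Schur's lemma applied to the irreducible representation $\pi_\rho$, it acts as a scalar, which evaluation on $\Omega_\rho$ identifies as $\rho(\mathcal{O})\cdot\mathrm{id}$. Therefore, for any $\mathcal{O}'$:
$$\rho(\mathcal{O}\cdot\mathcal{O}') \;=\; \langle\Omega_\rho,\pi_\rho(\mathcal{O})\pi_\rho(\mathcal{O}')\Omega_\rho\rangle_\rho \;=\; \rho(\mathcal{O})\,\rho(\mathcal{O}'),$$
which is multiplicativity on the central subalgebra; combined with the first part, this yields the claimed coincidence.

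\textbf{Main obstacle.} The delicate point is that $\mathrm{QObs}_\bullet$ is a graded Pontrjagin $\ast$-algebra \eqref{PontrjaginProduct}--\eqref{PontrjaginStar}, not a C${}^\ast$-algebra, so one cannot directly invoke C${}^\ast$-algebraic folklore. The Cauchy--Schwarz step, the bare-hands GNS construction, and Schur's lemma all go through at the level of unital $\ast$-algebras, but the ``purity $\Leftrightarrow$ irreducibility'' equivalence used in the second part is the nontrivial import; the cleanest route is to verify that $\mathrm{QObs}_\bullet$ with its data \eqref{QuantumStates} satisfies the hypotheses of the cited textbook results and to appeal to those, rather than re-deriving the equivalence from scratch.
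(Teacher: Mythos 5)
The paper offers no proof of this lemma at all: it is stated as a citation to standard operator-algebra references (Zhu, Warner), so there is nothing internal to compare against. Your argument is correct and is essentially the textbook proof those references contain. The first half (Cauchy--Schwarz plus strict convexity of $|\cdot|^2$ forcing $\rho_1=\rho_2$ in any convex decomposition) is fully elementary and valid for the unital $\ast$-algebra $\mathrm{QObs}_\bullet$ exactly as axiomatized in \eqref{QuantumStates}. The one genuinely load-bearing step you rightly flag is ``purity $\Leftrightarrow$ irreducibility of the GNS representation'' in the second half, which in general requires a C${}^\ast$-completion or at least that the GNS operators be bounded so that spectral-theoretic Schur applies; it is worth noting that in the paper's only application (Prop.~\ref{TheCovarianceFreeQuantumStates}) the algebra is the commutative group algebra $\mathbb{C}[\mathbb{Z}]$ generated by the unitaries $\mathcal{O}_n$ with $\mathcal{O}_n^\ast=\mathcal{O}_{-n}=\mathcal{O}_n^{-1}$, so the GNS operators are isometries, the representation extends to the C${}^\ast$-completion, and your appeal to the cited results is legitimate there.
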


\medskip

\noindent
{\bf Quantum states of framed link cobordism.}
We now study the case where the algebra $\mathrm{QObs}$ happens to be the group algebra $\mathbb{C}[-]$ (cf. \cite[\S 3.4]{FultonHarris91}) of homotopy classes of loops in the signed interval configuration space (Def. \ref{ConfigurationsOfChargedStrings}):
\begin{equation}
  \mathrm{QObs}
  \;\;
  :=
  \;\;
  \mathbb{C}\Big[
    \pi_0\big(\Omega
    \,\mathrm{Conf}^I_0(\mathbb{R}^2)\big)
  \Big]
\end{equation}
(we explain in \S\ref{Conclusion}, following \cite{SS25-FQHViaAlgTop}, how this arises as the algebra of topological quantum observables on effective flux through certain 2D quantum materials),
whose elements are equivalently compactly supported functions 
\begin{equation}
  \label{QuantumObservablesAsFunctions}
  \begin{tikzcd}
    \mathcal{O}
    \;:\;
    \pi_0
    \big(
      \Omega \, 
      \mathrm{Conf}^I_0(\mathbb{R}^2)
    \big)
    \ar[r]
    &
    \mathbb{C}
    \,.
  \end{tikzcd}
\end{equation}

Now, by Thm. \ref{ChargedOpenStringLoopsClassifiedByCrossingNumber}, these quantum observables detect the total crossing number/writhe $\#L$ (Def. \ref{LinkingNumber}) of the framed links $L$ which these loops equivalently form by Prop. \ref{ChargedStringLoopsAsFramedLinksOnEquivalenceClasses}. A choice of $\mathbb{C}$-linear basis of the underlying vector space is hence given by:
\begin{equation}
  \label{BasisObservablesInDegreeZero}
  \mathrm{QObs}
  \;\;
  \simeq_{{}_{\mathbb{C}}}
  \;\;
  \mathbb{C}
  \Big\langle
  \mathcal{O}_w
  \;:\;
  [L]
  \;\mapsto\;
  \delta\big(\# L,\,  w\big)
  \Big\rangle_{w \in \mathbb{Z}}
\end{equation}
(with $\delta(-,-)$ being the Kronecker symbol)
on which the product 
and star-operation is readily found to be:
\begin{equation}
  \label{PontrjaginProductOfDegZeroObs}
  \mathcal{O}_w
  \cdot
  \mathcal{O}_{w'}
  \;\;
  =
  \;\;
  \mathcal{O}_{w + w'}
  \,,
  \hspace{1cm}
  \big(
    \mathcal{O}_w
  \big)^\ast
  \;=\;
  \mathcal{O}_{-w}
  \,.
\end{equation}

\begin{proposition}[\bf The pure quantum states]
  \label{TheCovarianceFreeQuantumStates}
  The (expectation values of) pure quantum states \eqref{QuantumStates} 
  on $\mathrm{QObs}$
  \eqref{BasisObservablesInDegreeZero}
  are precisely the linear maps of the form   
  \begin{equation}
    \label{CovarianceFreeQuantumStates}
    \begin{tikzcd}[sep=0pt]
      \mathrm{QObs}
      \ar[
        rr,
        "{ \rho_w }"
      ]
      &&
      \mathbb{C}
      \\[-2pt]
      \mathcal{O}_w
      &\longmapsto&
      \zeta^n
      \mathrlap{
        \,\defneq\,
        \exp\big(
        \tfrac{2 \pi \mathrm{i}}{k}
        \, w
       \big)  
     }
    \end{tikzcd}
    \hspace{3cm}
    \mbox{\rm for any}
    \;\;\;
    k \in \mathbb{R} \setminus \{0\}
    \,.
  \end{equation}
\end{proposition}
\begin{proof}
With \eqref{PontrjaginProductOfDegZeroObs} and by Lem. \ref{MultiplicativeStatesArePure},
a pure state $\rho$ on the commutative observables $\mathrm{QObs}$ restricts to and is fixed by a group homomorphism 
$$
  \rho\big(
    \mathcal{O}_{w+w'}
  \big)
  \;=\;
  \rho\big(
    \mathcal{O}_w
    \cdot
    \mathcal{O}_{w'}
  \big)
  \;=\;
  \rho\big(
    \mathcal{O}_w
  \big)
  \,
  \rho\big(
    \mathcal{O}_{w'}
  \big)
$$
 from the additive group of integers to the multiplicative group of non-vanishing (due to the normalization condition) complex numbers, hence:

\vspace{-.5cm}
\begin{equation}
  \label{GroupHomomorphicQuantumState}
  \begin{tikzcd}[sep=0pt]
    \mathbb{Z}
    \ar[rr]
    &&
    \mathbb{C}^\times
    \\
    n &\longmapsto& 
    \rho\big( \mathcal{O}_w \big)
    \mathrlap{
      \;\, = \;
      \rho(\mathcal{O}_1)^{w}
      \,.
    }
  \end{tikzcd}
\end{equation}
Moreover, using also the reality condition \eqref{QuantumStates} gives that $\rho(\mathcal{O}_w)$  is unitary
\begin{equation}
  \rho\big(
    \mathcal{O}_w
  \big)^\ast
  \;
  \underset{
    \mathclap{
      \raisebox{-1pt}{
        \scalebox{.7}{
          \eqref{QuantumStates}
        }
      }
    }
  }{
  \;=\;
  }
  \;
  \rho\big(
    \mathcal{O}_w^\ast
  \big)
  \;
  \underset{
    \mathclap{
      \raisebox{-1pt}{
        \scalebox{.7}{
          \eqref{PontrjaginProductOfDegZeroObs}
        }
      }
    }  
  }{
  \;=\;
  }
  \;
  \rho\big(
    \mathcal{O}_{-w}
  \big)
  \;
  \underset{
    \mathclap{
      \raisebox{-1pt}{
        \scalebox{.7}{          \eqref{GroupHomomorphicQuantumState}
        }
      }
    }    
  }{
    \;=\;
  }
  \;
  \rho\big(
    \mathcal{O}_{w}
  \big)^{-1}
\end{equation}
and hence of the claimed form \eqref{CovarianceFreeQuantumStates}.

Conversely it just remains to observe that every map of the form \eqref{CovarianceFreeQuantumStates} really is (the expectation value of) a quantum state \eqref{QuantumStates}, which is immediate.
\end{proof}

The combination of Prop. \ref{TheCovarianceFreeQuantumStates}
with Thm. \ref{ChargedOpenStringLoopsClassifiedByCrossingNumber} now gives:
\begin{corollary}[\bf Pure states on framed links]
\label{CovarianceFreeTopologicalQuantumStatesAsFunctions}
When 
regarded as functions on framed links via Prop. \ref{ChargedStringLoopsAsFramedLinksOnEquivalenceClasses}, the pure quantum states $\rho_k$ are given by
\begin{equation}
  \label{WaveFunctions}
  \begin{tikzcd}[
    row sep=0pt,
    column sep=0pt
  ]
    \Omega
    \, 
    \mathrm{Conf}^I_0(\mathbb{R}^2)
    \ar[
      rr,
      ->>
    ]
    &&
    \pi_0
    \big(
    \Omega
    \, 
    \mathrm{Conf}^I_0(\mathbb{R}^2)
    \big)
    \ar[
      rr,
      "{ \rho_k }"
    ]
    &&
    \mathbb{C}
    \\[-2pt]
    L 
      &\longmapsto&
    \mathcal{O}_{{}_{\# L}}
      &\mapsto&
    \exp\big(
      \tfrac{2 \pi \mathrm{i}}{k}
      \,
      \# L
    \big)
    \mathrlap{\,.}
  \end{tikzcd}
\end{equation}
\end{corollary}

This is remarkable because it coincides with the known form of quantum states/observables of abelian Chern-Simons theory:
\begin{remark}[\bf Identification with quantum observables of abelian Chern-Simons theory]
\label{IdentificationWithAbelianChernSimons}
$\,$

\noindent {\bf (i)} For Chern-Simons theory with abelian gauge group it is widely understood by appeal to path-integral arguments  (\cite[p. 363]{Witten89}\cite[p. 169]{FroehlichKing89} following \cite{Polyakov88}) that 
\begin{itemize}[
  leftmargin=.7cm,
  topsep=1pt,
  itemsep=2pt
]
\item the quantum states of the gauge field are labeled by a {\it level}  \footnote{In abelian Chern-Simons theory with non-compact gauge group, the level may indeed be any non-zero real number (cf. \cite[p. 169]{FroehlichKing89}), just as in \eqref{CovarianceFreeQuantumStates}. The \emph{level quantization} 
in $\mathrm{U}(1)$ (spin) Chern-Simons theory,
constraining $k$ to be a (half) integer,  arises from our cohomotopical analysis when considering quantum states not just on the plane $\Sigma^2 \,\defneq\,\mathbb{R}^2$ as considered here (cf. \S\ref{Conclusion}) but also on the torus $\Sigma^2 \,\defneq\, \mathbb{R}^2/\mathbb{Z}^2$: This is discussed in \cite[(8) \& \S3.4]{SS25-FQHViaAlgTop}.
} 
$k \in \mathbb{R} \setminus \{0\}$,

\item the quantum observables are labeled by framed links $L$, 

often considered as equipped with labels (charges) $q_i$ on their $i$th connected component $L_i$
\end{itemize}
and the expectation value of these observables in these states is the charge-weighted exponentiated framing- and linking numbers (Def. \ref{LinkingNumber}) as follows
(\cite[p. 363]{Witten89}, cf. review e.g. in \cite[(5.1)]{MPW19}):
\begin{equation}
  \label{CSWilsonLoopObservable}
  \def\arraystretch{2}
  \begin{array}{ccl}
  \mathrm{W}_{\!k}(L)
  &=&
  \exp\bigg(\!
    \frac{2\pi \mathrm{i}}{k}
    \Big(
    \sum_{i}
    q_i^2
    \,
    \mathrm{frm}(L_i)
    +
    \sum_{i,j}
    q_i q_j
    \,
    \mathrm{lnk}(L_i, L_j)
    \Big)
  \!\bigg)
  \,.
  \end{array}
\end{equation}

\noindent {\bf (ii)}  However, with the charges $q_i$ being integers, we may equivalently replace a $q_i$-charged component $L_i$ with $q_i$ unit-charged parallel copies of $L_i$, and hence assume without loss of generality that $\forall_i \;\;q_i = 1$. With this, we may observe that the Chern-Simons expectation values \eqref{CSWilsonLoopObservable} coincide exactly with our pure topological quantum states \eqref{WaveFunctions}:
$$
  W_k(L)
   \;=\;
  \exp\bigg(\!
    \tfrac{2\pi \mathrm{i}}{k}
    \Big(
    \sum_{i}
    \mathrm{frm}(L_i)
    +
    \sum_{i,j}
    \mathrm{lnk}(L_i, L_j)
    \Big)
  \! \bigg)
  \;
  \underset{
    \mathclap{
      \adjustbox{
        raise=-2pt,
        scale=.7
      }{
        \eqref{TotalCrossingNumber}
      }
    }
  }{
    =
  }
  \;
  \exp\Big(
    \tfrac{2\pi \mathrm{i}}{k}
    \#(L)
  \Big).
$$
\end{remark}
\begin{remark}[\bf Regularization of Wilson loop observables]
\label{RegularizationOfWilsonLoops}
In the Chern-Simons field theory literature, going back to \cite{Polyakov88}\cite{Witten89}, the all-important 
framing of links is imposed in an  {\it ad hoc} manner in order to work around an ill-defined expression appearing from the path-integral 
arguments (going back to \cite[p. 326]{Polyakov88}). In contrast, our approach uses only well-defined constructions, and the framing emerges naturally from the differential topology of cohomotopy via the Pontrjagin theorem.
\end{remark}

\begin{remark}[\bf Comparison to the literature on Hopfions]
  The construction and arguments in \S\ref{ConfigurationSpace} are closely related to the traditional analysis of ``Hopf-charged solitons'' (or {\it Hopfions}) on $\mathbb{R}^3$ in Lagrangian field theories like the Skyrme-Fadeev model (\cite{FadeevNiemi97}, review in \cite{Fadeev02}\cite[\S 9.11]{MantonSutcliffe04}) or the 3D $\mathbb{C}P^1$ sigma-model (cf. \cite{RaduTchrakianYang13}), as can be seen from Prop. \ref{TheFundamenralGroupOfStringConfigurationSpace} and its proof. 
  Indeed, the authors of \cite{WilzcekZee83} already proposed (by at least implicit appeal to the Pontrjagin theorem) to identify the worldlines of such Hopfions on $\mathbb{R}^{1,2}$ with that of anyons in much the way that we find is the case here.
  What seems not to have been noticed before in previous literature is the finer identification of Thm. \ref{ChargedOpenStringLoopsClassifiedByCrossingNumber} using the finer theorems by Segal-Okuyama \cite{Okuyama05} 
  and the resulting identification with Chern-Simons Wilson loop observables (Rem. \ref{IdentificationWithAbelianChernSimons}), which, we suggest, fully nails down this identification.
\end{remark}

\section{Conclusion and Outlook}
\label{Conclusion}

\noindent
{\bf Summary of Results.}
We have naturally identified the fundamental group of the mapping space $\mathrm{Map}^\ast(S^2, S^2)$ with cobordism classes of framed oriented links (Prop. \ref{ChargedStringLoopsAsFramedLinksOnEquivalenceClasses}), by combining a classical theorem of Segal, relating iterated loop spaces to group completed configuration spaces of points, with a more recent and maybe previously underappreciated result by Okuyama, modeling the latter by configurations of intervals. We found that, under this identification, the multiple of the Hopf generator  corresponds to the sum $\#L$ of the linking number and framing number of the framed links $L$ (Thm. \ref{ChargedOpenStringLoopsClassifiedByCrossingNumber}), which we observed happens to be the properly regularized ``Wilson loop observable'' of abelian Chern-Simons theory (Rem. \ref{IdentificationWithAbelianChernSimons}, \ref{RegularizationOfWilsonLoops}). Finally we captured this relation to quantum invariants more systematically by showing how the main result implies that these Wilson loop obseravbles are the expectation values of the pure quantum states (in the sense of quantum probability theory) on the group algebra of cobordism classes of framed links (Cor. \ref{CovarianceFreeTopologicalQuantumStatesAsFunctions}), given as such by assigning one fixed complex phase factor to each crossing of strands in the framed link diagram.

\bigskip

Regarding framed links as wordlines of (solitonic) particles moving in a plane,
such an assignment of complex quantum phases to braidings is exactly the characteristic property of (abelian) {\it anyonic} quantum particles. Therefore we close with a brief outlook on how this is not just a coincidence but a profound relation between anyons and cohomotopy (due to \cite{SS25-FQHViaAlgTop}, to which we refer for extensive details and referencing).

\medskip

\noindent
{\bf Exotic quantization law for FQH flux.}
In quantum materials known as {\it fractional quantum Hall systems} (FQH), a strong magnetic field transversally penetrates a gas of electrons confined to an effectively 2-dimensional plane 
\begin{equation}
  \Sigma^2 \simeq \mathbb{R}^2
  \mathrlap{\,.}
\end{equation}
Around critical values of the total magnetic flux through $\Sigma^2$, a subtle  effect of the strong coupling of the electrons makes any further magnetic {\it flux quantum} through $\Sigma^2$ induce an exotic kind of vortex in the electron gas (called a {\it quasi-hole}) which as such becomes anyonic, in that each braiding of worldlines of such surplus flux- quanta/quasi-holes makes the quantum state of the system pick up a fixed {\it braiding phase} $\zeta$, just as seen in the above discussion.

While experimental observation of this phenomenon has been consistently reported in recent years by various groups and in various materials,
the theoretical understanding of the effect remains shallow, this being a general problem with strongly coupled quantum systems.

Concretely, in the absence of a strongly coupled electrons, the standard theory of electromagnetism predicts 
\cite{SS23-Obs}
that the topological quantum observables of the flux quanta through $\Sigma^2$ form the group algebra
\begin{equation}
  \label{GroupAlgebraForElectromagnetism}
  \mathbb{C}\Big[
    \pi_1
    \,
    \mathrm{Map}^\ast\big(
      \Sigma^2_{\cpt}
      ,\,
      \mathcolor{purple}{\mathbb{C}P^\infty}
    \big)
  \Big]
  \;\simeq\;
  \mathbb{C}\Big[
    \pi_0
    \,
    \mathrm{Map}^\ast\big(
      S^2
      ,\,
      S^1
    \big)
  \Big]  
  \;\simeq\;
  \mathbb{C}
  \,,
\end{equation}
 where the infinite projective space $\mathbb{C}P^\infty$ is a model for the {\it classifying space} $B \mathrm{U}(1) \,\sim\, K(\mathbb{Z},2)$ of charges of an abelian Maxwell gauge field, reflecting that ordinary magnetic flux is quantized in ordinary 2-cohomology.

This algebra \eqref{GroupAlgebraForElectromagnetism} being trivial means that ordinary magnetic flux through a plane shows no anyonic effect --- as expected. But the way this is deduced hereby from nothing but the algebraic topology of the the classifying space of the electromagnetic field suggests that the effect of the strong electron coupling, which is experimentally seen to substantially deform the situation, might be mathematically modeled by a deformation of the classifying space, reflecting an effective {\it exotic flux quantization} law \cite{SS24-Flux}
for the surplus magnetic field, due to the strong electron coupling.

The results which we found in this article show that exactly this is the case: If we postulate that the surplus FQH flux is effectively quantized in 2-cohomotopy instead of in ordinary 2-cohomology, and hence if we replace the ordinary classifying space by just its 2-skeleton, the 2-sphere,
\begin{equation}
  S^2
  \;\simeq\;
  \mathbb{C}P^1
  \xhookrightarrow{\;}
  \mathbb{C}P^\infty
  \,,
\end{equation}
then the algebra of topological quantum observables becomes 
$$
  \mathbb{C}
  \Big[
  \pi_1
  \, 
  \mathrm{Map}^\ast\big(
    \Sigma^2_{\cpt}
    ,\,
    \mathcolor{purple}{\mathbb{C}P^1}
  \big)
  \Big]
  \;\simeq\;
  \mathbb{C}
  \Big[
  \pi_1
  \, 
  \mathrm{Map}^\ast\big(
    S^2
    ,\,
    S^2
  \big)
  \Big]
  \;\simeq\;
  \mathbb{C}\big[\mathbb{Z}\big]
  \,,
$$
generated by the Hopf fibration,
and the results of \S\ref{QuantumObservables} show that  its pure quantum states are naturally identified with the abelian Chern-Simons Wilson loop observables assigning a fixed {\it braiding phase} to each crossing of soliton worldlines -- just as seen in experiment.

Specifically, the loops in the 0-component $\mathrm{Conf}^I_0(\mathbb{R}^2)$ that we have been dealing with correspond this way to ``vacuum-to-vacuum'' anyon processes, which are precisely those envisioned in application of anyon braiding to topological quantum computing, see Fig. \ref{AnyonLinksInTQC}.

This novel re-derivation of anyonic braiding phenomena from the algebraic topology of exotic flux quantization in 2-Cohomotopy immediately generalizes to more general surface geometries $\Sigma^2$ (disks, tori, annuli, ...) where it reproduces subtle effects expected in the literature, and makes new predictions. This is further discussed in \cite{SS25-FQHViaAlgTop}.

\vspace{.2cm}

\hypertarget{FigureC}{}
\begin{minipage}{9.6cm}
 \footnotesize {\bf Figure \figurenumber. \label{AnyonLinksInTQC}}
 In the traditional picture of anyon braiding processes implementing topological quantum computations (e.g.
 \cite[Fig. 17]{Kauffman02} \cite[Fig. 2]{FKLW03}\cite[p. 10]{NSSFD08}\cite[Fig. 2]{Rowell16}\cite[Fig. 2]{DMNW17} \cite[Fig. 3]{RowellWang18} \cite[Fig. 1]{Rowell22}), the computation is:
 
 \begin{itemize}[
   leftmargin=.8cm,
   topsep=1pt,
   itemsep=2pt
 ]
 \item[(i)] initialized by creating anyon/anti-anyon pairs out of the vacuum $\varnothing$, 
 
 \item[(ii)] executed by adiabatically braiding their worldlines, 
 
 \item[(iii)] read-out by annihilating the anyons again into the vacuum $\varnothing$.
 \end{itemize}

 This means that the computation is encoded by a {\it link diagram} and that its result is the corresponding Wilson loop observable, just as here we naturally find realized here (for the case of abelian anyons).
\end{minipage}
\;\;
\adjustbox{
  raise=-2.5cm,
  scale=.7
}{
\begin{tikzpicture}

\begin{scope}[xscale=-1]
\draw[
  line width=.27cm,
  gray,
]
  (0,0) 
    .. controls (0,1) and (1.7,1) .. 
  (1.7,2)
    .. controls (1.7,3) and (0,3) ..
  (0,4);
\draw[
  line width=.22cm,
  white,
]
  (0,0) 
    .. controls (0,1) and (1.7,1) .. 
  (1.7,2)
    .. controls (1.7,3) and (0,3) ..
  (0,4);
\end{scope}
\draw[line width=.25cm]
  (0,0) 
    .. controls (0,1) and (1.7,1) .. 
  (1.7,2)
    .. controls (1.7,3) and (0,3) ..
  (0,4);
\begin{scope}[xscale=-1]
\draw[
  line width=.22cm,
  white,
  draw opacity=.5
]
  (0,0) 
    .. controls (0,1) and (1.7,1) .. 
  (1.7,2)
    .. controls (1.7,3) and (0,3) ..
  (0,4);
\end{scope}


\begin{scope}[
  shift={(2,0)},
  xscale=-1
]
\draw[line width=.4cm, white]
  (0,0) 
    .. controls (0,1) and (1.7,1) .. 
  (1.7,2);
\begin{scope}[xscale=-1]
\draw[
  line width=.27cm,
  gray,
]
  (0,0) 
    .. controls (0,1) and (1.7,1) .. 
  (1.7,2)
    .. controls (1.7,3) and (0,3) ..
  (0,4);
\draw[
  line width=.22cm,
  white,
]
  (0,0) 
    .. controls (0,1) and (1.7,1) .. 
  (1.7,2)
    .. controls (1.7,3) and (0,3) ..
  (0,4);
\end{scope}
\draw[line width=.25cm]
  (0,0) 
    .. controls (0,1) and (1.7,1) .. 
  (1.7,2)
    .. controls (1.7,3) and (0,3) ..
  (0,4);
\begin{scope}[xscale=-1]
\draw[
  line width=.22cm,
  white,
  draw opacity=.5
]
  (0,0) 
    .. controls (0,1) and (1.7,1) .. 
  (1.7,2)
    .. controls (1.7,3) and (0,3) ..
  (0,4);
\end{scope}
\end{scope}


\begin{scope}
\clip
  (0,2.5)
  rectangle (2,3.3);

\draw[line width=.4cm, white]
  (0,0) 
    .. controls (0,1) and (1.7,1) .. 
  (1.7,2)
    .. controls (1.7,3) and (0,3) ..
  (0,4);
\draw[line width=.25cm]
  (0,0) 
    .. controls (0,1) and (1.7,1) .. 
  (1.7,2)
    .. controls (1.7,3) and (0,3) ..
  (0,4);
\end{scope}

\node at (0,-.25) {$\varnothing$};
\node at (2,-.25) {$\varnothing$};
\node at (0,4.2) {$\varnothing$};
\node at (2,4.2) {$\varnothing$};

\end{tikzpicture}
}

\medskip

\noindent
{\bf Relation to high energy physics.}
Finally, the inclined reader may wonder whether there is deeper rationale for deforming, in certain situations, the traditional classifying space $\mathbb{C}P^\infty \simeq B \mathrm{U}(1)$ for electromagnetic charge by one of its skeleta $\mathbb{C}P^n \subset \mathbb{C}P^\infty$ --- apart from the above curiosity that for $n = 1$ this captures the nature of FQH anyons, by our main theorem here. And indeed (exposition in \cite{SS25-Srni}): In discussion of ``UV-completion'' of particle physics one may consider ``geometrically engineering'' gauge fields on {\it branes}, specifically on M5-branes \cite{GSS24-FluxOnM5} probing orbi-singularities \cite{SS26-Orb}. A careful analysis of the admissible quantization laws \cite{SS24-Flux} for the {\it higher} flux on these 5-branes suggests that its classifying space should indeed be $\mathbb{C}P^3$, which on an A-type orbi-singularity reduces to $\mathbb{C}P^1 \,\simeq\, S^2$ \cite{SS25-TEC}\cite{SS25-Seifert}. This is how we originally discovered the phenonema presented here. 

\vspace{1cm}




\medskip

\end{document}